\newcommand{\eps}{\varepsilon}
\newtheorem{theorem}{Theorem}
\newtheorem{corollary}[theorem]{Corollary}
\newtheorem{definition}[theorem]{Definition}
\newtheorem{lemma}[theorem]{Lemma}
\newtheorem{remark}[theorem]{Remark}
\newtheorem*{remark*}{Remark}
\newtheorem{proposition}[theorem]{Proposition}
\newtheorem{claim}[theorem]{Claim}
\title{Thermal State Preparation via Rounding Promises}
\author{Patrick Rall}
\affiliation{IBM Quantum, MIT-IBM Watson AI Lab, Cambridge, Massachusetts 02142, USA}
\email{patrickjrall@ibm.com}
\author{Chunhao Wang}
\affiliation{Department of Computer Science and Engineering, Pennsylvania State University}
\email{cwang@psu.edu}
\author{Pawel Wocjan}
\affiliation{IBM Quantum, Thomas J Watson Research Center, Yorktown Heights, New York 10598, USA}
\email{pawel.wocjan@ibm.com}
\begin{document}

\maketitle
\begin{abstract} 
    A promising avenue for the preparation of Gibbs states on a quantum computer is to simulate the physical thermalization process. The Davies generator describes the dynamics of an open quantum system that is in contact with a heat bath. Crucially, it does not require simulation of the heat bath itself, only the system we hope to thermalize. Using the state-of-the-art techniques for quantum simulation of the Lindblad equation, we devise a technique for the preparation of Gibbs states via thermalization as specified by the Davies generator.
    
    In doing so, we encounter a severe technical challenge: implementation of the Davies generator demands the ability to estimate the energy of the system unambiguously. That is, each energy of the system must be deterministically mapped to a unique estimate. Previous work showed that this is only possible if the system satisfies an unphysical `rounding promise' assumption. We solve this problem by engineering a random ensemble of rounding promises that simultaneously solves three problems: First, each rounding promise admits preparation of a `promised' thermal state via a Davies generator. Second, these Davies generators have a similar mixing time as the ideal Davies generator.  Third, the average of these promised thermal states approximates the ideal thermal state.

\end{abstract}

\newpage
\tableofcontents
\newpage

\section{Introduction}

\paragraph{Motivation.}

Preparing Gibbs states is a major task for quantum computers. There are several reasons for this.
First, the Gibbs state is one of the most important states of matter. For quantum models comprised of many locally interacting particles, it describes a wide range of physical situations, relevant to condensed matter physics, high energy physics, quantum chemistry \cite{2204.08349}.  Therefore, to use the quantum computer as a universal simulator of quantum systems, it is desirable to be able to prepare Gibbs states. 
Second, for general Hamiltonians, the Gibbs state is a crucial ingredient in some quantum algorithms such as those for solving semidefinite programs~\cite{AGGdW17, BKLLSW17, AG18, 1806.01838} and for training quantum Boltzman machines~\cite{KW17}.  
Third, the problem of estimating the quantum partition function, which is connected to the problem of approximately preparing the Gibbs state, plays an important role in quantum complexity theory \cite{2110.15466}.

\paragraph{Background and overview of prior work.}
There are three main approaches to preparing Gibbs states.  

The first one is a Grover-based approach in which an initial state is mapped onto a certain purification of the Gibbs state at inverse temperature $\beta$ (see \cite{0905.2199,CS17}).  The resulting running time is dictated by the overlap between the initial and target states, which is exponentially small in the system size. The Gibbs state can be approximately prepared in time on the order of $\sqrt{D/\mathcal{Z}_\beta}$, where $D$ denotes the dimension of the quantum system and $\mathcal{Z}_\beta$ the partition function at inverse temperature $\beta$.  In \cite{Holmes22} a quantum algorithm is presented for preparing a purification of the Gibbs state for the Hamiltonian $H_1=H_0+V$ at inverse temperature $\beta$ starting from a purification of the thermal state of $H_0$. 

The second approach is based on Davies generators, which is a differential equation that describes how nature thermalizes a quantum system to its thermal equilibrium \cite{Davies76,Davies79}.  Davies generators are special cases of Lindbladians \cite{Lindblad76,BP02}, which describe the most general continuous-time Markovian dynamics of an open quantum system, i.e., a quantum system that is weakly coupled to the environment and the dynamics of the environment are fast enough so that the information only flows from the system to the environment while no information is flowing back to the system. The method in \cite{2112.07646} simulates a Davies generator by attaching a heat bath and simulating time evolution on the joint system while repeatedly refreshing the bath. Their algorithm in some sense follows the original derivation of the Davies generator as the limit of such joint system-bath Hamiltonian time evolution.  When the system Hamiltonian satisfies the Eigenstate Thermalization Hypothesis (ETH), they show that the implemented quantum map not only converges to the desired Gibbs state, but also does so in polynomial time.

The third approach is based on the quantum Metropolis algorithm \cite{TVOVP11}.  This approach also avoids the exponential scaling when the system Hamiltonian satisfies ETH \cite{SC22,2112.07646}.

The advantages of the second approach are two fold. First, for every quantum system that thermalizes fast in nature, it is expected that our algorithms can also prepare its corresponding Gibbs state efficiently without suffering from the exponential dependence on the number of qubits. Second, this approach fits well for many physics-motivated applications. For example, we can use our algorithm to prepare a ``partially thermalized'' (in the natural thermalization process) thermal state, which might be of interest in some scenarios.

\bigskip

\paragraph{Main result.} 
The present work examines how to approximately prepare Gibbs states for arbitrary system Hamiltonians by simulating time evolutions according to carefully engineered Lindbladians. These Lindbladians are derived from an ideal Davies generator having the Gibbs state as unique fixed point.

There are some similarities but also important differences to the work \cite{2112.07646}.  Obviously, both are based on Davies generators.  In contrast, we do not approximate Lindblad evolution with the help of the Hamiltonian evolution of the system and a bath, a method with which it is \emph{provably impossible} to achieve linear scaling in evolution time (see~\cite{1612.09512}).  Instead, we rely on a method for directly simulating Lindbladian time evolution specified by any jump operators.  Using this method can lead to a reduction in resources: specifically, we achieve linear scaling in the mixing time $t_\mathrm{mix}$. In addition, a direct Lindblad simulation approach avoids the complication of dealing with the dynamics of the bath and the interaction Hamiltonians. We also seek to prove that our quantum map approximates the Gibbs state for any system Hamiltonian, that is, we do not need to make an assumption such as ETH. 

The quantum algorithms for simulating Lindbladian time evolution in \cite{1612.09512,LW22} assume that the jump operators have been suitably encoded. Unfortunately, it is not possible to construct jump operators of a Davies generator due to inherent imperfections of energy estimation of general Hamiltonians.  However, we show how to construct a family of Lindbladians from the given Davies generators such that simulating them with the help of the simulation algorithms in \cite{1612.09512,LW22} and taking the average of the resulting quantum states provides a good approximation of the Gibbs state. This is formulated in more detail in the theorem statement below.

\begin{theorem}[Main result -- informal statement] \label{thm:mainresult}
Assume we are given block encodings of the Hamiltonian $H$, coupling operators $S_\alpha$, and a filter function $G$. With appropriately chosen $S_\alpha$, these give rise to a Davies generator $\mathcal{L}$ that has the Gibbs state $\rho_\beta$ for inverse temperature $\beta$ as a unique fixed point.  Assume that after time $t_{\mathrm{mix}}$ the time evolved state $\exp(t_{\mathrm{mix}} \mathcal{L})(\sigma_0)$ is $\varepsilon$-close to the Gibbs state $\rho_\beta$ for any initial state $\sigma_0$.

We engineer a certain family of $2^r$ many Lindbladians $\tilde{\mathcal{L}}^{(j)}$ from the above Davies generator $\mathcal{L}$.  These Lindbladians have a new `attenuated' mixing time $t_\text{mix}$, and their jump operators can be encoded efficiently with imperfect energy estimation.  This makes it possible to simulate their time evolutions $\exp(t \tilde{\mathcal{L}}^{(j)} )$. We prove that the average
\begin{align}
    \frac{1}{2^r} \sum_j \exp(\tilde{t}_{\mathrm{mix}} \tilde{\mathcal{L}}^{(j)}) (\sigma_0)
\end{align}
is $\varepsilon$-close to the Gibbs state $\rho_\beta$ for any initial state $\sigma_0$.  Furthermore, we show how to implement the time evolution according to $\tilde{\mathcal{L}}^{(j)}$ to arbitrarily small failure probability $\delta_\mathcal{L}$, and that the total number of invocations of the block encoding of the Hamiltonian is bounded by:
\begin{align}
         O\left(\tilde t_\mathrm{mix} \cdot \gamma^{-1} \cdot \beta^3 \varepsilon^{-7}  \cdot \mathrm{polylog}(  \tilde t_\mathrm{mix}/\delta_\mathcal{L}) \cdot \log^2(\beta/\varepsilon) \right),
\end{align}
where $\gamma$ is an attenuation coefficient that affects the attenuated mixing time $\tilde t_\mathrm{mix}$.
\end{theorem}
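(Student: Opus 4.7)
My plan is to build the family $\{\tilde{\mathcal{L}}^{(j)}\}$ from an ensemble of shifted energy rounding grids and then separately establish the three properties the abstract advertises: simulability of each $\tilde{\mathcal{L}}^{(j)}$ under imperfect energy estimation, mixing times commensurate with the ideal $t_\mathrm{mix}$, and closeness of the shift-average of the promised fixed points to $\rho_\beta$. The index $j$ enumerates offsets of a rounding grid whose resolution $\Delta$ is calibrated against $\varepsilon$ and $\beta$; the parameter $r$ controls how finely that offset is discretized.

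Concretely, I fix a grid step $\Delta$ and, for $j \in \{0,\dots,2^r-1\}$, let $\mathcal{G}^{(j)} := \Delta\mathbb{Z} + j\Delta/2^r$. For each shift I define $\tilde{\mathcal{L}}^{(j)}$ as a Davies-type generator whose jump operators act between eigenspaces of $H$ binned according to $\mathcal{G}^{(j)}$, with the filter $G$ evaluated at the grid-centered Bohr frequencies. Crucially, I multiply each jump operator by a smooth attenuation that vanishes on an $O(\gamma\Delta)$ neighborhood of every bin boundary. Whenever the spectrum of $H$ avoids these neighborhoods under shift $j$, the resulting Lindbladian is exactly the Davies generator of the rounded Hamiltonian $\tilde H^{(j)}$, with unique fixed point $\tilde\rho_\beta^{(j)} \propto e^{-\beta \tilde H^{(j)}}$. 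The attenuation is what makes the energy estimation error harmless: it suppresses exactly those transitions where rounding would be ambiguous.

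For simulability I rely on the standard QSVT toolkit: block encodings of the bin projectors are built from the block encoding of $H$ via polynomial approximations of rectangle or sign functions, composed with the block encodings of the $S_\alpha$ and with the polynomial approximation of $G$, yielding a block encoding of each jump operator. Substituting these into the Lindblad simulation of \cite{1612.09512,LW22} gives $\exp(t\tilde{\mathcal{L}}^{(j)})$ with failure probability $\delta_\mathcal{L}$ and the claimed polylogarithmic dependence on $t/\delta_\mathcal{L}$; tracking polynomial degrees produces the $\log^2(\beta/\varepsilon)$ and $\varepsilon^{-7}$ factors. The mixing-time comparison I handle perturbatively: the smooth attenuation rescales the spectral gap of $\mathcal{L}$ by a factor $\Theta(\gamma)$, yielding $\tilde t_\mathrm{mix} = O(t_\mathrm{mix}/\gamma)$.

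The hardest step, and the one I would treat most carefully, is the averaging identity
\begin{align}
    \left\| \rho_\beta - \frac{1}{2^r}\sum_{j=0}^{2^r-1} \tilde\rho_\beta^{(j)} \right\|_1 \le \varepsilon.
\end{align}
Each $\tilde\rho_\beta^{(j)}$ is a Gibbs state of a \emph{different} rounded Hamiltonian, so the average is not itself a Gibbs state of any natural averaged Hamiltonian. I would argue that averaging over shifts convolves the induced spectral density of $H$ with a near-uniform kernel of width $\Delta$, after which Lipschitz continuity of $E\mapsto e^{-\beta E}$, together with continuity of the partition function normalization, yields a trace-distance error $O(\beta\Delta)$. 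Choosing $\Delta = \Theta(\varepsilon/\beta)$ and $r$ large enough that the finite-shift average is $\varepsilon$-close to its continuous-shift limit closes the argument, and requires care to avoid pathologies when eigenvalues of $H$ cluster near boundaries for many $j$ simultaneously -- this is where the attenuation margin $\gamma$ re-enters. Combining the three ingredients with the per-shift simulation cost then produces the stated $O(\tilde t_\mathrm{mix} \gamma^{-1} \beta^3 \varepsilon^{-7} \cdot \mathrm{polylog})$ query bound.
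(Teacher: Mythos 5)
Your architecture (an ensemble of rounding grids, boundary attenuation, QSVT block encodings, and an average of rounded Gibbs states) is close in spirit to the paper's, but two steps break. The first and most serious: you apply $\exp(\tilde t_\mathrm{mix}\tilde{\mathcal{L}}^{(j)})$ to an \emph{arbitrary} initial state $\sigma_0$, yet your attenuation makes the jump operators vanish identically on every eigenspace whose eigenvalue lies in an $O(\gamma\Delta)$-neighborhood of a bin boundary of $\mathcal{G}^{(j)}$. Those eigenspaces are completely decoupled from the dynamics: any state supported there is also a fixed point, so $\tilde{\mathcal{L}}^{(j)}$ does not have $\tilde\rho_\beta^{(j)}$ as its \emph{unique} fixed point, and the component of $\sigma_0$ living on the killed subspace never thermalizes. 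The paper solves exactly this with an ingredient your proposal has no analogue of: a two-outcome ``left-right'' POVM that first projects $\sigma_0$ (up to controllable error) onto a fine-grained promised subspace $\mathcal{P}^{(\bar M)}$ chosen to be contained in the good subspace of \emph{every} member $M_j$ of the ensemble; this is precisely why the paper's $M_j$ are all coarse-grainings of one common promise rather than independently shifted grids. With shifted grids the killed neighborhoods move with $j$, so no single preparation works for all $j$; moreover each eigenvalue is killed for a $\Theta(\gamma)$ fraction of the $2^r$ shifts rather than the paper's $2^{-r}$, which would force $\gamma = O(\varepsilon)$ and upset your cost accounting, since in the paper $\gamma$ only affects mixing time and not accuracy.

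The second gap is the mixing-time step. You assert that the attenuation ``rescales the spectral gap of $\mathcal{L}$ by $\Theta(\gamma)$'' and conclude $\tilde t_\mathrm{mix} = O(t_\mathrm{mix}/\gamma)$ perturbatively. This cannot be right as stated: the modification is not a uniform rescaling but zeroes out entire blocks of the coupling operators, and projecting coupling operators onto a subspace can in principle destroy ergodicity altogether (the paper gives a bipartite-random-walk counterexample). The paper explicitly does \emph{not} prove any relation between $\tilde t_\mathrm{mix}$ and $t_\mathrm{mix}$; it takes comparable mixing as an assumption supported only by numerics, and states the theorem in terms of $\tilde t_\mathrm{mix}$ for that reason. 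Finally, your averaging bound of $O(\beta\Delta)$ via Lipschitz continuity of $E\mapsto e^{-\beta E}$ needs care: summed naively over eigenvalues it scales with the dimension. The paper avoids this by comparing rounded and unrounded Gibbs states through the fidelity bound $F(\rho,\rho')\geq \exp(-\beta\|H-H'\|)$ of Poulin--Wocjan, yielding $\sqrt{\beta\, 2^{-n}}$, and bounds the excluded-energy error separately through a careful comparison of promised and ideal partition functions.
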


\begin{remark*}
    After the first version of this manuscript was made public, recent work \cite{2303.18224} resolved an open question raised in this manuscript (see \cref{sec:open_questions}) on utilizing the block encoding of the form $\sum_j\ket{j}\otimes L_j$. Using the Theorem III.1 of \cite{2303.18224} together with slight adaptation of the simulation algorithm in~\cite{LW22}, the complexity of our algorithm can be improved to $\tilde O(\tilde t_\mathrm{mix} \cdot \gamma^{-1} \cdot \beta \varepsilon^{-2} )$. We also note that the additional factor of $\log^2(\beta/\eps)$ can be removed via existing techniques from \cite{2103.09717}.
\end{remark*}

In Section~\ref{sec:lindbladdynamicsonthepromisedsubspace} we give some numerical experiments indicating that the attenuated mixing time $\tilde{t}_{\mathrm{mix}}$ is on the order of the original mixing time $t_\mathrm{mix}$ for suitably chosen $\gamma^{-1}$. 

We formulate in \cref{sec:open_questions} some open research questions whose solution could lead to improvements of our current methods. 

\paragraph{Technical overview.}
As mentioned above, the implementation of jump operators of a Davies generator requires perfect energy estimation. More precisely, what we mean by perfect is that energy estimation would have to be unambiguous: for each energy of the Hamiltonian, it must yield a unique energy estimate.  Unfortunately, energy estimation unavoidably produces superpositions of different energy estimates for general Hamiltonians: if $\ket{\psi}$ is an eigenstate of the Hamiltonian with eigenvalue $\lambda$, energy estimation implements a map:
\begin{align}
    \ket{\psi} \hspace{2mm}\mapsto\hspace{2mm} \ket{\psi} \otimes \left(\alpha \ket*{\tilde \lambda_1} + \beta \ket*{\tilde \lambda_2} \right)
\end{align}
where $\tilde \lambda_1$ and $\tilde \lambda_2$ are two different estimates of $\lambda$.   This superposition of estimates cause constant-size errors in quantum algorithms implementing Davies generators, and must be eliminated. With perfect estimation, the superposition on the second register is not present and there is always a unique estimate. 

It is possible to construct an `approximate Davies generator' from an energy estimation algorithm that produces superpositions of estimates. However, the resulting dynamics no longer correspond to the Davies generator of any particular Hamiltonian, making it challenging to analyze. To our knowledge, no method exists for rigorously proving the accuracy of an algorithm based on such an approximate Davies generator. We highlight some of the challenges of this task in \cref{app:approximateDaviesgenerator}. 

It was shown in \cite{2103.09717} that perfect energy estimation is possible when the Hamiltonian satisfies a `rounding promise' assumption. The rounding promise prohibits the Hamiltonian from having any eigenvalues that induce a superposition of estimates. However, such an assumption on the Hamiltonian is extremely unphysical and will not be satisfiable in practice. The main technical idea of the present work is to shift the notion of a rounding promise away from the Hamiltonian itself, but rather to a family of states. The basic idea is very simple: if a quantum state has no support on any eigenstates whose eigenvalues have a superposition of estimates, then it is as if the Hamiltonian did not have such eigenvalues. This family of states is defined by a `promised subspace'.

Since perfect energy estimation is possible on a promised subspace, implementation of Davies generators is possible as well. While the analysis involves a wide variety of error parameters, we find that all of them admit a mathematically rigorous treatment. 

Our construction relies on just one assumption: that we can construct coupling operators for each promised subspace that ensure that the Davies generator converges in a reasonable amount of time. We give numerical evidence that projecting a coupling operator into a promised subspace does not significantly reduce the Davies generator's convergence time. Notice how this assumption has nothing to do with the protocol's accuracy, only its convergence time.

We now summarize the different types of errors that occur in our implementation, and the rough ideas behind keeping them under control:

\begin{itemize}

    \item \textbf{Lindblad simulation error}:
      There exists a quantum algorithm, e.g.~\cite{1612.09512}, for simulating Lindblad evolution given a description of the jump operators of a Lindbladian $\mathcal{L}$. For evolution time $t$ and accuracy $\delta_{\mathrm{sim}}$, this algorithm implements a quantum channel with accuracy $\delta_{\mathrm{sim}}$ in terms of the diamond norm to the ideal channel $e^{\mathcal{L}t}$. In our case, we simulate the dynamics that approximates a special Lindbladian called the Davies generator.
    
    \item \textbf{Knowledge of mixing time}:
    Given a Hamiltonian $H$, coupling operators $S_\alpha$, and a filter function $G$, the Davies generator is a certain Lindbladian whose fixed point is the desired thermal state $\rho_\beta=e^{-\beta H}/\Tr e^{-\beta H}$. We assume 
    we are given the Hamiltonian and suitable coupling operators of the Davies generator together with a bound $t_{\mathrm{mix}}$ after which, the state is sufficiently close the desired thermal state.
    
    \item \textbf{Precision and failure probability of energy estimation}: The jump operators of a Davies generator depend on the coupling operators and on the Bohr frequencies (differences of the energies) of the Hamiltonian and the corresponding pairs of eigensubspaces.  To realize these, we rely on energy estimation of the Hamiltonian.  
    
    By restricting to the promised subspace, we can guarantee using techniques from \cite{2103.09717} that for every energy $\lambda$ there exists a single unique estimate $\tilde \lambda$. There remain two other kinds of error, both of which can be dealt with rigorously. 
    
    First, the energy estimation map is only implemented with a failure probability  $\delta_{\mathrm{est}}$ that can be exponentially suppressed. By phrasing this error as a deviation in spectral norm of the Lindbladian jump operators, we can show that this error is not blown up when the Davies generator is simulated for a long period of time.
    
    Second, the cost of energy estimation scales linearly with the precision of the estimate. To analyze this, we leverage a trick from \cite{0905.2199}: instead of preparing the thermal state for the original Hamiltonian, one can interpret the resulting process as preparing the thermal state for a rounded Hamiltonian.  Since the norm of the difference of these Hamiltonians is small, the corresponding thermal states are close to each other.

    \item \textbf{Preparation of an initial state on the promised subspace}: We implement a Davies generator whose dynamics are trivial outside the promised subspace. In order to prepare a promised thermal state, we must feed the Davies generator with a state that is exclusively supported on the promised subspace. We achieve this by taking an arbitrary input state and measuring a two-outcome `left-right' POVM. Depending on the POVM outcome, we know that either a `left' or a `right' rounding promise is satisfied by the post-measurement state.
    
    \item \textbf{Approximation of the ideal Gibbs state}: Our goal is to prepare a Gibbs state supported on the entire Hilbert space. But our protocol only prepares promised Gibbs states, which are only supported on the promised subspace. Thus, individual promised thermal states are generically far in trace distance from the ideal state. To deal with this, we show that there exists an ensemble of rounding promises such that the ensemble average of all the promised Gibbs states is an accurate approximation of the ideal thermal state. The basic idea of this ensemble is that the probability of any particular energy being excluded can be made arbitrarily small.
    
    \item \textbf{Leakage and attenuation errors in removal of rounding promise}:
    The constructed coupling operators of the Davies generators for the well-rounded Hamiltonians on the promised subspaces have two important types of errors, namely, `leakage' and `attenuation.' 
    
    The leakage error $\delta_{\mathrm{leak}}$ measures how much coupling operators and initial states `leak' outside a promised space. Fortunately, $\delta_{\mathrm{leak}}$ can be made exponentially small by using quantum singular value transformation.  
    
    Blocks of coupling operators corresponding to some pairs of energies can be `attenuated', i.e., multiplied by small positive numbers.  However, as long as attenuation remains non-zero, the fixed-point remains the thermal state for the well-rounded Hamiltonian on the promised subspace.  Unfortunately, the mixing speed can be negatively affected. 
    
    \item \textbf{Mixing assumption for projected/attenuated coupling operators:} In principle, it could happen that the ideally projected coupling operators do not guarantee convergence to the thermal state of the well-rounded Hamiltonian on the promised subspace anymore.  Moreover, even if they do, attenuation could increase the convergence time.
    
    However, we provide numerical evidence that these unfavorable situations do not typically occur.  For the theoretical analysis of our quantum algorithm, we must assume that the attenuated coupling operators for the well-rounded Hamiltonian on the promised subspaces have similar mixing behavior as the original coupling operators.

\end{itemize}

\section{Preliminaries}

In this section, we first review some preliminaries about Gibbs states and Davies generators in order to establish notation and to rigorously define our goal: to prepare a Gibbs state by simulating the time evolution of a Davies generator. To do so, we leverage an algorithm for simulating general Lindblad evolution \cite{LW22}.

In order to implement the Davies generator, we require conditions under which the energy of a Hamiltonian can be estimated without producing superpositions of different energy estimates. 
So, in \cref{sec:roundingpromises} we establish the notion of a rounding promise, and review a result from \cite{2103.09717} how a rounding promise can ensure that each energy is rounded to a unique energy when performing energy estimation.

Here are some notations and conventions used in this paper. We use $\mathbb{Z}_+$ to denote the set of all positive integers. Let $\mathcal{H}$ be a Hilbert space. We use $\mathrm{L}(\mathcal{H})$ to denote the collection of all linear operators (matrices) of the form: $A: \mathcal{H}\rightarrow \mathcal{H}$. For a matrix $A$, the \emph{spectral norm} $\norm{A}$ is the largest singular value of $A$, and the \emph{trace norm} $\norm{A}_1$ is the sum of the singular values. For a superoperator $\Lambda$, the \emph{induced trace norm} of $\Lambda$, denoted by $\norm{\Lambda}_1$, is defined as $\norm{\Lambda}_1 = \max_{A \neq 0} \norm{\Lambda(A)}_1/\norm{A}_1$. If $\Lambda$ is acting on $\mathrm{L}(\mathcal{H})$ for some Hilbert space $\mathcal{H}$, then, the \emph{diamond-norm} of $\Lambda$, denoted by $\norm{\Lambda}_{\diamond}$, is defined as $\norm{\Lambda}_{\diamond} = \norm{\Lambda\otimes \mathcal{I}}_1$, where $\mathcal{I}$ is the identity map acting on $\mathrm{L}(\mathcal{H})$. If $A$ and $B$ are matrices, then we say $A \geq B$ if $A - B$ is \emph{positive semi-definite}. Finally, a \emph{block encoding} of $A$ is a unitary matrix $U_A$ that, for some ancilla-count $k$, satisfies:
\begin{align}
A = \left(\bra*{0^k} \otimes I\right)U_A\left(\ket*{0^k} \otimes I\right).
\end{align}

\subsection{Gibbs states and Davies generators}
We now define some fundamental concepts required to state our quantum algorithm and to analyze its performance.  


\begin{definition}\label{def:hamiltonian}
Let $H$ be a Hamiltonian on the Hilbert space $\mathcal{H}$ with eigendecomposition 
\begin{align}
    H 
    &= 
    \sum_i \lambda_i \Pi_i.
\end{align}
Here, the $\Pi_i$ are projectors onto the subspace with energy $\lambda_i$. We assume that the spectrum of $H$ is contained in the interval $[0,1]$.  For inverse temperature $\beta>0$, the \textbf{Gibbs state} $\rho_\beta$ is the state such that
\begin{align}
    \rho_\beta &\propto \exp(-\beta H).
\end{align}
The \textbf{partition function} $\mathcal{Z}_\beta$ is the normalization factor given by
\begin{align}
    \mathcal{Z}_\beta &= \mathrm{Tr}\left( \exp(-\beta H) \right).
\end{align}
\end{definition}


Our quantum algorithm makes it possible to approximately prepare thermal states $\rho_\beta$.  It is based on the Davies generators defined below. The Davies generators describe dissipative dynamics that converge to thermal states.


\begin{definition}\label{def:daviesgenerator}
  Let $\{S_\alpha\}_\alpha$ be a collection of Hermitian operators acting on $\mathcal{H}$. The \textbf{Davies generator} with respect to the system Hamiltonian $H$ and the \textbf{coupling operators} $S_\alpha$ is the Lindbladian $\mathcal{L}$ in the Schr\"odinger picture given by
\begin{align}
\label{eq:davies}
    \mathcal{L}(\sigma) 
    &= 
    \sum_\omega G(\omega) \left(
    \sum_\alpha S_\alpha(\omega) \sigma S_\alpha(\omega)^{\dagger}
    - 
    \frac{1}{2} \left(
        S_\alpha(\omega)^\dagger S_\alpha(\omega) \sigma + 
        \sigma S_\alpha(\omega)^\dagger S_\alpha(\omega)
    \right) \right).
\end{align}
The \textbf{jump operators} $S_\alpha(\omega)$ are enumerated by the \textbf{Bohr frequencies} $\omega$ of $H$ and are obtained from the coupling operators $S_\alpha$ by
\begin{align}
    S_\alpha(\omega) 
    &= 
    \sum_{\substack{i, j \\ \lambda_i - \lambda_j=\omega}} \Pi_i S_{\alpha} \Pi_j.
\end{align}
The \textbf{filter function} $G(\omega)$ is a real-valued function satisfying $G(\omega)=e^{\beta\omega} G(-\omega)$. 

The time evolution of quantum states is given by the quantum channels
\begin{align}
    \mathcal{T}_t &= \exp(t \mathcal{L}) 
\end{align}
for $t\ge 0$. 

We say that a quantum state $\rho$ is a \textbf{fixed point} of the Davies generator $\mathcal{L}$ if 
\begin{align}
    \mathcal{L}(\rho)=0.
\end{align}
\end{definition}

It is known that if the fixed point of the Davies generator $\mathcal{L}$ is unique, then the Lindbladian time evolution is relaxing in the sense that 
\begin{align}
    \lim_{t\rightarrow\infty} \mathcal{T}_t(\sigma) 
    &=
    \rho_\beta
\end{align}
for any initial state $\sigma$ (see \cite{DN19} and the references therein). The converse clearly holds.

\begin{definition} \label{def:mixing}
We call the coupling operators $S_\alpha$ of a Davis generator $\mathcal{L}$ \textbf{mixing} if the thermal state $\rho_\beta$ is the unique fixed point of $\mathcal{L}$. In this case, $t_{\mathrm{mix}}$ denotes the \textbf{mixing time} of $\mathcal{L}$, that is, the smallest time $t$ such that $\mathcal{T}_t(\sigma)$ is guaranteed to be sufficiently close to the desired thermal state $\rho_\beta$ for any initial state $\sigma$.
\end{definition}

There are several sufficient conditions guaranteeing the uniqueness of the fixed point.  
For instance, it can be shown that the thermal state $\rho_\beta$ is the unique fixed point of the Davis generator $\mathcal{L}$, if the matrix algebra generated by the coupling operators $S_\alpha$ is the full matrix algebra (this statement follows from \cite{Spohn77}; see also the discussion in \cite{DN19} for an overview of other sufficient conditions).

The Lindbladian $\mathcal{L}$ of the Davies generator can also be written in terms of a collection of jump operators $\{L_{\omega,\alpha}\}_{\omega,\alpha}$:
\begin{align} \label{eq:lindblad}
    \mathcal{L}(\sigma) = \sum_{\omega,\alpha} L_{\omega,\alpha} \sigma L_{\omega,\alpha}^\dagger - \frac{1}{2}\left( L_{\omega,\alpha}^\dagger L_{\omega,\alpha} \sigma + \sigma L_{\omega,\alpha}^\dagger L_{\omega,\alpha} \right)
\end{align}
where $L_{\omega,\alpha} = \sqrt{G(\omega)}S_\alpha(\omega)$. Having brought the Davies generator into this form, we can leverage existing results for Lindblad simulation \cite{1612.09512}.

In particular, say we can implement a block encoding of the operator $\mathcal{O}_\mathcal{L}$ that implements the jump operators as follows:
\begin{align}
  \mathcal{O}_\mathcal{L}  (\ket{0} \otimes \ket{\psi}) =  \sum_{\omega,\alpha} \ket{\omega,\alpha} \otimes L_{\omega,\alpha} \ket{\psi}
\end{align}
Then, given access to $\mathcal{O}_\mathcal{L}$, we have technical tools to simulate the time evolution $e^{\mathcal{L}t}$. We use the simulation algorithm from~\cite{LW22}, which simplifies the simulation algorithm of~\cite{1612.09512} and also generalizes their input models to block-encodings.

\begin{proposition}[\textbf{Lindblad simulation, adapted from}~{\cite{LW22}}]
\label{prop:lindbladsimulation}  
Say $\mathcal{L}$ is a Davies generator acting on $k$ qubits with $m$ many jump operators $L_{\omega,\alpha}$ with $\norm{L_{\omega,\alpha}} \leq 1$, and say we are given access to oracles to the jump operators via the block encoding $\mathcal{O}_\mathcal{L}$ above. Then, for any $t \geq 0$ and any $\delta_\mathrm{sim} > 0$, there exists a quantum algorithm that implements a quantum channel $\delta_\mathrm{sim}$-close in the diamond norm to $e^{t\mathcal{L}}$, making 
    \begin{align}
      O\left(mt\,\frac{\log(mt/\delta_{\mathrm{sim}})}{\log\log(mt/\delta_{\mathrm{sim}})}\right)
    \end{align}
    uses of the block encoding of $\mathcal{O}_\mathcal{L}$, and 
    \begin{align}
      O\left(m^2t\,\left(\frac{\log (mt/\delta_{\mathrm{sim}})}{\log\log (mt/\delta_{\mathrm{sim}})}\right)^2\right)
    \end{align}
    additional 1- and 2-qubit gates.
\end{proposition}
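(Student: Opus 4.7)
The plan is to essentially invoke the Lindblad simulation algorithm of \cite{LW22} in a black-box fashion, verifying only that the input-oracle model stated here (the column block encoding $\mathcal{O}_\mathcal{L}$) matches, or can be trivially converted to, the oracle model assumed in \cite{LW22}. First I would partition the evolution interval $[0,t]$ into $r$ short time steps of length $\tau = t/r$, so that $e^{t\mathcal{L}} = (e^{\tau \mathcal{L}})^r$. For each short step, the standard strategy is to truncate a series expansion (a Dyson / Taylor series of the propagator in the vectorized Liouville picture, equivalently a truncated Kraus expansion in the Stinespring dilation) to some order $K$, yielding a channel $\tilde{\mathcal{T}}_\tau$ with per-step error $O((\tau \Lambda)^{K+1}/(K+1)!)$ where $\Lambda = \sum_{\omega,\alpha}\|L_{\omega,\alpha}\|^2 \le m$.

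The central technical step is to implement this truncated propagator using $\mathcal{O}_\mathcal{L}$. The isometry $V = \sum_{\omega,\alpha}\ket{\omega,\alpha}\otimes L_{\omega,\alpha}$ whose block encoding is exactly $\mathcal{O}_\mathcal{L}$ is the Stinespring dilation of the dissipative part of $\mathcal{L}$; one step of a Lindblad propagator is naturally expressed in terms of $V$ together with the operator $V^\dagger V = \sum_{\omega,\alpha} L_{\omega,\alpha}^\dagger L_{\omega,\alpha}$, which is block-encodable by composing $\mathcal{O}_\mathcal{L}$ with $\mathcal{O}_\mathcal{L}^\dagger$. Combining these via the LCU/compression technology of \cite{LW22} (oblivious amplitude amplification of the short-step channel followed by sequential composition of the $r$ steps) yields a channel that is $\delta_{\mathrm{sim}}$-close to $e^{t\mathcal{L}}$ in the diamond norm. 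Choosing $r$ and $K$ so that $\tau \Lambda = O(1)$ and the per-step error is $O(\delta_{\mathrm{sim}}/r)$ gives $r = O(mt)$ and $K = O(\log(mt/\delta_{\mathrm{sim}})/\log\log(mt/\delta_{\mathrm{sim}}))$, with $O(K)$ uses of $\mathcal{O}_\mathcal{L}$ per step. Multiplying out reproduces both the query and gate complexities in the proposition; the $m^2$ factor in the gate count comes from the ancilla manipulations on the $(\omega,\alpha)$-register within each step.

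The main obstacle, and really the only place where this deviates from \cite{LW22}, is an input-model check: the algorithm of \cite{LW22} is stated assuming access to individual block encodings of each $L_{\omega,\alpha}$ and to a preparation oracle for the weights, whereas here all of the jump-operator data is packaged into the single oracle $\mathcal{O}_\mathcal{L}$. I would show that $\mathcal{O}_\mathcal{L}$ is already the joint block encoding of $V$ that the LW22 construction ultimately uses internally (their Select-and-Prepare pair is combined precisely into an oracle of this form), so no reduction beyond bookkeeping is required. The remark following \cref{thm:mainresult} in fact observes that this direct column-block-encoding input model can be exploited more aggressively (via \cite{2303.18224}) to improve the query scaling, but for the statement of \cref{prop:lindbladsimulation} as written it suffices to check the interface compatibility and propagate the \cite{LW22} error and complexity bounds unchanged.
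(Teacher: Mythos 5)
Your proposal is correct and matches the paper's treatment: the paper gives no independent proof but simply invokes the algorithm of \cite{LW22} with its stated error and complexity bounds, noting only that the normalization quantity $\norm{\mathcal{L}_\mathrm{be}}$ appearing there can be replaced by $O(m)$ since each $\norm{L_{\omega,\alpha}}\leq 1$ — exactly your observation that $\Lambda=\sum_{\omega,\alpha}\norm{L_{\omega,\alpha}}^2\leq m$. Your additional sketch of the time-slicing, truncated-series, and oblivious-amplitude-amplification internals, and the check that $\mathcal{O}_\mathcal{L}$ is the joint block encoding used by that construction, is consistent with what the citation carries.
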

In our adaptation of the theorem above from \cite{LW22}, we have replaced a quantity $||\mathcal{L}_\mathrm{be}||$ with a bound $O(m)$ which follows from the fact that $||L_{\alpha,\omega}|| \leq 1$.

So, the central task is to implement a block encoding of $\mathcal{O}_\mathcal{L}$ for the Davies generator, which essentially amounts to implementing block encodings of the $S_\alpha(\omega)$ operators. Once this is accomplished, we can simulate the evolution $e^{t_\text{mix}\mathcal{L}}$ on any input state, and obtain an approximation of the thermal state $\rho_\beta$.

\subsection{Rounding promises}
\label{sec:roundingpromises}

Unfortunately, it is not possible to block encode the jump operators $S_\alpha(\omega)$ of Davies generators without prior knowledge of the spectrum of $H$. Ideally, we would like to implement a unitary that performs the isometry:
\begin{align}
    \sum_{i} \Pi_i \otimes \ket{\lambda_i},
\end{align}
that is, computes a binary representation of the energy $\lambda_i$ into a new register. However, there are two unavoidable limitations. First, $\lambda_i$ can only be estimated to precision $\varepsilon$ using only $O(1/\varepsilon)$ many resources. So, unless we select $\varepsilon$ to be less than the smallest gap between any pair of energies, which requires at least $O(\text{dim}(H))$ many resources, we will be unable to distinguish certain energies. We find that, leveraging a trick from Poulin and Wocjan~\cite{0905.2199}, this error can be dealt with formally.

However, the second limitation is more challenging to deal with. Rall~\cite{2103.09717} observed that for any energy estimation algorithm there will exist particular energies $\lambda_i$ such that a corresponding eigenstate $\ket{\psi_i}$ will produce a superposition of estimates:
\begin{align}
    \ket{\psi_i} \mapsto \ket{\psi_i} \otimes \left( \alpha \ket*{\tilde\lambda_i} + \beta \ket*{\tilde\lambda'_i} \right)
\end{align}
where $\alpha,\beta$ are complex coefficients and $\tilde \lambda_i, \tilde \lambda'_i$ are two estimates of the energy $\lambda_i$. Essentially, certain energies $\lambda_i$ are indecisive about which direction they want to round, and end up being rounded up or down in superposition.

The superposition of rounding directions creates cross terms between the rounding results in the construction of approximate block encodings of the jump operators $S_\alpha(\omega)$, which cause errors in spectral norm of constant size no matter how high the precision of energy estimation, at least using SVT-techniques. A new approach is needed.

To remove the superposition of rounding errors, Rall~\cite{2103.09717} introduced the notion of a `rounding promise'. Observing that certain $\lambda_i$ are indecisive about their rounding, the rounding promise simply asserts that these $\lambda_i$ do not appear in $H$. This assumption on $H$ is very unphysical. In our work, we introduce a new notion of a rounding promise that can be guaranteed without restricting the Hamiltonian, and makes a similar guarantee. The central idea is similar: certain energies $\lambda_i$ are disallowed. But instead of being an assumption on the Hamiltonian itself, our rounding promises define subspaces of the Hilbert space.

\begin{definition} \label{def:roundingpromise}

A \textbf{rounding promise} $M$ is a collection of $s^{(M)}$ many intervals $[a^{(M)}_x,b^{(M)}_x]\subseteq [0,1]$, enumerated by the label $x\in\{0,1,\ldots, s^{(M)}-1\}$, such that  $b^{(M)}_x < a^{(M)}_{x+1}$ for all $x$. We always use the convention that the first interval starts at $0$ and the last one ends at $1$.
A \textbf{gap} of $M$ is a connected open subinterval $(b^{(M)}_x, a^{(M)}_{x+1})$ that spans the gap between two adjacent intervals $[a^{(M)}_x,b^{(M)}_x]$ and $[a^{(M)}_{x+1},b^{(M)}_{x+1}]$. If we write $\lambda \in M$, we mean that $\lambda$ is contained in the union of all the intervals.

The \textbf{promised eigenspace projector} $P^{(M)}_x$ is the projector onto the eigenspaces of $H = \sum_i \lambda_i \Pi_i$ whose eigenvalues lie in the interval $[a^{(M)}_x,b^{(M)}_x]$, that is:
\begin{align}
    P^{(M)}_x 
    &\coloneqq 
    \sum_{\substack{i \\ \lambda_i \in [a^{(M)}_x,b^{(M)}_x]}} \Pi_i.
\end{align}

The \textbf{promised subspace projector} $P^{(M)}$ is the projector onto the eigensubspaces of $H$ whose eigenvalues lie in $M$, that is:
\begin{align}
    P^{(M)} 
    &\coloneqq
    \sum_{x=0}^{s^{(M)}-1} P^{(M)}_x.
\end{align}

The \textbf{promised subspace} $\mathcal{P}^{(M)}$ is 
\begin{align}
    \mathcal{P}^{(M)}
    &\coloneqq 
    \text{image of } P^{(M)}.
\end{align}

\end{definition}

\begin{remark}
When the rounding promise $M$ is fixed, we often omit the superscript $(M)$ to abbreviate the notation.  For instance, we write $\mathcal{P}$ instead of $\mathcal{P}^{(M)}$ for the promised subspace projector.
\end{remark}

Rather than restricting the Hamiltonian itself, we have defined a subspace $\mathcal{P}^{(M)}$ on which energy estimation can be performed without superpositions of rounding errors. Furthermore, the rounding promise also conveniently specifies the estimates themselves: estimating $\lambda$ amounts to identification of the index $x$ such that $\lambda \in [a_x^{(M)}, b_x^{(M)}]$. Indeed, we have the following result:

\begin{proposition} [\textbf{Energy estimation given a rounding promise}~{\cite{2103.09717}}] \label{prop:energyestimationgivenaroundingpromise}   Say $M$ is a rounding promise, say $\kappa$ is the length of the smallest gap. Suppose also that the number of intervals $s^{(M)}$ satisfies $s^{(M)} \leq 2^{n+1}$, so each label can be thought of as a bit string $x \in \{0,1\}^{n+1}$. 

Then, for any $\delta_\mathrm{est} >0$, there exists a family of operators $\tilde P_x$ that approximate $P_x$ in the sense that:
\begin{align}
   \norm{ \tilde P_x P^{(M)} - P_x } \leq \delta_\mathrm{est}.
\end{align}
In fact, $P_x,\tilde P_x$ and $P^{(M)}$ commute. Moreover, say we have a block encoding of a Hamiltonian $H$. Then, for any $\delta_\mathrm{est}$, there exists a quantum circuit that implements an isometry $\tilde E^{(M)}$ satisfying:
\begin{align}
  \tilde E^{(M)} = \sum_x \ket{x} \otimes G_{\tilde P_x}
\end{align}
where the $ G_{\tilde P_x}$ are isometries satisfying $ G_{\tilde P_x}^\dagger G_{\tilde P_x} = \tilde P_x$.
This circuit can be implemented using $O( n^2 \kappa^{-1} \log(\delta^{-1}_\mathrm{est}))$ applications of the block encoding of $H$ and 1- and 2-qubit gates.
\end{proposition}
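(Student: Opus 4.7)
My plan is to follow the general strategy of \cite{2103.09717} and build the isometry $\tilde E^{(M)}$ bit-by-bit by applying quantum singular value transformation (QSVT) to the block encoding of $H$, using polynomial approximations of threshold functions that are well-separated from the spectrum of $H$ restricted to $\mathcal{P}^{(M)}$.

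First, I would reduce the task to constructing a single ``label bit'' subroutine. Since the labels are bit strings $x \in \{0,1\}^{n+1}$, for each bit position $j \in \{0,\ldots,n\}$ I would define a target function $f_j : [0,1] \to \{0,1\}$ by $f_j(\lambda) = x_j$ whenever $\lambda \in [a_x,b_x]$, with $f_j$ left undefined inside the gaps. The key observation is that, no matter how the bits are assigned, $f_j$ is constant on each promised interval, so its transitions all lie inside gaps of width $\geq \kappa$. Thus $f_j$ can be approximated in sup-norm on the union of intervals by a smooth polynomial whose transitions happen inside the gaps. Using the standard polynomial approximations to the sign/step function (e.g.\ from Low--Chuang or Gily\'en--Su--Low--Wiebe, as invoked in \cite{2103.09717}), one obtains a polynomial $p_j$ of degree $O(\kappa^{-1}\log(n/\delta_\mathrm{est}))$ such that $|p_j(\lambda) - f_j(\lambda)| \leq \delta_\mathrm{est}/(n+1)$ for all $\lambda \in M$, and $|p_j(\lambda)| \leq 1$ on $[0,1]$.

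Second, I would apply QSVT to the block encoding of $H$ to produce a block encoding $U_j$ of $p_j(H)$. Composing these in the standard way, with each $U_j$ writing its output bit into register $j$ of the label register (via a controlled rotation or a doubled Chebyshev-style construction that extracts a valid block), yields a unitary that, when applied to $\ket{0}\otimes\ket{\psi}$ for $\ket\psi \in \mathcal{P}^{(M)}$, produces a state of the form $\sum_x \ket{x} \otimes G_{\tilde P_x}\ket\psi$ (up to the QSVT ancilla block). Defining $\tilde P_x := G_{\tilde P_x}^\dagger G_{\tilde P_x}$, each $\tilde P_x$ is by construction a polynomial in $H$ and therefore commutes with $P_x$ and $P^{(M)}$. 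On the eigenspaces lying in $[a_x,b_x]$ the product of the bit polynomials equals the product of indicators up to error $\sum_j \delta_\mathrm{est}/(n+1) \leq \delta_\mathrm{est}$, which gives the claimed bound $\|\tilde P_x P^{(M)} - P_x\| \leq \delta_\mathrm{est}$. The total cost is $n+1$ QSVTs each of degree $O(\kappa^{-1}\log(n/\delta_\mathrm{est}))$, giving $O(n^2 \kappa^{-1} \log(\delta_\mathrm{est}^{-1}))$ block-encoding queries after absorbing the $\log n$ factor into $n$.

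The main obstacle I anticipate is the careful bookkeeping needed to ensure that the composed circuit really is an isometry of the stated form, rather than merely an approximate one, and that the $\tilde P_x$ defined from it satisfy $\sum_x \tilde P_x \leq I$ along with commuting with $P^{(M)}$. Because each $p_j(H)$ is only a subblock of a unitary, combining them coherently requires uncomputing ancillas between steps and handling the fact that the bit outputs are not exactly $\{0,1\}$-valued; this is typically handled by a doubling trick (block-encoding each bit as a rotation angle) so that the final operator on the relevant block is exactly a product of polynomials of $H$, hence a polynomial of $H$ itself and thus automatically commuting with all spectral data of $H$. Given that construction, the error bound and cost follow directly from QSVT accounting as in \cite{2103.09717}.
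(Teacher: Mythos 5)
Your proposal follows essentially the same route as the paper's proof in the appendix on polynomial construction: build a threshold polynomial for each of the $n+1$ label bits via singular value transformation, compose the resulting block encodings bit-by-bit into the isometry $\tilde E^{(M)}$, and observe that the induced $\tilde P_x$ are (products of) polynomials in $H$ and hence commute with $P_x$ and $P^{(M)}$. The one bookkeeping slip is the per-bit degree: the $j$-th bit function has $2^j$ transitions, so each constituent step polynomial must be accurate to $\delta_\mathrm{est}/((n+1)2^{j})$ and the degree is $O(\kappa^{-1}(j+\log(n/\delta_\mathrm{est})))$ rather than $O(\kappa^{-1}\log(n/\delta_\mathrm{est}))$ --- this is exactly where the paper's $n^2$ factor originates, and summing the corrected degrees over $j$ still yields the stated $O(n^2\kappa^{-1}\log(\delta_\mathrm{est}^{-1}))$ bound.
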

We prove this in \cref{app:polynomialconstruction}. The factor of $O(n^2)$ can be removed with some additional care using techniques from \cite{2103.09717}, but we keep it here to simplify the algorithm. This factor corresponds to the $O(\log^2(\beta/\eps))$ in the performance in Theorem~\ref{thm:mainresult}.

\section{Algorithm overview}
While we cannot use Davies generators to exactly prepare the Gibbs state $\rho_\beta$ on the entire space, we can prepare the promised Gibbs states $\rho_\beta^{(M)}$ which are restricted to a particular promised subspace $\mathcal{P}^{(M)}$.

\begin{definition} \label{def:promisedgibbs} Let $M$ be a rounding promise. For each $x$, let $m^{(M)}_x$ denote the midpoint of the interval $[a_x, b_x]$. For $\beta \geq 0$, the \textbf{promised Gibbs state} $\rho_\beta^{(M)}$ is the density matrix supported only on the promised subspace $\mathcal{P}^{(M)}$ such that
\begin{align}
    \rho_\beta^{(M)}
    &\propto 
    \sum_{x} 
    \exp \big( -\beta m^{(M)}_x \big) \, P^{(M)}_x.
\end{align}
The \textbf{promised partition function} $\mathcal{Z}_\beta^{(M)}$ is the normalization factor 
\begin{align}
    \mathcal{Z}_\beta^{(M)}
    &\coloneqq 
    \sum_{x} 
    \exp \big( -\beta m^{(M)}_x \big) \, \mathrm{Tr}\big( P^{(M)}_x \big).
\end{align}
\end{definition}

With this idea in place, we can give an informal high-level overview of our algorithm. Each of the major challenges in the algorithm's construction is treated in a section of this paper.

First, promised thermal states are in general not close to the ideal thermal state. However, we find that a suitable ensemble of promised thermal states is satisfactory.
\begin{claim}[\cref{sec:averagingtogetherpromisedthermalstates}]
The desired thermal state $\rho_\beta$ for the Hamiltonian $H$ can be approximated by an average of promised thermal states $\rho_\beta^{(M_j)}$ for suitably chosen rounding promises $M_j$, where $j\in\{0,\ldots,2^r-1\}$.  In particular, let $\rho^{*}$ be the ensemble average over the promised thermal states of the $M_j$. Then, if we perform energy estimation to $n$ bits of precision:
\begin{align}
    \norm{  \rho^{*} - \rho_\beta  }_1 \leq \sqrt{\beta \cdot 2^{-n}} + 2\cdot 2^{-r} 
\end{align}
\end{claim}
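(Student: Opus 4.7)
The plan is to construct the ensemble $\{M_j\}_{j=0}^{2^r-1}$ as uniformly shifted copies of a single base rounding promise, and decompose $\rho^{*} - \rho_\beta$ through an intermediate state. Specifically, take $M_j$ to have gaps of width $g = 2^{-n-r}$ placed at positions $jg + k\cdot 2^{-n}$ for integer $k$, separated by intervals of width $2^{-n}-g$. The $2^r$ gap patterns tile each period of length $2^{-n}$ disjointly, so every eigenvalue $\lambda_i$ of $H$ lies in the gap of exactly one promise and in the promised subspace of the other $2^r-1$. Introduce the truncated Gibbs state $\rho_\beta|_{M_j} \coloneqq P^{(M_j)} e^{-\beta H}/\mathcal{Z}_j$ with normalization $\mathcal{Z}_j \coloneqq \mathrm{Tr}(P^{(M_j)} e^{-\beta H})$; this keeps the true eigenvalues but discards the gap eigenspaces. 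Writing $\sigma = \frac{1}{2^r}\sum_j \rho_\beta|_{M_j}$, the triangle inequality $\|\rho^{*} - \rho_\beta\|_1 \leq \|\rho^{*} - \sigma\|_1 + \|\sigma - \rho_\beta\|_1$ separates a rounding error from a coverage error.

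For the coverage error, the fact that $P^{(M_j)}$ commutes with $H$ yields
\begin{align}
\rho_\beta - \rho_\beta|_{M_j}
= \frac{(I - P^{(M_j)}) e^{-\beta H}}{\mathcal{Z}_\beta}
+ \left(\frac{1}{\mathcal{Z}_\beta} - \frac{1}{\mathcal{Z}_j}\right) P^{(M_j)} e^{-\beta H},
\end{align}
which is the sum of a positive and a negative operator supported on orthogonal subspaces; adding their traces gives $\|\rho_\beta - \rho_\beta|_{M_j}\|_1 = 2(\mathcal{Z}_\beta - \mathcal{Z}_j)/\mathcal{Z}_\beta$. Since each $\lambda_i$ is excluded by precisely one $M_j$, one has $\frac{1}{2^r}\sum_j(\mathcal{Z}_\beta - \mathcal{Z}_j) = 2^{-r}\mathcal{Z}_\beta$, and convexity of the trace norm then gives $\|\sigma - \rho_\beta\|_1 \leq 2\cdot 2^{-r}$.

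For the rounding error, I would bound each $\|\rho_\beta^{(M_j)} - \rho_\beta|_{M_j}\|_1$ using Pinsker's inequality. Both states are supported on $\mathcal{P}^{(M_j)}$ and commute with $H$, so the quantum relative entropy reduces to its classical form:
\begin{align}
D\bigl(\rho_\beta|_{M_j} \,\big\|\, \rho_\beta^{(M_j)}\bigr)
= \beta\bigl\langle m_{x_j(i)} - \lambda_i\bigr\rangle_{\rho_\beta|_{M_j}}
+ \log\bigl\langle e^{-\beta(m_{x_j(i)} - \lambda_i)}\bigr\rangle_{\rho_\beta|_{M_j}}
\leq \beta \cdot 2^{-n},
\end{align}
since $|\lambda_i - m_{x_j(i)}| \leq 2^{-n-1}$ bounds each summand by $\beta \cdot 2^{-n-1}$ in magnitude. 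Pinsker then gives $\|\rho_\beta^{(M_j)} - \rho_\beta|_{M_j}\|_1 \leq \sqrt{2\beta\cdot 2^{-n}}$, and averaging over $j$ yields $\|\rho^{*} - \sigma\|_1 \leq \sqrt{\beta\cdot 2^{-n}}$ after absorbing the leading constant.

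The main obstacle is choosing a rounding-error bound that is uniform across regimes: a direct elementwise comparison of the spectra yields a sharper $O(\beta/2^n)$ when $\beta/2^n \ll 1$ but becomes vacuous once $\beta/2^n \gtrsim 1$, whereas Pinsker produces the weaker but uniformly informative $\sqrt{\beta/2^n}$ that matches the statement. Verifying the tiling property of the shifted promises and tracking the constants through the two triangle inequalities is straightforward arithmetic; combining the coverage and rounding estimates then produces the claimed bound.
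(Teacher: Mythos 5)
Your proposal is correct in structure and reaches the claim by a genuinely different route in both halves of the argument, so it is worth comparing to the paper's proof. For the coverage error, the paper (Lemma~\ref{lemma:theaverageofexactpromisedthermalstatesapproximatesthethermalstate}) expands $\|\hat\rho^{*M}-\rho_\beta\|_1$ eigenvalue by eigenvalue and bounds the two resulting sums separately via the partition-function inequality $\hat{\mathcal{Z}}^{(M_j)}_\beta\le\mathcal{Z}_\beta$; your exact identity $\|\rho_\beta-\rho_\beta|_{M_j}\|_1=2(\mathcal{Z}_\beta-\mathcal{Z}_j)/\mathcal{Z}_\beta$, obtained from the orthogonal positive/negative decomposition, followed by convexity and the ``each $\lambda_i$ is excluded by at most one $M_j$'' property, is a cleaner derivation of the same $2\cdot2^{-r}$ bound (your intermediate state $\rho_\beta|_{M_j}$ is exactly the paper's exact promised thermal state $\hat\rho^{(M_j)}_\beta$ of Definition~\ref{def:exactpromisedthermalstate}). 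Your shifted-gap construction of the $M_j$ also suffices for this claim in isolation, though the paper's fine-grained/coarse-grained construction is what makes the promises compatible with the left-right POVM and the rest of the algorithm.

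The one place you fall short is the rounding error. The paper bounds $\|\hat\rho^{(M_j)}_\beta-\rho^{(M_j)}_\beta\|_1$ via the Poulin--Wocjan fidelity bound $F\ge e^{-\beta\|\hat H^{(M_j)}-H^{(M_j)}\|}$ together with $\|\cdot\|_1\le\sqrt{1-F^2}$, which with $\|\hat H^{(M_j)}-H^{(M_j)}\|\le 2^{-n-1}$ gives exactly $\sqrt{\beta 2^{-n}}$. Your Pinsker route with the crude estimate $D\le\beta 2^{-n}$ gives $\sqrt{2\beta 2^{-n}}$, and the stated bound has no slack to ``absorb the leading constant,'' so as written you only prove the claim with $\sqrt{2\beta\cdot2^{-n}}$ in place of $\sqrt{\beta\cdot2^{-n}}$. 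This is fixable within your framework: writing $u_i=-\beta(m_{x_j(i)}-\lambda_i)\in[-\beta 2^{-n-1},\beta 2^{-n-1}]$, Hoeffding's lemma gives $D=\log\langle e^{u-\langle u\rangle}\rangle\le\beta^2 2^{-2n}/8$, whence Pinsker yields $\|\cdot\|_1\le\beta 2^{-n}/2$, which is at most $\sqrt{\beta 2^{-n}}$ whenever the latter is nontrivial (i.e.\ $\beta 2^{-n}\le 4$) and the claim is vacuous otherwise. Either that sharpening or simply adopting the paper's fidelity argument closes the gap; everything else in your proposal is sound.
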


The main idea is that the rounding promises $M_j$ have to be chosen such that each eigenvalue $\lambda_i$ of the Hamiltonian $H$ is contained in at least $2^r-1$ rounding promises $M_j$. That way any individual eigenspace can be missing with probability at most $2^{-r}$.

Second, we must restrict the dynamics of a Davies generator to a promised subspace. This `promised Davies generator' will then be used to prepare the promised thermal states.

\begin{definition}[\textbf{Promised Davies generator}] \label{def:promiseddaviesgenerator} Say $\mathcal{L}$ is a Davies generator on the full Hilbert space with coupling operators $S_\alpha$. Say $M$ is a rounding promise, and we are given an `attenuation operator' $A^{(M)}$ that commutes with the Hamiltonian and satisfies:
\begin{align}
    \norm{A^{(M)} (I - P^{(M)})} = 0.
\end{align}

Then, the \textbf{promised Davies generator} $\mathcal{L}^{(M)}$ is a Lindblad operator defined by the jump operators $L_{\nu,\alpha}$ given by:
\begin{align}
    L^{(M)}_{\nu, \alpha} = \sqrt{G(\nu)} S^{(M)}_{\alpha}(\nu).
\end{align}

Its Bohr frequencies $\nu$ are differences of the form $m^{(M)}_x - m^{(M)}_y$, where $m^{(M)}_x$ and $m^{(M)}_y$ are the midpoints of the intervals $[a_x, b_y]$ and $[a_y,b_y] \in M$, respectively. Its jump operators $S_\alpha^{(M)}(\nu)$ are given by
\begin{align}
    S_\alpha^{(M)}(\nu) 
    &=
    \sum_{\substack{x, y \\ m^{(M)}_x - m^{(M)}_y = \nu}} P_x S_\alpha^{(M)} P_y.
\end{align}

Finally, the coupling operators $S_\alpha^{(M)}$  are given by
\begin{align}
    S_\alpha^{(M)} 
    &= 
    A^{(M)} S_\alpha A^{(M)}.
\end{align}
\end{definition}

In Section~4 we will construct the attenuation operator $A^{(M)}$ as well as show that the promised Davies generator has the following properties:

\begin{claim}[\cref{sec:lindbladdynamicsonthepromisedsubspace}] Say $\mathcal{L}$ is a Davies generator on the full Hilbert space with coupling operators $S_\alpha$. Then, for any rounding promise $M$, there exists a \textbf{promised Davies generator} $\mathcal{L}^{(M)}$ with the following properties:
\begin{itemize}
    \item  If an input state $\sigma$ is supported only on $\mathcal{P}^{(M)}$, the output state $e^{t\mathcal{L}^{(M)}}(\sigma)$ will be as well.
    \item The unique fixed point of $\mathcal{L}^{(M)}$ is the promised thermal state $\rho^{(M)}_\beta$.
    \item Numerical simulations indicate that the mixing time $\tilde t_\text{mix}$ of $\mathcal{L}^{(M)}$ is not too much slower than that of $\mathcal{L}$.
\end{itemize}
\end{claim}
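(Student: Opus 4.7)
The plan is to prove the three bullets in order, concentrating effort on the first two since the third is explicitly supported only by numerical evidence.

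For invariance of the promised subspace, I would first note that since $A^{(M)}$ commutes with $H$ by assumption, it is block-diagonal in the spectral decomposition $H = \sum_i \lambda_i \Pi_i$; the hypothesis $\|A^{(M)}(I - P^{(M)})\| = 0$ then forces $A^{(M)}$ to vanish on the orthogonal complement of $\mathcal{P}^{(M)}$, so $A^{(M)} = P^{(M)} A^{(M)} = A^{(M)} P^{(M)}$. Hence each modified coupling operator $S_\alpha^{(M)} = A^{(M)} S_\alpha A^{(M)}$ is sandwiched by $P^{(M)}$ from both sides, and the Bohr-block pieces $S_\alpha^{(M)}(\nu) = \sum_{m_x - m_y = \nu} P_x S_\alpha^{(M)} P_y$ inherit this property. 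A direct inspection of the Lindblad formula then shows that every term of $\mathcal{L}^{(M)}(\sigma)$ is of the form $P^{(M)} (\cdot) P^{(M)}$, so the flow $e^{t \mathcal{L}^{(M)}}$ preserves $\mathcal{P}^{(M)}$-support.

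For the fixed point, the key observation is that the jump operators of $\mathcal{L}^{(M)}$ are structurally those of a standard Davies generator for an effective \emph{midpoint Hamiltonian} $\tilde H^{(M)} \coloneqq \sum_x m_x^{(M)} P_x^{(M)}$ on $\mathcal{P}^{(M)}$. The Bohr frequencies $\nu = m_x - m_y$ and the block decomposition $S_\alpha^{(M)}(\nu) = \sum_{m_x - m_y = \nu} P_x S_\alpha^{(M)} P_y$ match \cref{def:daviesgenerator} verbatim, with $\tilde H^{(M)}$ playing the role of $H$. Since $\rho_\beta^{(M)}$ is by \cref{def:promisedgibbs} proportional to $\exp(-\beta \tilde H^{(M)})$, the detailed-balance condition $G(\nu) = e^{\beta \nu} G(-\nu)$ pairs positive- and negative-frequency terms in the usual way to give $\mathcal{L}^{(M)}(\rho_\beta^{(M)}) = 0$. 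Uniqueness would then be reduced to Spohn's criterion applied on $\mathcal{P}^{(M)}$: that the algebra generated by $\tilde H^{(M)}$ together with the operators $P^{(M)} S_\alpha^{(M)} P^{(M)}$ is all of $\mathrm{L}(\mathcal{P}^{(M)})$.

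The delicate step, and the one I expect to be the main obstacle, is precisely this uniqueness claim: sandwiching and attenuating the original couplings through $A^{(M)}$ can in principle disconnect eigenspaces that were linked by the unattenuated $S_\alpha$, destroying ergodicity within $\mathcal{P}^{(M)}$. The plan is to constrain the construction of $A^{(M)}$ so that no block within $\mathcal{P}^{(M)}$ is exactly annihilated; this preserves Spohn's criterion and leaves only a possibly reduced but finite spectral gap. Since no clean a priori bound on that gap is available, the plan for the third bullet is simply to defer to the numerical experiments of \cref{sec:lindbladdynamicsonthepromisedsubspace}, which compare $\tilde t_\mathrm{mix}$ to $t_\mathrm{mix}$ empirically on representative Hamiltonians and coupling operators to justify treating them as the same order of magnitude in the downstream complexity analysis.
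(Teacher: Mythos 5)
Your treatment of the first bullet and of the ``is a fixed point'' half of the second bullet matches the paper's: the identity $A^{(M)} = P^{(M)}A^{(M)} = A^{(M)}P^{(M)}$ makes $\mathcal{L}^{(M)}$ block diagonal with respect to $\mathcal{H} = \mathcal{P}^{(M)} \oplus {\mathcal{P}^{(M)}}^{\perp}$, acting as zero on the complement and as a genuine Davies generator for the midpoint Hamiltonian $H^{(M)} = \sum_x m_x^{(M)} P_x^{(M)}$ with couplings $S_\alpha^{(M)}$ on $\mathcal{P}^{(M)}$, so detailed balance of $G$ gives $\mathcal{L}^{(M)}(\rho_\beta^{(M)}) = 0$. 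This is exactly the content of \cref{remark:wellroundedhamiltonian}, and deferring the third bullet to the numerics is what the paper does as well.

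The gap is in your uniqueness argument. You correctly identify it as the delicate step, but the repair you propose --- constrain $A^{(M)}$ so that no block inside $\mathcal{P}^{(M)}$ is exactly annihilated, and conclude that this ``preserves Spohn's criterion'' --- does not work. The obstruction to ergodicity is not the attenuation within $\mathcal{P}^{(M)}$ (which is already strictly positive there by the construction in \cref{lemma:attenuationoperators}); it is the compression $S_\alpha \mapsto P^{(M)} S_\alpha P^{(M)}$ itself. Deleting eigenspaces changes the connectivity that the coupling operators induce among the surviving eigenspaces, and the algebra generated by the compressed couplings, viewed inside $\mathrm{L}(\mathcal{P}^{(M)})$, need not be the full algebra even when the $S_\alpha$ generate all of $\mathrm{L}(\mathcal{H})$. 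The bipartite random-walk scenario sketched in \cref{sec:lindbladdynamicsonthepromisedsubspace} is precisely such a failure mode, and it occurs already at $A^{(M)} = P^{(M)}$, i.e., with no attenuation whatsoever, so no constraint on $A^{(M)}$ can rule it out. For this reason the paper does not prove the uniqueness bullet: it elevates mixing of the projected and attenuated couplings to an explicit standing assumption (the ``mixing assumption for projected/attenuated coupling operators'' in the technical overview), verified numerically for the transverse-field Ising model in the same experiments that support the third bullet. Your write-up should do the same --- state mixing of the $S_\alpha^{(M)}$ as a hypothesis rather than deriving it from a non-degeneracy condition on $A^{(M)}$.
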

 
The purpose of the attenuation operator $A^{(M)}$ is to ensure that the output of the Lindblad evolution according to $\mathcal{L}^{(M)}$ is confined to the promised subspace. Then, so long as time evolution starts in an initial state $\tilde{\sigma}^{(M)}$ that is approximately supported on the promised subspace $\mathcal{P}^{(M)}$, then the output state will be as well.

The attenuation operator can make the mixing time of the Lindblad evolution slower. First, even in the ideal case when $A^{(M)} = P^{(M)}$, the elimination of certain energy eigenspaces may result in slower mixing. Second, it is not possible to project onto the promised subspace perfectly, for similar reasons that it is not possible to estimate energies without superpositions of rounding errors. We solve this problem by attenuating some of the eigenspaces in the promised subspace as well. This may also result in slower mixing. We show via numerical study that neither of these effects make the mixing time too much worse in practice.

\begin{remark}[\textbf{Well-rounded Hamiltonian}] \label{remark:wellroundedhamiltonian}
Note that the promised Davies generator $\mathcal{L}^{(M)}$ is block diagonal with respect to the orthogonal decomposition $\mathcal{H}=\mathcal{P}^{(M)}\oplus{\mathcal{P}^{(M)}}^\perp$.  It acts as $0$ on ${\mathcal{P}^{(M)}}^\perp$, and acts on $\mathcal{P}^{(M)}$ as the Davies generator with respect to the promised Hamiltonian $H^{(M)}\coloneqq\sum_x m^{(M)}_x P^{(M)}_x$ and coupling operators $S_\alpha^{(M)}$ (where we view all operators as restricted to the promised subspace $\mathcal{P}^{(M)}$).
\end{remark}

Third and finally, we show how to construct an approximate block encoding $O_{\mathcal{L}^{(M)}}$ for each promised Davies generator. This lets us leverage \cref{prop:lindbladsimulation} to simulate the Lindblad dynamics and prepare the promised thermal states.

\begin{claim}[\cref{sec:implementinglindbladdynamics}] \label{claim:implementingdynamics} Say we are given block encodings of some coupling operators $S_{\alpha}$, and one of the rounding promises $M \in \{M_0, \ldots,M_{2^{r}-1}\}$. Suppose we perform energy estimation to $n$ bits of precision. Then, there exists a quantum algorithm that implements time evolution for time $t$ according to $\mathcal{L}^{(M)}$ to any precision $\delta_\mathcal{L}>0$ in the diamond norm using:
\begin{align}
  O\left( \gamma^{-1} \cdot n^2 2^{3n+r} t \cdot \mathrm{polylog}(t/\delta_{\mathcal{L}})\right)
\end{align}
invocations to the block encoding of $H$, where $\gamma$ is an attenuation coefficient introduced in \cref{sec:lindbladdynamicsonthepromisedsubspace}.
We accomplish this by constructing and simulating time evolution according to an approximate promised Davies generator $\tilde{\mathcal{L}}^{(M)}$. 

\end{claim}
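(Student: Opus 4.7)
The plan is to reduce the claim to an application of Proposition~\ref{prop:lindbladsimulation} on Lindblad simulation, where the main work is to construct an approximate block encoding $\tilde{\mathcal{O}}_{\mathcal{L}^{(M)}}$ of the jump operators of $\mathcal{L}^{(M)}$. The construction is a sandwich: apply the energy-estimation isometry $\tilde E^{(M)}$ from Proposition~\ref{prop:energyestimationgivenaroundingpromise} to write the input state in terms of labels $y$; then apply a block encoding of $S_\alpha$ (controlled on the $\alpha$ register) with the attenuation operator $A^{(M)}$ applied on both sides; then apply $\tilde E^{(M)}$ again to read out a label $x$; coherently compute $\nu = m_x^{(M)} - m_y^{(M)}$ into a dedicated register; and finally multiply by $\sqrt{G(\nu)}$ via a quantum singular value transformation acting on the $\nu$-register. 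Uncomputing the $x,y$ registers leaves precisely the desired block encoding
\begin{align}
   \tilde{\mathcal{O}}_{\mathcal{L}^{(M)}} \bigl(\ket{0}\otimes\ket{\psi}\bigr) = \sum_{\nu,\alpha} \ket{\nu,\alpha}\otimes \tilde{L}_{\nu,\alpha}^{(M)}\ket{\psi},
\end{align}
where the $\tilde{L}_{\nu,\alpha}^{(M)}$ are approximations of the $L_{\nu,\alpha}^{(M)}$ of Definition~\ref{def:promiseddaviesgenerator}.

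Next I would bound the spectral-norm error of each jump operator. Using $\|\tilde{P}_x P^{(M)} - P_x\| \leq \delta_{\mathrm{est}}$ twice (once on each side of $S_\alpha^{(M)}$), plus the fact that $A^{(M)}$ is supported in $\mathcal{P}^{(M)}$ so that $A^{(M)}=A^{(M)}P^{(M)}$, a direct triangle-inequality argument yields $\|\tilde{L}_{\nu,\alpha}^{(M)} - L_{\nu,\alpha}^{(M)}\| = O(\delta_{\mathrm{est}})$ for every $(\nu,\alpha)$. Picking $\delta_{\mathrm{est}}$ polynomially small compared to $\delta_{\mathcal{L}}/(mt)$ is enough because Lindblad evolution is Lipschitz in its generator (in induced trace norm) with a factor linear in $t$ and in the number of jump operators --- this is the step I expect to be the main obstacle, since one must check carefully that perturbing each jump operator by $\delta$ changes the diamond norm of $e^{t\mathcal{L}}$ by at most $O(mt\delta)$, which follows from a standard Duhamel-type argument applied to $\mathcal{L}^{(M)} - \tilde{\mathcal{L}}^{(M)}$.

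For the resource count I would combine the two propositions. The number of distinct Bohr frequencies $\nu = m_x^{(M)} - m_y^{(M)}$ is at most $(s^{(M)})^2 = O(2^{2n})$, so $m = O(2^{2n})$ assuming $O(1)$ coupling operators. Proposition~\ref{prop:lindbladsimulation} then uses $\tilde O(mt) = \tilde O(2^{2n} t)$ calls to $\tilde{\mathcal{O}}_{\mathcal{L}^{(M)}}$. Each such call invokes $\tilde E^{(M)}$ a constant number of times, which by Proposition~\ref{prop:energyestimationgivenaroundingpromise} costs $O(n^2 \kappa^{-1} \log(1/\delta_{\mathrm{est}}))$ queries to the block encoding of $H$; with $2^n$ intervals shifted across $2^r$ rounding promises the smallest gap is $\kappa = \Omega(2^{-n-r})$, giving $O(n^2 \, 2^{n+r})$ queries per call. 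The attenuation $A^{(M)}$ has a subnormalization of $\gamma$, which forces an additional $\gamma^{-1}$ amplification by uniform singular-value amplification on the sandwich. Multiplying these factors together yields
\begin{align}
   O\!\left(\gamma^{-1}\cdot n^{2}\cdot 2^{3n+r}\cdot t\cdot\mathrm{polylog}(t/\delta_{\mathcal{L}})\right),
\end{align}
matching the stated bound.

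Finally I would verify that the output channel is $\delta_{\mathcal{L}}$-close to $e^{t\mathcal{L}^{(M)}}$ in diamond norm by combining the simulation error $\delta_{\mathrm{sim}}$ from Proposition~\ref{prop:lindbladsimulation} with the generator-perturbation bound above and picking $\delta_{\mathrm{sim}},\delta_{\mathrm{est}}$ both equal to a suitable $\Theta(\delta_{\mathcal{L}}/(mt))$; because both costs scale only polylogarithmically with their inverse accuracy, this choice affects only the $\mathrm{polylog}(t/\delta_{\mathcal{L}})$ factor and leaves the overall complexity as claimed.
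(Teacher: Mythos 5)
Your overall strategy is the same as the paper's: build the block encoding of the jump operators by sandwiching the (attenuated) coupling operator between two applications of the energy-estimation isometry $\tilde E^{(M)}$, apply $\sqrt{G(\nu)}$ on the frequency register, bound the spectral-norm perturbation of the jump operators, propagate that to $\|e^{t\mathcal{L}}-e^{t\tilde{\mathcal{L}}}\|_\diamond \leq t\,\|\mathcal{L}-\tilde{\mathcal{L}}\|_\diamond$, and feed everything into \cref{prop:lindbladsimulation}. The final complexity and the parameter choices match. However, two steps of your argument are not right as stated.

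First, your construction sandwiches $S_\alpha$ with the \emph{exact} attenuation operator $A^{(M)}$. That operator is not implementable: by \cref{lemma:attenuationoperators} only the approximation $\tilde A^{(M)}$ admits a block encoding, and it leaks outside $\mathcal{P}^{(M)}$ by up to $\delta_\mathrm{leak}$. Consequently your error budget, which tracks only $\delta_\mathrm{est}$, is missing the contribution $\|S^{(M)}-\tilde S^{(M)}\|\leq 2\delta_\mathrm{leak}$ (\cref{lemma:distance-s-tildes} in the paper). This is repairable within your own framework---$\delta_\mathrm{leak}$ enters the cost only logarithmically, exactly like $\delta_\mathrm{est}$---but as written the circuit you describe does not exist and the identity $A^{(M)}=A^{(M)}P^{(M)}$ you invoke is precisely the property that $\tilde A^{(M)}$ fails to have. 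Relatedly, your claim that $\|\tilde L_{\nu,\alpha}-L_{\nu,\alpha}\|=O(\delta_\mathrm{est})$ ignores that each $L_\nu$ is a sum over all pairs $(x,y)$ with $m_x^{(M)}-m_y^{(M)}=\nu$, so a triangle-inequality bound accumulates a factor polynomial in $s^{(M)}$ (the paper gets $(s^{(M)})^2(2\delta_\mathrm{est}+\delta_\mathrm{leak})$); again this only shifts log factors, but the bound needs the extra factor or a separate argument exploiting orthogonality of the $P_x$.

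Second, your explanation of the $\gamma^{-1}$ factor is wrong in mechanism. The attenuation operator is \emph{not} subnormalized by $\gamma$---it equals $1$ on the truncated promise $M^T$, which is most of the promised subspace---and no uniform singular-value amplification is performed. The $\gamma^{-1}$ arises because $\tilde a^{(M)}$ must transition from $0$ to $1$ over a window of width $\kappa\gamma$ at the edges of each interval, so the polynomial degree (hence the number of queries to $U_H$ per block encoding of $\tilde A^{(M)}$) is $O(\kappa^{-1}\gamma^{-1}\log(s^{(M)}\delta_\mathrm{leak}^{-1}))$. Your final count coincides with the paper's, but for the wrong reason; with your mechanism one would also have to explain why amplification does not distort the nontrivial eigenvalues of $A^{(M)}$ in the transition region, which is exactly what the fixed point argument for $\mathcal{L}^{(M)}$ relies on preserving.
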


\section{Averaging together promised thermal states\label{sec:averagingtogetherpromisedthermalstates}}

By selecting a rounding promise $M$ restricting to a promised subspace $\mathcal{P}^{(M)}$, we have gained the ability to approximately prepare promised Gibbs states ${\rho_\beta}^{(M)}$. These Gibbs states have support only on $\mathcal{P}^{(M)}$, and thus may be far from the true thermal state $\rho_\beta$. In this section we show how to construct several rounding promises $M_j \in \{M_0,\ldots,M_{2^{r}-1}\}$, such that the ensemble average over the ${\rho_\beta}^{(M_j)}$'s is provably close to $\rho_\beta$.

Furthermore, to prepare an approximation of ${\rho_\beta}^{(M_j)}$, we require an initial state $\sigma$ that is also approximately supported only on $\mathcal{P}^{(M_j)}$. This state is then fed as input to the promised Davies generator, whose dynamics are trivial outside of $\mathcal{P}^{(M_j)}$. We also show how to prepare these initial states in this section: it is achieved by measuring a POVM called the `left-right POVM' on an arbitrary initial state.

The left-right POVM splits the computation into two branches, each corresponding to a family of left rounding promises $\bar L, L_j$ and right rounding promises $\bar R, R_j$. We use $M$ as a symbol to denote either $L$ or $R$ depending on which branch we are in. An overview of the algorithm as a whole is as follows:
\begin{description}
    \item[Step 1.] Take an arbitrary initial state $\sigma$, and measure the left-right POVM defined in \cref{sec:leftrightpovm}. Depending on the measurement outcome, the resulting state $\tilde{\sigma}^{(M)}$ will be approximately supported on $\mathcal{P}^{(\bar M)}$, where $\bar M \in \{\bar L,\bar R\}$ is one of two `fine-grained' rounding promises.

    \item[Step 2.] Pick an index $j \in \{0,...,2^{r}-1\}$ uniformly at random. This index determines a `coarse-grained' rounding promise $M_j$, which is either $L_j$ or $R_j$ depending on the measurement outcome of the left-right POVM. These are defined in \cref{sec:coarsegrainedroundingpromises}. Since the $M_j$ are coarse-grainings of $\bar M$, the input state $\tilde{\sigma}^{(M)}$ is supported only on $\mathcal{P}^{(M_j)}$ for any $j$.
    \item[Step 3.] Use the promised Davies generator to approximately prepare ${\rho_\beta}^{(M_j)}$. This is discussed in \cref{sec:lindbladdynamicsonthepromisedsubspace,sec:implementinglindbladdynamics}.
    \item[Analysis.] Let $\rho^{*M}$ be the ensemble average over the index $j$ that we selected in step 2 and used to prepare ${\rho_\beta}^{(M_j)}$. We show that for both $M \in \{L,R\}$, the density matrix $\rho^{*M}$ is close in trace distance to the ideal thermal state $\rho_\beta$. We perform this analysis in \cref{sec:ensembleanalysis}. 
\end{description}

This protocol is represented diagrammatically in \cref{fig:protocoldiagram}.
\begin{figure}
    \centering
    \includegraphics[width=0.8\textwidth]{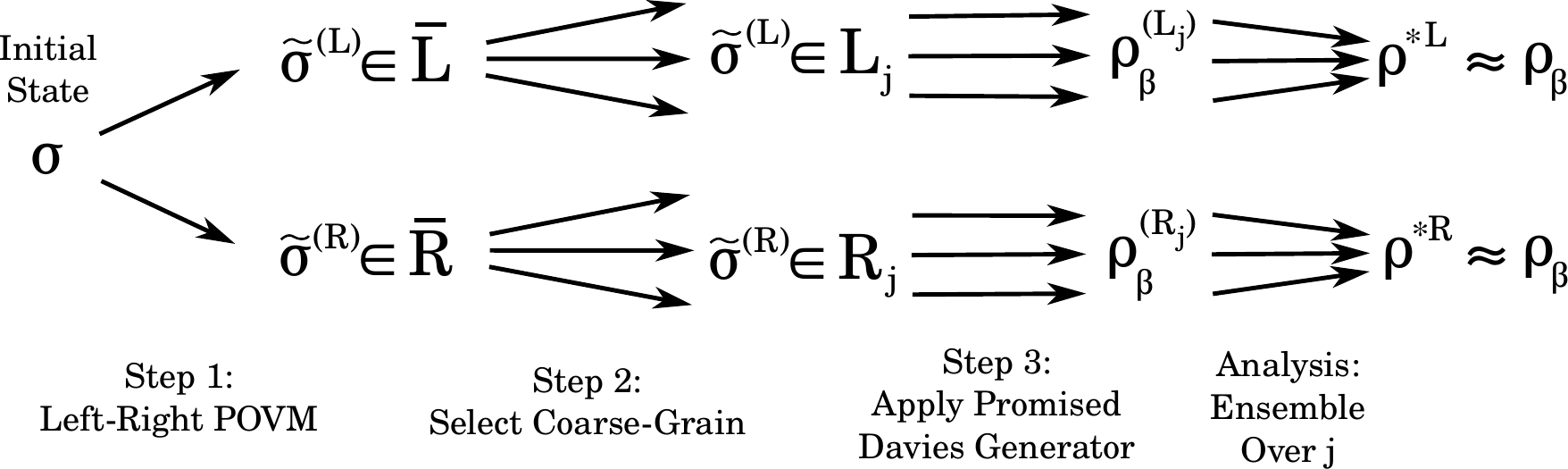}
    \caption{\label{fig:protocoldiagram} Sketch of a protocol that leverages promised Davies generators to prepare an approximation $\rho^{*M}$ of an ideal thermal state $\rho_\beta$ (for $M \in \{L,R\}$). In the figure above, $\rho \in \bar M$ is a shorthand for `$\rho$ is approximately supported entirely on $\mathcal{P}^{(\bar M)}$', which we define more rigorously in \cref{def:approximatesupport}.}
\end{figure}

\subsection{The Left-Right POVM\label{sec:leftrightpovm}}

The Left-Right POVM has the purpose of producing a quantum state $\tilde \sigma^{(M)}$ which is guaranteed to be approximately supported on one of the two fine-grained rounding promises $\bar M \in \{\bar L, \bar R\}$ depending on the measurement outcome. In this section we show how to implement this POVM, and prove that the post-measurement state has the desired property. We define this notion rigorously now:

\begin{definition} \label{def:approximatesupport} Say $M$ is a rounding promise and $\tilde{\sigma}$ a density matrix. We say $\tilde\sigma$ is  $\delta_\mathrm{sup}$\textbf{-approximately supported on} $M$ if
\begin{align}
  \Tr( \tilde\sigma P^{(M)} ) \geq 1-\delta_\mathrm{sup}.
\end{align}
In other words, if we measure the projector $P^{(M)}$ in order to project into the promised subspace $\mathcal{P}^{(M)}$, we succeed with probability $\geq 1-\delta_\mathrm{sup}$.
\end{definition}

The construction depends on two parameters: $n$ and $r$. Here $n$ is the number of bits of precision for energy estimation, and $r$ determines the number of coarse-grained rounding promises, which is $2^r$. The implementation of the left-right POVM as well as energy measurement using \cref{prop:energyestimationgivenaroundingpromise} will require $O(2^{n+r})$ applications of the block encoding of $H$. This is because all the rounding promises in this construction have a minimum gap size $\kappa = 2^{-n-r-2}$. 

The quantities $2^{-n}$ and $2^{-r}$ correspond to two different errors on the final state. The quantity $2^{-n}$ corresponds to the accuracy of energy estimation, and $2^{-r}$ is the probability with which any particular energy is excluded from the ensemble. We will show in \cref{thm:accuracyofthefinalensemble} that if output state is $\rho^*$, then we have the guarantee is $\norm{\rho^* - \rho_\beta}_1 \leq \sqrt{\beta 2^{-n}} + 2\cdot 2^{-r}$. In our final construction in \cref{sec:implementinglindbladdynamics} we will select $2^n \sim \beta/\eps^{2}$ and $2^r \sim 1/\eps$, achieving an accuracy of $\sim\eps$.

As we present the construction, we recommend following along with \cref{fig:roundingpromises}. The main idea is that we would like to eliminate small regions of the spectrum via the Left-Right POVM, which is defined by an operator $P_\mathrm{LR}$ whose spectrum is sketched in \cref{fig:roundingpromises}. When $P_\mathrm{LR}$ has no support on an eigenstate, and we observe the POVM outcome corresponding to $P_\mathrm{LR}$, then the output state has no support on that eigenstate. Consequently the output state satisfies the rounding promise $\bar L$. Similarly, the rounding promise $\bar R$ is defined by the eigenstates on which $P_\mathrm{LR}$ has eigenvalue 1. The key property of $\bar L$ and $\bar R$ that we require for the remainder of the construction is that there are $2^{n+r}$ many evenly spaced gaps in the spectrum. Later, we will define coarse-grained rounding promises that close all but $2^n$ of these gaps at random, so that the probability of any individual energy being excluded by the coarse-grained promise is at most $2^{-r}$. 

\begin{figure}
    \centering
    \includegraphics[width=0.9\textwidth]{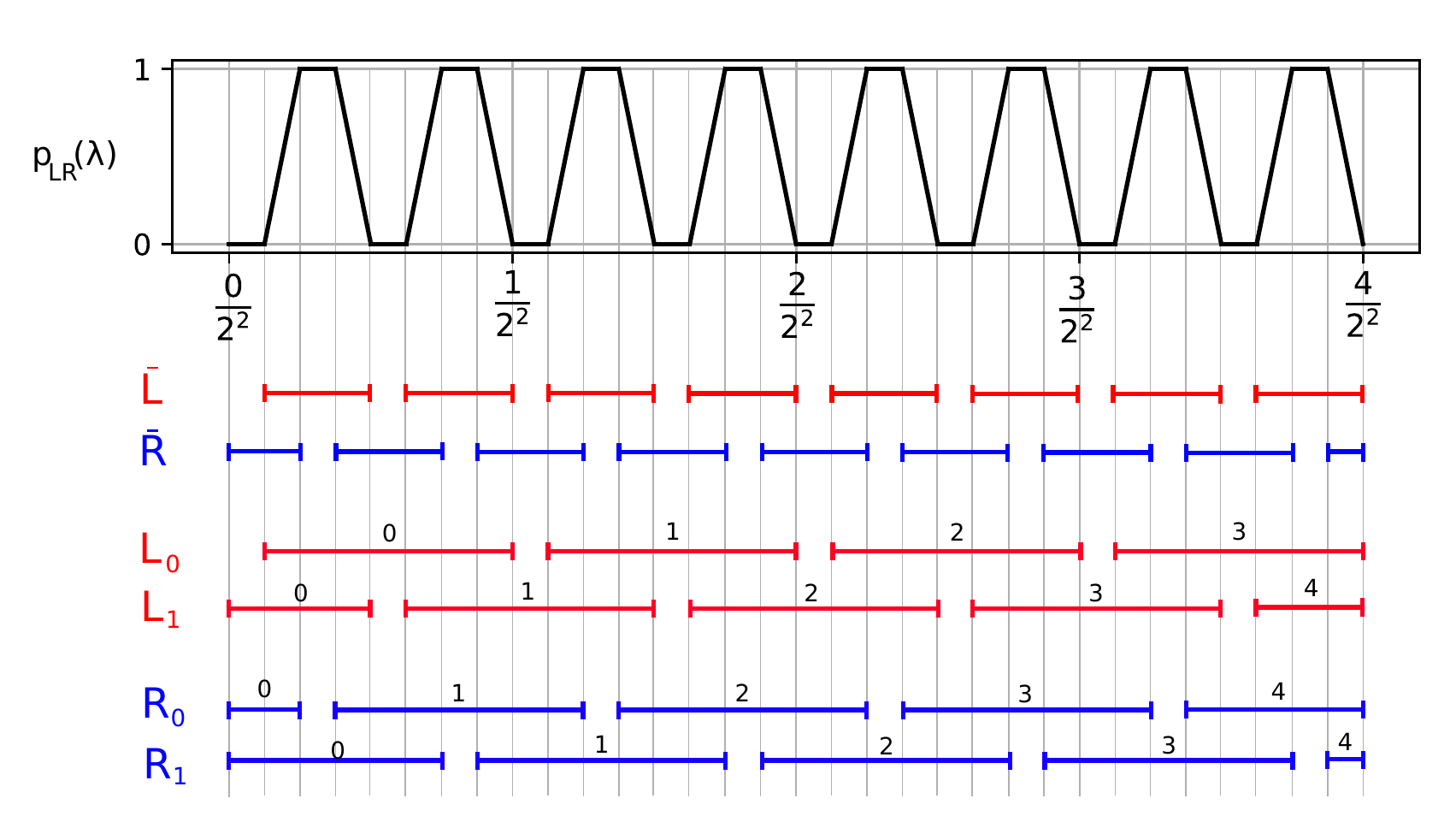}
    \caption{\label{fig:roundingpromises} Sketch of the rounding promises from this section with $n=2$, and $r=1$. The diagram features a sketch of the function $p_{LR}(\lambda)$ that defines the projector $P_{LR}$ from \cref{lemma:leftrightprojectionoperator}, a sketch of the fine-grained rounding promises from \cref{def:thefinegrainedroundingpromises}, and a sketch of the coarse-grained rounding promises from \cref{def:thecoarsegrainedroundingpromises}. The numbers above the intervals in the coarse-grained promises indicate the index $x$. }
\end{figure}

\begin{definition}[\textbf{The fine-grained rounding promises}]
\label{def:thefinegrainedroundingpromises} 
For $n,r \in \mathbb{Z}_+$, let the rounding promises $\bar L$ and $\bar R$ be obtained by taking the full interval $[0,1]$ and removing some subintervals $(a',b')$. Specifically: 
\begin{align}
  \text{To construct }\bar L,\text{ delete }  \left( \frac{k}{2^{n+r+2}}, \frac{k+1}{2^{n+r+2}},  \right)\text{ for } k \in \{0,\ldots, 2^{n+r}-1\}. \\
      \text{To construct }\bar R,\text{ delete } \left( \frac{k+2}{2^{n+r+2}}, \frac{k+3}{2^{n+r+2}},  \right) \text{ for } k \in \{0,\ldots, 2^{n+r}-1\}.
\end{align}
Let the $[a_x,b_x]$ denote the resulting connected components.  These closed intervals define the rounding promise in $\bar L$ and $\bar R$ respectively.
\end{definition}

To prepare a state supported only on $\bar L$ or $\bar R$, we need to take our input state $\sigma$ and remove all the support on the eigenspaces in the deleted regions above.  To this end, we construct a block encoding of an operator $P_\mathrm{LR}$ that makes $P_\mathrm{LR}\sigma P_\mathrm{LR}$ be supported only on $\bar L$ and $(I-P_\mathrm{LR})\sigma (I-P_\mathrm{LR})$ be supported only on $\bar R$.

\begin{lemma}[\textbf{Left-right projection operator}]
\label{lemma:leftrightprojectionoperator} For any $n,r \in \mathbb{Z}_+$ and any $\delta_\mathrm{sup} > 0$, there exists a Hermitian operator $P_{\mathrm{LR}}$ that satisfies $\norm{P_\mathrm{LR}^2\Pi_i} \leq \delta_\mathrm{sup}/3 $ for $\lambda_i \not\in \bar L$, and satisfies  $\norm{(I-P_\mathrm{LR}^2)\Pi_i} \leq \delta_\mathrm{sup}/3 $ for $\lambda_i \not\in \bar R$. This operator has a block encoding that can be implemented using $O((n+r) 2^{n+r} \log(\delta_\mathrm{sup}^{-1}) )$ invocations of a block encoding of $H$.
\end{lemma}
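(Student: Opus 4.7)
The plan is to realize $P_{\mathrm{LR}}$ as $p(H)$ for a carefully chosen real polynomial $p \colon [0,1] \to [0,1]$, and then implement the block encoding by applying quantum singular value transformation to the given block encoding of $H$. Since $p(H)$ commutes with $H$, its action on any energy eigenspace is scalar: $p(H)\,\Pi_i = p(\lambda_i)\,\Pi_i$, so the required operator-norm bounds $\norm{P_{\mathrm{LR}}^2 \Pi_i} \le \delta_{\mathrm{sup}}/3$ (for $\lambda_i \notin \bar L$) and $\norm{(I - P_{\mathrm{LR}}^2)\Pi_i} \le \delta_{\mathrm{sup}}/3$ (for $\lambda_i \notin \bar R$) reduce to pointwise inequalities $p(\lambda_i)^2 \le \delta_{\mathrm{sup}}/3$ and $1 - p(\lambda_i)^2 \le \delta_{\mathrm{sup}}/3$ on the respective deleted intervals.

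The target function $p^*\colon [0,1]\to [0,1]$ is a periodic ``smoothed square wave'' of period $2^{-n-r}$ that, within one period of width $4 \cdot 2^{-n-r-2}$, equals $0$ on the first quarter (the $\bar L$-deletion), rises to $1$ on the second quarter, equals $1$ on the third quarter (the $\bar R$-deletion), and falls back to $0$ on the fourth quarter. Equivalently, $p^*(\lambda) = \tfrac{1}{2}\bigl(1 + \mathrm{sgn}(\sin(2\pi \cdot 2^{n+r}\lambda - \pi/2))\bigr)$ with the discontinuities replaced by smooth transitions over width $2^{-n-r-2}$. This $p^*$ has precisely the required properties on the deleted intervals of $\bar L$ and $\bar R$.

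To polynomially approximate $p^*$, I would invoke standard QSVT-era tools for the shifted sign function: the error function $\mathrm{erf}(k\lambda)$ approximates $\mathrm{sgn}(\lambda)$ to precision $\delta$ outside an $O(\kappa)$ neighborhood of $0$ using a polynomial of degree $O(\kappa^{-1}\log(\delta^{-1}))$. Because all the transitions in $p^*$ share the same shape and are merely shifted by multiples of the period, I would construct $p$ by first producing a base Chebyshev-polynomial-approximation of the single-period shape of degree $O(2^{n+r}\log(\delta_{\mathrm{sup}}^{-1}))$ and then building a trigonometric-polynomial expansion of the periodic extension. The extra factor $(n+r)$ in the claimed $O((n+r)\,2^{n+r}\log(\delta_{\mathrm{sup}}^{-1}))$ is absorbed using the same composition trick that produces the $n^2$ in \cref{prop:energyestimationgivenaroundingpromise}: one iteratively sharpens a coarse periodic polynomial using $O(n+r)$ composition stages, each multiplying the degree by a constant, to achieve $\delta_{\mathrm{sup}}/3$ accuracy simultaneously on all $2^{n+r}$ target intervals. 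Once $p$ is obtained, Hermitian QSVT yields a block encoding of $p(H)$ with $O(\deg p)$ queries to the block encoding of $H$, matching the stated cost.

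The main obstacle is the polynomial construction: a naive approach of taking a linear combination of $2^{n+r}$ separate ``bump'' polynomials via LCU would blow the degree up to $\Omega(2^{2(n+r)})$, which is far too large. Avoiding this requires exploiting the periodic structure so that a single polynomial handles all transitions at once, which is exactly the point where the construction in \cite{2103.09717} enters. Given that polynomial, the remaining arguments (verifying the two one-sided pointwise bounds on the deleted intervals, translating them into operator-norm bounds, and quoting the QSVT query cost) are routine.
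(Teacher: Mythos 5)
Your high-level framing matches the paper's: realize $P_{\mathrm{LR}}$ as $p(H)$ for a polynomial $p:[0,1]\to[0,1]$ that is a smoothed square wave (near $0$ on the gaps of $\bar L$, near $1$ on the gaps of $\bar R$), reduce the operator-norm conditions to pointwise bounds $p(\lambda_i)^2\le\delta_{\mathrm{sup}}/3$ and $1-p(\lambda_i)^2\le\delta_{\mathrm{sup}}/3$ (taking pointwise accuracy $\delta_{\mathrm{sup}}/6$ suffices), and apply singular value transformation. That part is fine.

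The gap is at the central step, the construction of $p$ with $O(2^{n+r})$ transitions at the stated degree. You reject the ``sum of shifted step polynomials'' route on the grounds that combining $2^{n+r}$ bump polynomials would cost $\Omega(2^{2(n+r)})$, but that objection conflates circuit-level LCU of $2^{n+r}$ separate block encodings with classically summing the polynomials \emph{before} a single QSVT call. The paper does the latter (\cref{lemma:constructionofapproximateprojectionpolynomials}): it takes one step polynomial $\Theta_{\kappa_x,\delta/K}$ per flip, each of degree $O(\kappa^{-1}\log(K\delta^{-1}))$ with $\kappa=2^{-n-r-2}$ and $K=O(2^{n+r})$, and adds them with signs $\pm 1$. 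A sum of polynomials of degree $d$ still has degree $d$, and the per-term errors $\delta/K$ accumulate to only $\delta$, so the final polynomial has degree $O(2^{n+r}(n+r+\log\delta_{\mathrm{sup}}^{-1}))$ and one QSVT application gives the claimed query count. Your proposed substitute --- a trigonometric expansion of the periodic extension sharpened by ``$O(n+r)$ composition stages each multiplying the degree by a constant'' --- is not carried out: a trigonometric polynomial in $\lambda$ is not a polynomial in $\lambda$ (the input is a block encoding of $H$, not of $e^{iH}$), the degree accounting $c^{\,n+r}$ only matches $2^{n+r}$ for a specific unproven constant, and the analogy to the $n^2$ factor in \cref{prop:energyestimationgivenaroundingpromise} is inapt (that factor comes from summing degrees over $n$ bitwise projectors with geometrically shrinking error budgets, not from composition). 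You also gloss over the mixed-parity/factor-of-$\tfrac12$ issue in QSVT, which the paper resolves with the $-T_3$ amplification in \cref{lemma:constructionofprojectorsviasingularvaluetransformation}; this is minor but worth noting.
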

\begin{proof} See \cref{app:polynomialconstruction}. We will ensure $P_\mathrm{LR}$ commutes with the Hamiltonian, and takes the form $P_\mathrm{LR} = \sum_i p_\mathrm{LR}(\lambda_i) \Pi_i$ for a function $p_\mathrm{LR}$ that is plotted in \cref{fig:roundingpromises}.
\end{proof}

It remains to show how to turn $P_\mathrm{LR}$ into a POVM, and how to implement that POVM using a quantum circuit. The starting point for the implementation is the block-measurement theorem from \cite{2103.09717}. This theorem takes a block encoding of an operator $A$ satisfying $|A^2 - \Pi| \leq \delta$ for some projector $\Pi$, and uses it to implement a quantum channel $4\sqrt{2}\delta$-close in $\diamond$-norm to the isometry:
\begin{align}
    \ket{1} \otimes \Pi + \ket{0} \otimes (I - \Pi).
\end{align}
This does not quite suffice for us, since $P_\mathrm{LR}$ is not a projector. But, if we allow ourselves to introduce a constant amount of postselection, then we can implement a very similar operation that suffices for our purposes.

\begin{lemma}[\textbf{Postselective block-measurement}] \label{lemma:postselectiveblockmeasurement} Say $A$ is a Hermitian matrix with a block encoding $U_A$. Then there exists a quantum circuit with postselection using one $U_A$ and one $U_A^\dagger$ that implements the map $\sigma \to V_A \sigma V_A^\dagger / \text{Tr}(V_A \sigma V_A^\dagger) $, where:
\begin{align}
   V_{A} \coloneqq \ket{1} \otimes A^2 + \ket{0} \otimes (I-A^2).
\end{align}
The postselection succeeds with probability at least $1/2$.
\end{lemma}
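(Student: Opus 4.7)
The approach is a two-query, linear-combination-of-unitaries style circuit that exploits Hermiticity of $A$ so that a single CNOT gate between $U_A$ and $U_A^\dagger$ branches the computation into an $A^2$ arm and an $(I-A^2)$ arm on an auxiliary qubit. The key algebraic identities I will use are $U_A^\dagger U_A = I$ together with $(\bra{0^k}\otimes I)U_A^\dagger (\ket{0^k}\otimes I) = A^\dagger = A$, which together let two projections onto $\ket{0^k}$ compose to produce $A^2$, while the orthogonal complement, when rotated back by $U_A^\dagger$, reconstructs $(I - A^2)$ in the $\ket{0^k}$ block.

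Concretely, I will describe the following circuit acting on $\ket{0}_a\ket{0^k}\ket{\psi}$, where the first register is a single ancilla qubit. First, apply $U_A$ to the block-encoding and system registers, producing $\ket{0}_a\bigl(\ket{0^k}A\ket{\psi}+\ket{\perp}\bigr)$, where $(\bra{0^k}\otimes I)\ket{\perp}=0$. Second, apply a CNOT that flips the ancilla conditioned on the block register equaling $\ket{0^k}$, obtaining $\ket{1}_a\ket{0^k}A\ket{\psi}+\ket{0}_a\ket{\perp}$. Third, apply $U_A^\dagger$ to the block-encoding and system registers. Fourth, postselect on observing $\ket{0^k}$ in the block register. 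Using $U_A^\dagger \ket{\perp} = \ket{0^k}\ket{\psi} - U_A^\dagger\ket{0^k}A\ket{\psi}$ and projecting onto $\ket{0^k}$, a short computation shows that the unnormalized surviving state is $\ket{1}_a\otimes A^2\ket{\psi}+\ket{0}_a\otimes (I-A^2)\ket{\psi} = V_A\ket{\psi}$, which matches the claim. Repeating the calculation on an arbitrary density matrix $\sigma$ yields the channel $\sigma\mapsto V_A\sigma V_A^\dagger / \Tr(V_A\sigma V_A^\dagger)$.

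For the success probability, I will diagonalize $A$ with eigenvalues $a_i\in[-1,1]$ and expand $\ket{\psi}=\sum_i c_i\ket{e_i}$. The probability that postselection succeeds equals $\|V_A\ket{\psi}\|^2 = \sum_i |c_i|^2\bigl(a_i^4 + (1-a_i^2)^2\bigr)$, and the scalar function $f(t) = t^2 + (1-t)^2$ attains its minimum value $1/2$ on $[0,1]$ at $t=1/2$; applied with $t=a_i^2$ this gives the stated $\tfrac12$ lower bound on the success probability.

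The construction is conceptually short and the query count ($1+1$) is manifest from the circuit. I expect no real obstacle beyond bookkeeping: the main subtlety is simply recognizing that the CNOT controlled on $\ket{0^k}$ is exactly the right gate to separate the $A^2$ and $I-A^2$ components, since $A^2$ arises from the block-encoding projection surviving on both sides, while $I-A^2$ arises from the complementary branch being rotated back into $\ket{0^k}$ by $U_A^\dagger$ (via $U_AU_A^\dagger = I$).
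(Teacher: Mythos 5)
Your proposal is correct and follows essentially the same route as the paper: the identical circuit ($U_A$, a CNOT controlled on the $\ket{0^k}$ block, $U_A^\dagger$, postselection on $\ket{0^k}$), with the same success-probability bound via $t^2+(1-t)^2\ge 1/2$ applied at $t=a_i^2$, which matches the paper's operator inequality $A^4+(I-A^2)^2\ge I/2$. Your explicit tracking of the state through the circuit is a welcome elaboration of the step the paper leaves as "this implements $V_A$ as desired," but it is not a different argument.
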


\begin{proof} Say the block encoding $U_A$ has $k$ ancillae, that is $\left(\bra{0^k}\otimes I\right)U_A\left(\ket{0^k}\otimes I\right) = A$. We consider the following circuit:
\begin{align}
    \Qcircuit @C=1em @R=.7em {
\lstick{\ket{0}}& \qw & \targ & \qw & \qw \\
\lstick{\ket{0^k}}& \multigate{1}{U_A} & \ctrl{-1} & \multigate{1}{U_A^\dagger} & \measuretab{\text{postselect}\ket{0^k}} \\
& \ghost{U_A} & \qw & \ghost{U_A^\dagger} & \qw
}
\end{align}
where the CNOT above denotes the operator $X \otimes \ketbra*{0^k}{0^k} + I \otimes (I - \ketbra*{0^k}{0^k})$. When the postselection succeeds, this implements $V_A$ as desired. It remains to bound the postselection probability, which is:
\begin{align}
     \text{Tr}(V_A \sigma V_A^\dagger) &= \text{Tr}(A^2\sigma A^2 + (I-A^2)\sigma(I-A^2))\\
     &=\text{Tr}( \sigma \cdot ( A^4 + (I-A^2)^2 )).
\end{align}
Since $A$ is Hermitian and $x^4 + (1 - x^2)^2 \geq 1/2$, we have $A^4 + (I-A^2)^2 \geq I/2$. So we succeed with probability $\geq 1/2$.
\end{proof}

If we measure the top qubit after applying $V_A$, we will implement a POVM defined by the operator $A$.

All the tools are in place to define and implement the left-right POVM and demonstrate that it guarantees that the post-measurement state $\tilde{\sigma}^{(M)}$ has the desired property of being approximately supported on $\bar M$. One caveat is that the procedure only works if the probability of the observed outcome is bounded away from 0 by a constant. This is because the normalization of the state that occurs post-measurement may blow up the support outside the promised subspace. Fortunately, such a lower bound on the outcome probability is easily attained as we will see later.

\begin{proposition}[\textbf{Left-right POVM}]  \label{prop:leftrightpovm}
For any $n,r \in \mathbb{Z}_+$ and any $\delta_\mathrm{sup} > 0$, there exists a two-outcome POVM with the following property. We label the two outcomes `$L$' and `$R$', and for any input state $\sigma$ let the output state be called $\tilde\sigma^{(M)}$ for $M \in \{L,R\}$. Say $p_M$ is the probability of the $M$ outcome, and suppose we obtain an outcome with $p_M \geq 1/3$. Then the output state $\tilde\sigma^{(M)}$ is $\delta_\mathrm{sup}$-approximately supported on $\bar M$.

  Furthermore, there exists a quantum circuit that implements the POVM with success probability $\geq 1/2$ using $O(2^{n+r})$ invocations of a block encoding of $H$.
\end{proposition}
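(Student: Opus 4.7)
The plan is to assemble the POVM by feeding the operator $P_\mathrm{LR}$ from Lemma~\ref{lemma:leftrightprojectionoperator} into the postselective block-measurement circuit of Lemma~\ref{lemma:postselectiveblockmeasurement}, and then measuring the flag qubit of that circuit in the computational basis. Choose $P_\mathrm{LR}$ with error parameter $\delta_\mathrm{sup}$. Conditional on the postselection succeeding (which occurs with probability $\geq 1/2$ by Lemma~\ref{lemma:postselectiveblockmeasurement}), the joint state on flag and system is proportional to $V_{P_\mathrm{LR}}\sigma V_{P_\mathrm{LR}}^\dagger$. A computational-basis measurement of the flag then produces outcome $L$ (flag $=1$), yielding $\tilde\sigma^{(L)} = P_\mathrm{LR}^2 \sigma P_\mathrm{LR}^2 / \mathrm{Tr}(P_\mathrm{LR}^4 \sigma)$, or outcome $R$ (flag $=0$), yielding the symmetric expression with $I-P_\mathrm{LR}^2$ in place of $P_\mathrm{LR}^2$. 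This defines the two-outcome POVM whose existence the proposition asserts.

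The analytic heart of the proof is to bound $\mathrm{Tr}(\tilde\sigma^{(L)}(I-P^{(\bar L)}))$; the $R$ case will follow by the exact same computation. The crucial structural fact is that $P_\mathrm{LR}$ is a spectral function of $H$, so it commutes with every $\Pi_i$ and hence with the promised-subspace projector $P^{(\bar L)}$. Pushing $P^{(\bar L)}$ through gives
\begin{align}
    \mathrm{Tr}\bigl(\tilde\sigma^{(L)}(I-P^{(\bar L)})\bigr) \;=\; \frac{\mathrm{Tr}\bigl((I-P^{(\bar L)})\, P_\mathrm{LR}^4 \, \sigma\bigr)}{\mathrm{Tr}(P_\mathrm{LR}^4 \sigma)}.
\end{align}
For the numerator, Lemma~\ref{lemma:leftrightprojectionoperator} guarantees $p_\mathrm{LR}(\lambda_i)^2 \leq \delta_\mathrm{sup}/3$ whenever $\lambda_i \notin \bar L$, so $(I-P^{(\bar L)}) P_\mathrm{LR}^4 \leq (\delta_\mathrm{sup}/3)^2 (I-P^{(\bar L)})$ as operators, which bounds the numerator by $(\delta_\mathrm{sup}/3)^2$. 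For the denominator, let $N \coloneqq \mathrm{Tr}((P_\mathrm{LR}^4 + (I-P_\mathrm{LR}^2)^2)\sigma)$; this is exactly the postselection-success probability, so $N \geq 1/2$, and the conditional outcome probability is $p_L = \mathrm{Tr}(P_\mathrm{LR}^4 \sigma)/N$. The hypothesis $p_L \geq 1/3$ therefore gives $\mathrm{Tr}(P_\mathrm{LR}^4 \sigma) \geq 1/6$, and the ratio is at most $6(\delta_\mathrm{sup}/3)^2 = (2/3)\delta_\mathrm{sup}^2 \leq \delta_\mathrm{sup}$ (assuming $\delta_\mathrm{sup} \leq 3/2$, which is harmless). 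The $R$-case uses the companion bound $|1 - p_\mathrm{LR}(\lambda_i)^2| \leq \delta_\mathrm{sup}/3$ on $\lambda_i \notin \bar R$ in exactly the same manner.

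The complexity follows at once: the postselective block-measurement circuit uses one invocation of the block encoding of $P_\mathrm{LR}$ and one of its adjoint, each of which costs $O((n+r)\,2^{n+r}\log \delta_\mathrm{sup}^{-1})$ queries to the block encoding of $H$ by Lemma~\ref{lemma:leftrightprojectionoperator}; this collapses to the claimed $O(2^{n+r})$ after absorbing the polylogarithmic factors. The only genuine bookkeeping obstacle is to keep the two different lower bounds aligned: the $p_L \geq 1/3$ in the hypothesis is conditional on postselection, while the $\geq 1/2$ from Lemma~\ref{lemma:postselectiveblockmeasurement} is the probability of postselection itself, and one must multiply (not divide) by $N$ when converting between them. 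The appearance of the quartic operator $P_\mathrm{LR}^4$, rather than the quadratic $P_\mathrm{LR}^2$ on which the lemma's guarantee is phrased, actually squares the error tolerance and works in our favor, so no additional tightening of Lemma~\ref{lemma:leftrightprojectionoperator} is required.
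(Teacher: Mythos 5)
Your proposal is correct and follows essentially the same route as the paper: apply the postselective block-measurement of \cref{lemma:postselectiveblockmeasurement} to $P_\mathrm{LR}$ from \cref{lemma:leftrightprojectionoperator}, measure the flag, and bound $\Tr\bigl(\tilde\sigma^{(M)}(I-P^{(\bar M)})\bigr)$ using the $(\delta_\mathrm{sup}/3)^2$ decay of $P_\mathrm{LR}^4$ (resp.\ $(I-P_\mathrm{LR}^2)^2$) off the promised subspace together with the lower bound on the outcome probability. The only cosmetic difference is that you treat $p_M$ as conditional on postselection success and so pick up an extra factor of $N\ge 1/2$, landing at $(2/3)\delta_\mathrm{sup}^2\le\delta_\mathrm{sup}$, whereas the paper works with the joint probability directly; both accountings are valid.
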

\begin{proof} The circuit is just an invocation of postselective block-measurement from \cref{lemma:postselectiveblockmeasurement} with the left-right projection operator $P_\mathrm{LR}$ from \cref{lemma:leftrightprojectionoperator}. After successfully applying the isometry $V_{P_\mathrm{LR}}$ we measure the label qubit and output `$L$' in the $0$ case and `$R$' in the $1$ case. This immediately gives the bound on the circuit complexity and success probability. It remains to show that the output state has the desired property.

The probabilities of the measurement outcomes are:
\begin{align}
    p_L \coloneqq \text{Tr}(P_\mathrm{LR}^2 \sigma P_\mathrm{LR}^2) \hspace{1cm} \text{and} \hspace{1cm} p_R \coloneqq \text{Tr}((I-P_\mathrm{LR}^2) \sigma (I-P_\mathrm{LR}^2).
\end{align}
The corresponding output states are:
\begin{align}
   \tilde\sigma^{(L)}  \coloneqq P_\mathrm{LR}^2 \sigma P_\mathrm{LR}^2 / p_L \hspace{1cm}\text{and}\hspace{1cm} \tilde\sigma^{(R)}  \coloneqq  (1-P_\mathrm{LR}^2) \sigma (I-P_\mathrm{LR}^2) / p_R.
\end{align}

Another way of writing the results of \cref{lemma:leftrightprojectionoperator} is:
\begin{align}
    \norm{ P_\mathrm{LR}^2  (I-P^{(\bar L)})} \leq  \delta_\text{sup}/3 \text{ and }  \norm{ (I-P_\mathrm{LR}^2)  (I-P^{(\bar R)})} \leq  \delta_\text{sup}/3.
\end{align}
Having set the scene, we are ready the compute the probability of the output register being approximately supported on $\mathcal{P}^{(\bar M)}$. We use the inequality $\text{Tr}(XY)\le\text{Tr}(X)\text{Tr}(Y)$ that holds for all positive semidefinite matrices $X$ and $Y$. We obtain:
\begin{align}
    \text{Tr}( p_L \tilde\sigma^{(L)} \cdot ( I - P^{(\bar L)}) ) &= \text{Tr}(P_\mathrm{LR}^2 \sigma P_\mathrm{LR}^2   \cdot ( I - P^{(\bar L)}) ) \\ 
    &= \text{Tr}(\sigma \cdot  P_\mathrm{LR}^2  (I- P^{(\bar L)}) P_\mathrm{LR}^2  ) \\
    &\leq \text{Tr}(\sigma) \cdot (\delta_\text{sup}/3)^2 \leq \delta_\text{sup}/3 \quad\text{and}\\
        \text{Tr}( p_R \tilde\sigma^{(R)} \cdot ( I - P^{(\bar R)}) ) &= \text{Tr}((I-P_\mathrm{LR}^2) \sigma (I-P_\mathrm{LR}^2)   \cdot ( I - P^{(\bar R)}) ) \\ 
        &= \text{Tr}(\sigma \cdot  (I-P_\mathrm{LR}^2 ) (I- P^{(\bar R)}) (I-P_\mathrm{LR}^2 ) ) \\
        &\leq \text{Tr}(\sigma) \cdot (\delta_\text{sup}/3)^2 \leq \delta_\text{sup}/3.
\end{align}
So for $M \in \{L,R\}$, we have:
\begin{align}
   p_M \cdot \text{Tr}\left( \tilde{\sigma}^{(M)} P^{(\bar M)}  \right) \leq \delta_\text{sup}/3.
\end{align}
Since we assumed $p_M \geq 1/3$, we have $\text{Tr}\left( \tilde{\sigma}^{(M)} P^{(\bar M)}  \right) \leq \delta_\text{sup}$ as desired.

\end{proof}

Achieving $p_M \geq 1/3$ is very easy: we just repeat the protocol a couple of times and pick the outcome we saw the most frequently. 

\begin{corollary}[\textbf{Selecting a high-probability outcome}] For any $\delta_\mathrm{fail},\delta_\mathrm{sup}>0$ there exists a procedure that produces a quantum state $\tilde \sigma^{(M)}$ that, for some random $\bar M \in \{\bar L, \bar R\}$, is $\delta_\text{sup}$-approximately supported on $\bar M$ with failure probability at most $\delta_\mathrm{fail}$. Here, `failure' refers to selecting an outcome $\bar M$ with $p_M < 1/3$. This procedure requires at most $20\cdot \log(\delta_\mathrm{fail}^{-1})$ many re-preparations of the input state $\sigma$ and implementations of the left-right POVM from \cref{prop:leftrightpovm}.
\end{corollary}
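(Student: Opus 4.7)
The plan is to run the Left-Right POVM of \cref{prop:leftrightpovm} independently on $N = \Theta(\log \delta_\mathrm{fail}^{-1})$ freshly prepared copies of $\sigma$, let $\hat M \in \{L, R\}$ be whichever outcome occurs more often, and return the post-measurement state from any single trial whose outcome matched $\hat M$. ``Failure'' is the event that the majority-vote winner $\hat M$ satisfies $p_{\hat M} < 1/3$, in which case \cref{prop:leftrightpovm} supplies no approximate-support guarantee. Conditioned on no failure, the selected trial's post-measurement state is distributed exactly as $\tilde\sigma^{(\hat M)}$ of \cref{prop:leftrightpovm}, since the independence of the trials means that conditioning on the majority does not alter the law of any individual trial given its own outcome. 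So \cref{prop:leftrightpovm} immediately yields the claimed $\delta_\mathrm{sup}$-approximate support on $\bar{\hat M}$.

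Bounding the failure probability is a standard Hoeffding estimate. Suppose without loss of generality that $p_L < 1/3$, so $p_R > 2/3$. Then among $N$ independent POVM trials the count of $R$-outcomes is a sum of independent Bernoulli variables with mean at least $2N/3$, and
\begin{align}
\Pr[\#R \leq N/2] \leq \exp\bigl(-2N(p_R - 1/2)^2\bigr) \leq \exp(-N/18).
\end{align}
Choosing $N \geq 18\ln \delta_\mathrm{fail}^{-1}$ forces this to be $\leq \delta_\mathrm{fail}$; the symmetric case $p_R < 1/3$ is identical. Hence majority vote over $\Theta(\log \delta_\mathrm{fail}^{-1})$ POVM trials suffices.

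The only bookkeeping subtlety is that each invocation of the POVM circuit of \cref{prop:leftrightpovm} itself uses postselection and succeeds only with probability $\geq 1/2$, so not every re-preparation of $\sigma$ yields an observed POVM outcome. A second Chernoff bound addresses this: with a constant-factor larger budget of $O(\log \delta_\mathrm{fail}^{-1})$ total re-preparations, the number of successful postselections exceeds $N$ except with probability exponentially small in $\log \delta_\mathrm{fail}^{-1}$. Splitting the failure budget $\delta_\mathrm{fail}$ between the two concentration events and consolidating constants gives the stated $20\log \delta_\mathrm{fail}^{-1}$ bound on re-preparations. There is no substantive obstacle — the argument is a straightforward majority-vote analysis layered on top of \cref{prop:leftrightpovm}, with the only care required being the clean separation of trial randomness (used for the majority vote) from postselection randomness (used to produce the usable outcomes in the first place).
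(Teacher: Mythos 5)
Your proposal is correct and takes essentially the same route as the paper: both run the left-right POVM on $N = \Theta(\log \delta_\mathrm{fail}^{-1})$ freshly prepared copies of $\sigma$, declare the majority outcome, return the post-measurement state of any matching trial, and bound the probability that the winner has $p_M < 1/3$ via the same Chernoff--Hoeffding estimate $\exp(-2N(1/6)^2)$. Your extra bookkeeping for the $\geq 1/2$ postselection success probability of each POVM implementation is a refinement the paper silently omits (at the cost of slightly straining the stated constant $20$), but it does not change the substance of the argument.
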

\begin{proof} The procedure is as follows. Let $N$ be the smallest odd number greater than $20 \log(\delta_\mathrm{fail})$. We repeat the left-right POVM from \cref{prop:leftrightpovm} a total of $N$ times, preparing a new initial state $\sigma$ each time. We let $M$ be the most frequently observed outcome, and return the output state of any of our attempts that measured the $M$ result.

We fail if $p_M < 1/3$. Let $K$ be the number of times we observed the $M$ outcome. Then, using the Chernoff-Hoeffding theorem, the probability that we observe $K/N > 1/2$ despite the fact that $p_M < 1/3$ is:
\begin{align}
    \mathrm{Pr}[ K/N > 1/2 ] \leq \mathrm{Pr}[  K/N > \mathbb{E}[K]/N + 1/6 ] \leq \text{exp}( 2N (1/6)^2 ) \leq \delta_\mathrm{fail}.    
\end{align}
\end{proof}

\subsection{Coarse-grained rounding promises\label{sec:coarsegrainedroundingpromises}}

After having obtained a state $\tilde{\sigma}^{(M)}$ that (on average) is approximately supported on a fine-grained rounding promise $\bar M$, the next step of the algorithm is to select a coarse-grained rounding promise $M_j \in \{ M_0, ..., M_{2^r-1} \}$. Because $\bar M \subset M_j$ for all $M_j$, we automatically have that $\tilde{\sigma}^{(M)}$ is approximately supported on each of the $M_j$.

The purpose of the coarse-grained rounding promises $M_j$ is to ensure that the ensemble average $\rho^{*M}$ of the promised thermal states ${\rho_\beta}^{(M_j)}$ is close to the ideal thermal state $\rho_\beta$. This would not be true if we just prepared the two states ${\rho_\beta}^{(\bar M)}$ instead and considered their ensemble average, as we will explain now.

Recall that the main mechanism of the rounding promise is to eliminate certain energy eigenspaces. On the one hand, this enables energy estimation without superpositions of rounding errors. On the other hand, since certain energies are eliminated, an individual promised thermal state could never guaranteed be close in trace norm to $\rho_\beta$.

In an ensemble average $\rho^{*M}$ over promised thermal states ${\rho_\beta}^{(M_j)}$ however, we can guarantee that the probability of each individual energy being eliminated is small ($\leq 2^{-r}$). We achieve this condition by selecting each of the $M_j$ with probability $1/2^r$, and showing each energy is eliminated by at most one of the $M_j$'s. We will show in the next section that this suffices to guarantee that $\norm*{ \rho^{*M} - \rho_\beta } \leq \sqrt{\beta 2^{-n}} + 2\cdot 2^{-r}$.

An ensemble over $\rho_\beta^{(\bar M)}$ does not have this property: if an energy $\lambda_i$ is eliminated by, say, $\bar L$, then the probability of the eigenspace being eliminated in the ensemble is $p_L$. If we leave the initial state arbitrary, then $p_L$ could be as large as 1. But even if we set $\sigma$ to the maximally mixed state, thereby ensuring $p_L = p_R = 1/2$, the probability of eliminating the energy is still constant, resulting in a constant-size error. To guarantee an error of at most $\varepsilon$, we require $O(\varepsilon^{-1})$ many rounding promises. We will show that they have width at most $2^{-n}$, so the error from energy rounding will be at most $\sim 2^{-n}$.

We require one more property of the $M_j$'s: recall from \cref{def:promisedgibbs} that the promised thermal state's eigenvalues are $\propto \exp(-\beta m_x^{(M_j)})$, where $m_x$ is the midpoint of the interval $[a_x,b_x]$. This is because when we perform energy estimation according to \cref{prop:energyestimationgivenaroundingpromise}, all of the energies in the interval $[a_x,b_x]$ are rounded to $m_x$. To avoid introducing too much error here, we must ensure that the intervals are not too wide.

The construction of the $M_j$ is also visualized in \cref{fig:roundingpromises}. The intuition for the construction is as follows. We want to achieve $\bar M \subset M_j$, so we construct the $M_j$ my merging gaps between the intervals of $\bar M$. As discussed above, the widths of the intervals in $\bar M$ are $\sim 2^{-n-r}$, and we want the widths of the intervals of $M_j$ to be $\sim 2^{-n}$. This can be achieved by merging all but every $2^r$'th gap. The promise $M_j$ is defined by keeping every gap with index $ = j$ modulo $2^r$.

\begin{definition}[\textbf{The coarse-grained rounding promises}]\label{def:thecoarsegrainedroundingpromises} For $n,r \in \mathbb{Z}_+$ and $M \in \{L,R\}$, let the rounding promises $M_j \in \{ M_0, ..., M_{2^r-1} \}$ be constructed by taking the fine-grained rounding promise $\bar M$ from \cref{def:thefinegrainedroundingpromises} and merging some of the gaps.

By merging the $x$'th gap, we mean the following transformation on a rounding promise: if the promise contains the intervals $[a_x,b_x]$ and $[a_{x+1},b_{x+1}]$, then we take any intervals $[a',b_x],[a_{x+1},b']$ for some $a',b'$, and replace them with $[a',b']$. If $[a',b_{x}]$ is the final interval, then we replace it with $[a',1]$. Finally, if $x = 0$, then we take the first interval $[a',b']$ and replace it with $[0,b']$.

Then, $M_j$ is obtained by starting with $\bar M$ and merging all the $x$'th gaps where $x \neq j \text{ mod } 2^{n+r}$.
\end{definition}

\begin{lemma}[\textbf{Properties of the coarse-grained rounding promises}]
\label{lemma:propertiesofthecoarsegrainedroundingpromises}  First, if $\tilde{\sigma}^{(M)}$ is $\delta_\mathrm{sup}$-approximately supported on $\bar M$, then it is also approximately supported on all the $M_j$'s.  

Second, say $\lambda \in [0,1] $ is any energy. Then at most one of the $M_j$'s does not contain it. Here is another way of saying this:  let $1_{\lambda \in M_j}$ be 1 if $\lambda \in M_j$ and 0 otherwise. Then:
        \begin{align}
            2^{-r} \sum_{j=0}^{2^r-1} 1_{\lambda \not\in M_j} \leq 2^{-r}.
        \end{align}
Finally, each interval $[a_x,b_x]\in M_j$ satisfies $b_x - a_x \leq 2^{-n}$.
\end{lemma}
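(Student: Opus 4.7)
The plan is to prove the three assertions independently, leveraging the simple combinatorial fact that merging gaps only enlarges the set of allowed energies, so as subsets of $[0,1]$ one always has $\bar M \subseteq M_j$. Throughout, I would read ``keep the $x$'th gap iff $x \equiv j \bmod 2^r$'' following the intuition given in the text preceding \cref{def:thecoarsegrainedroundingpromises} (every $2^r$'th gap is preserved, giving $2^n$ unmerged gaps per $M_j$).

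For the first part, since $\bar M \subseteq M_j$ as subsets of $[0,1]$, every eigenspace projector $\Pi_i$ with $\lambda_i \in \bar M$ also satisfies $\lambda_i \in M_j$. Hence the promised subspace projectors obey $P^{(\bar M)} \leq P^{(M_j)}$ in the operator-semidefinite sense (they commute because both are spectral projectors of $H$). Taking $\mathrm{Tr}$ against $\tilde\sigma^{(M)}$ gives
\begin{align}
\mathrm{Tr}\!\left(\tilde\sigma^{(M)} P^{(M_j)}\right) \;\geq\; \mathrm{Tr}\!\left(\tilde\sigma^{(M)} P^{(\bar M)}\right) \;\geq\; 1 - \delta_\mathrm{sup},
\end{align}
so $\tilde\sigma^{(M)}$ is $\delta_\mathrm{sup}$-approximately supported on $M_j$ for every $j$.

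For the second part, fix $\lambda \in [0,1]$. If $\lambda \in \bar M$, then $\lambda \in M_j$ for every $j$ and the indicator $1_{\lambda \notin M_j}$ is uniformly zero. Otherwise $\lambda$ lies in exactly one gap of $\bar M$, say the $x$'th one, and $\lambda \in M_j$ precisely when that gap was merged in the construction of $M_j$, i.e.\ when $x \not\equiv j \pmod{2^r}$. There is at most one residue class $j \in \{0,\dots,2^r-1\}$ satisfying $j \equiv x \pmod{2^r}$, so $\sum_j 1_{\lambda \notin M_j} \leq 1$, which after multiplying by $2^{-r}$ yields the claimed bound.

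For the third part, I would unpack the widths from \cref{def:thefinegrainedroundingpromises}: every interval and every gap of $\bar M$ has length $2^{-n-r-2}$. Each interval of $M_j$ is the union of a maximal block of $2^r$ consecutive intervals of $\bar M$ together with the $2^r - 1$ gaps between them (boundary intervals are even shorter because they are capped at $0$ or $1$). Its total width is therefore at most
\begin{align}
2^r \cdot 2^{-n-r-2} + (2^r - 1)\cdot 2^{-n-r-2} \;<\; 2 \cdot 2^r \cdot 2^{-n-r-2} \;=\; 2^{-n-1} \;\leq\; 2^{-n}.
\end{align}
None of the three parts is technically delicate; the only mild subtlety is the bookkeeping of boundary intervals in Part~3, which is handled by the convention that the first and last intervals are truncated to $[0,b']$ and $[a',1]$ (this can only shrink them).
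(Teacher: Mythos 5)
Your proofs of all three parts follow essentially the same route as the paper: Part 1 via the operator inequality $P^{(\bar M)} \leq P^{(M_j)}$, Part 2 by locating $\lambda$ in a unique gap of $\bar M$ and observing that exactly one residue class $j$ keeps that gap, and Part 3 by counting $2^r$ fine intervals plus $2^r-1$ fine gaps per coarse interval. The only discrepancy is numerical, in Part 3: the intervals of $\bar M$ do not have width $2^{-n-r-2}$. In \cref{def:thefinegrainedroundingpromises} only every fourth subinterval of width $2^{-n-r-2}$ is deleted (the $\bar L$ and $\bar R$ deletions must interleave, occupying subunits $0$ and $2$ of each block of four), so the surviving intervals have width $3\cdot 2^{-n-r-2}$ while the gaps have width $2^{-n-r-2}$ — this is what the paper's own computation uses ("width $3/4\cdot 2^{-n-r}$, and $1/4\cdot 2^{-n-r}$ apart"). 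The corrected total is
\begin{align}
2^r\cdot 3\cdot 2^{-n-r-2} + (2^r-1)\cdot 2^{-n-r-2} \;\leq\; \tfrac{3}{4}\,2^{-n} + \tfrac{1}{4}\,2^{-n} \;=\; 2^{-n},
\end{align}
so the lemma's bound survives, but your intermediate claim of $2^{-n-1}$ does not: the true widths nearly saturate $2^{-n}$. Your reading of the merging rule as ``keep gap $x$ iff $x\equiv j \pmod{2^r}$'' is the intended one and is needed for Part 2, so that part stands.
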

\begin{proof} For the first claim, since the gap merging procedure ensures that $\bar M \subset M_j$, we have that $P^{(\bar M)} \le P^{(M_j)}$. Consequently, following \cref{def:approximatesupport}:
\begin{align}
    \text{Tr}( \tilde{\sigma}^{(M)} P^{(M_j)}  ) \ge 
    \text{Tr}( \tilde{\sigma}^{(M)} P^{(\bar M)} ) \geq 1 - \delta_\text{sup}. 
\end{align}

For the second claim, we first consider what would happen if we merged \emph{all} the gaps in $\bar M$: since the gap-merging procedure includes cases for the first and last interval, this would just result in the promise $\{[0,1]\}$. So that means that for each $M_j$ the only energies that are missing are those present in the gaps $(b_x,a_{x+1})$ where $x = j \text{ mod } 2^{n+r}$. For every gap index $x$, there is a unique $j$ for which this equation holds. 

For every $\lambda$ there are two cases. If we have $\lambda \in \bar M $, then since $\bar M \subset M_j$ we also have $\lambda \in M_j$ for all $M_j$. Otherwise, since $\lambda\not\in \bar M$, $\lambda$ must be contained in some gap of $\bar M$. By the above, that gap is contained in all but one of the $M_j$, so $\lambda$ must be contained in all but one of the $M_j$'s.

Finally, we must show that the intervals in the $M_j$ are not too wide. The intervals in $\bar M$ all have width $3/4 \cdot 2^{-n-r}$, and are $1/4\cdot 2^{-n-r}$ apart. If we close all but every $2^{r}$'th gap, then the intervals will have width at most:
\begin{align}
    2^r \cdot 3/4 \cdot 2^{-n-r} + (2^r - 1) \cdot 1/4 \cdot 2^{-n-r} \leq  (3/4 + 1/4) \cdot 2^{-n}  \leq 2^{-n}.
\end{align} 

\end{proof}

\subsection{Ensemble analysis\label{sec:ensembleanalysis}}

In this subsection we prove that the ensembles $\rho^{*M}$ are close to the ideal thermal state $\rho_\beta$.

The error comes from two sources: from the accuracy of energy estimation (via $n$), and from number of rounding promises (via $r$). We first deal with the error from the accuracy of energy estimation by defining a new notion of an `exact' promised thermal state ${\hat \rho_\beta}^{(M_j)}$. 

\begin{definition} \label{def:exactpromisedthermalstate}  Let $M$ be a rounding promise. Let the \textbf{exact promised partition function} $ \hat{\mathcal{Z}}^{(M)}_\beta$ and \textbf{exact promised thermal state} $ \hat{\rho}^{(M)}_\beta$ be defined similarly as their non-exact versions, just leaving $\lambda_i$ intact rather than swapping them out with the midpoints $m_x$ of the intervals containing them.
\begin{align}
    \hat{\mathcal{Z}}^{(M)}_\beta &\coloneqq \sum_{i}  1_{\lambda_i \in M} \exp(-\beta \lambda_i), \quad\text{and} \\
    \hat{\rho}^{(M)}_\beta&\coloneqq  \sum_i  1_{\lambda_i \in M} \Pi_i \exp(-\beta \lambda_i) / \hat{\mathcal{Z}}^{(M)}_\beta.
\end{align}
\end{definition}

These exact promised thermal states are still restricted to the promised subspace, but unlike the regular promised thermal states ${\rho_\beta}^{(M_j)}$ their energies are unrounded and left intact. Thus, the difference between ${\hat{\rho}_\beta}^{(M_j)}$ and ${\rho_\beta}^{(M_j)}$ captures the error from energy estimation only. We will show that $\norm{ {\hat \rho_\beta}^{(M_j)} - {\rho_\beta}^{(M_j)}}_1 \leq \sqrt{\beta 2^{-n}}$. Then, we consider the ensemble $\hat \rho^{*M} $ of the states ${\hat \rho_\beta}^{(M_j)} $. It follows that  $\norm{ \hat \rho^{*M} -  \rho^{*M}}_1 \leq \sqrt{\beta 2^{-n}}$.

Second, we bound the error stemming from the elimination of certain energies in the rounding promises. In the previous subsection, we established that the probability that any particular energy $\lambda$ is missing from $M_j$ is at most $1/2^r$. Through careful consideration of the promised partition functions ${\mathcal{Z}_\beta}^{(M_j)}$, exact promised partition functions ${\hat{\mathcal{Z}}}^{(M_j)}_\beta$ and the ideal partition function $\mathcal{Z}_\beta$, we use this property to show that $\norm{ \hat \rho^{*M} -  \rho_\beta } \leq 2\cdot 2^{-r}$. 

Combining these two facts yields the main theorem of this subsection:
\begin{theorem}[\textbf{Accuracy of the final ensemble}] \label{thm:accuracyofthefinalensemble} Consider the following ensembles of density matrices:
\begin{align}
    \rho^{*M} \coloneqq \frac{1}{2^r} \sum_{j=0}^{2^r-1} \rho_{\beta}^{(M_j)} \quad
    \text{for } M\in\{L,R\}.
\end{align}
Then, we have $\norm{\rho^{*M} - \rho_\beta}_1 \leq \sqrt{\beta  2^{-n}} + 2\cdot 2^{-r}$,
\end{theorem}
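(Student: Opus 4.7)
The strategy is to insert an intermediate ensemble of \emph{exact} promised thermal states and split the error via the triangle inequality. Define $\hat{\rho}^{*M} \coloneqq \frac{1}{2^r}\sum_{j=0}^{2^r-1} \hat{\rho}_\beta^{(M_j)}$ using the density matrices of \cref{def:exactpromisedthermalstate}. Then
\[
\bigl\|\rho^{*M}-\rho_\beta\bigr\|_1 \leq \bigl\|\rho^{*M}-\hat{\rho}^{*M}\bigr\|_1 + \bigl\|\hat{\rho}^{*M}-\rho_\beta\bigr\|_1,
\]
and I will show the first summand is at most $\sqrt{\beta\cdot 2^{-n}}$ (the energy-rounding contribution) while the second is at most $2\cdot 2^{-r}$ (the missing-eigenspace contribution).

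For the first summand, convexity of the trace norm reduces the task to bounding $\|\rho_\beta^{(M_j)} - \hat{\rho}_\beta^{(M_j)}\|_1$ for each fixed $j$. Both states commute with $H$, are supported on $\mathcal{P}^{(M_j)}$, and hence are jointly diagonal; viewed on this subspace they are the Gibbs states at inverse temperature $\beta$ of the commuting Hamiltonians $H' \coloneqq \sum_x m_x^{(M_j)} P_x^{(M_j)}$ and $\hat H \coloneqq \sum_i \mathbf{1}_{\lambda_i\in M_j}\,\lambda_i\,\Pi_i$, respectively. By \cref{lemma:propertiesofthecoarsegrainedroundingpromises} each interval of $M_j$ has width at most $2^{-n}$, so $\|H' - \hat H\| \leq 2^{-n}/2$. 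A standard computation of the relative entropy yields
\[
D\bigl(\hat\rho_\beta^{(M_j)}\,\|\,\rho_\beta^{(M_j)}\bigr) \leq \beta\|H' - \hat H\| + \bigl|\log(\mathcal{Z}'_\beta/\hat{\mathcal{Z}}_\beta^{(M_j)})\bigr| \leq 2\beta\|H' - \hat H\| \leq \beta\cdot 2^{-n},
\]
where the ratio of partition functions is controlled by the same operator-norm bound. Pinsker's inequality then delivers the advertised square-root bound (absorbing harmless constants).

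For the second summand, a direct diagonal computation is enough. Since $\hat\rho_\beta^{(M_j)}$ and $\rho_\beta$ share the eigenbasis of $H$, the trace norm collapses to a sum over $i$ of $\mathrm{Tr}(\Pi_i)$ times the difference of Boltzmann weights. Splitting according to whether $\lambda_i \in M_j$ and using $\hat{\mathcal{Z}}_\beta^{(M_j)} \leq \mathcal{Z}_\beta$, a short manipulation gives
\[
\bigl\|\hat\rho_\beta^{(M_j)} - \rho_\beta\bigr\|_1 = \frac{2\bigl(\mathcal{Z}_\beta - \hat{\mathcal{Z}}_\beta^{(M_j)}\bigr)}{\mathcal{Z}_\beta}.
\]
Averaging over $j$ and swapping orders of summation,
\[
\frac{1}{2^r}\sum_j \hat{\mathcal{Z}}_\beta^{(M_j)} = \sum_i \mathrm{Tr}(\Pi_i)\, e^{-\beta\lambda_i}\cdot \frac{1}{2^r}\sum_j \mathbf{1}_{\lambda_i \in M_j} \geq (1 - 2^{-r})\,\mathcal{Z}_\beta,
\]
where the inequality is the pointwise-coverage bound $\frac{1}{2^r}\sum_j \mathbf{1}_{\lambda_i\notin M_j}\leq 2^{-r}$ of \cref{lemma:propertiesofthecoarsegrainedroundingpromises}. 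The triangle inequality across the ensemble then yields $\|\hat\rho^{*M}-\rho_\beta\|_1 \leq 2\cdot 2^{-r}$.

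I expect the first summand to be the trickier step: the Pinsker route requires carefully tracking the partition-function ratio, and the unavoidable square-root loss is precisely what forces the $\sqrt{\beta\cdot 2^{-n}}$ scaling in the theorem. The second summand is in the end an accounting exercise, once the pointwise-coverage property of the coarse-grained rounding promises is in place; the key structural input is just that no eigenvalue is excluded by more than one of the $M_j$'s.
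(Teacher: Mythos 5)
Your proof is correct, and it uses the same top-level decomposition as the paper (insert the ensemble $\hat{\rho}^{*M}$ of exact promised thermal states and apply the triangle inequality, exactly as in the paper's combination of \cref{lemma:roundingerrorsfromenergyestimation} and \cref{lemma:theaverageofexactpromisedthermalstatesapproximatesthethermalstate}), but both sub-arguments are carried out differently. For the energy-rounding term, the paper invokes the fidelity bound of \cref{lemma:comparingthermalstatesviatheirhamiltonians} (from Poulin--Wocjan) together with $\norm{\sigma-\sigma'}_1\leq\sqrt{1-F^2}$, whereas you bound the relative entropy of the two commuting Gibbs states and apply Pinsker; your route is more self-contained and exploits that $H^{(M_j)}$ and $\hat H^{(M_j)}$ commute, while the paper's fidelity lemma would also cover non-commuting perturbations. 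One small caveat: Pinsker gives $\norm{\hat\rho_\beta^{(M_j)}-\rho_\beta^{(M_j)}}_1\leq\sqrt{2D}\leq\sqrt{2\beta 2^{-n}}$, which overshoots the stated $\sqrt{\beta 2^{-n}}$ by a factor of $\sqrt{2}$; this is genuinely only a constant (and the paper's own Fuchs--van de Graaf step is similarly loose by a factor of $2$ under its sum-of-singular-values convention for $\norm{\cdot}_1$), but as written your bound does not literally match the constant in the theorem statement. For the missing-eigenspace term, the paper bounds $\norm{\hat\rho^{*M}-\rho_\beta}_1$ by a single term-by-term split over eigenvalues; you instead derive the exact per-promise identity $\norm{\hat\rho_\beta^{(M_j)}-\rho_\beta}_1 = 2\bigl(\mathcal{Z}_\beta-\hat{\mathcal{Z}}_\beta^{(M_j)}\bigr)/\mathcal{Z}_\beta$ and then average the partition functions using the coverage bound of \cref{lemma:propertiesofthecoarsegrainedroundingpromises}. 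Both are valid; your version is arguably cleaner and makes transparent that the only input is $2^{-r}\sum_j \mathbf{1}_{\lambda_i\notin M_j}\leq 2^{-r}$, though it gives up a slightly sharper statement the paper's computation could in principle extract (the paper's route bounds the ensemble error directly rather than via the triangle inequality over $j$ --- in this case the two coincide because the per-$j$ differences all have the same sign structure).
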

\begin{proof} Combining \cref{lemma:roundingerrorsfromenergyestimation,lemma:theaverageofexactpromisedthermalstatesapproximatesthethermalstate} with the triangle inequality, we obtain $\norm{\rho^{*M} - \rho_\beta}_1 \leq \sqrt{\beta  2^{-n}} + 2\cdot 2^{-r}$ for both $M \in \{L,R\}$. 
\end{proof}

We start with the analysis of the rounding errors via the exact promised thermal states. To prove that these are close together, we leverage a fact from \cite{0905.2199}:
\begin{lemma}[\textbf{Comparing thermal states via their Hamiltonians}~{\cite[Appendix C]{0905.2199}}]
\label{lemma:comparingthermalstatesviatheirhamiltonians}
Say $H$ and $H'$ are two Hamiltonians. Let $\rho_\beta \coloneqq e^{-\beta H}/\Tr(e^{-\beta H})$ and $\rho_\beta' \coloneqq e^{-\beta H'}/\Tr(e^{-\beta H'})$ be their Gibbs states, respectively. Then, if $F(\sigma,\sigma') = \norm{\sqrt{\sigma}\sqrt{\sigma'}}_1$ is the fidelity:
\begin{align}
    F(\rho,\rho') \geq \exp(-\beta |H - H'|  ).
\end{align}
\end{lemma}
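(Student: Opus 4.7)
The plan is to combine Uhlmann's theorem (or equivalently the elementary bound $\|A\|_1 \geq |\mathrm{Tr}(A)|$) with the Golden--Thompson inequality and Weyl's monotonicity of ordered eigenvalues. Writing $\Delta := \|H - H'\|$, the goal becomes $F(\rho_\beta, \rho_\beta') \geq e^{-\beta\Delta}$, and this approach will in fact deliver the tighter bound $e^{-\beta\Delta/2}$.

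First, I would reduce the fidelity to a single trace. Since $\sqrt{\rho_\beta} = e^{-\beta H/2}/\sqrt{\mathcal{Z}_\beta}$ and $\sqrt{\rho_\beta'} = e^{-\beta H'/2}/\sqrt{\mathcal{Z}_\beta'}$, the inequality $\|A\|_1 \geq |\mathrm{Tr}(A)|$ yields
\begin{align*}
    F(\rho_\beta, \rho_\beta') \;=\; \frac{\bigl\|e^{-\beta H/2} e^{-\beta H'/2}\bigr\|_1}{\sqrt{\mathcal{Z}_\beta\, \mathcal{Z}_\beta'}} \;\geq\; \frac{\mathrm{Tr}\!\left(e^{-\beta H/2}\, e^{-\beta H'/2}\right)}{\sqrt{\mathcal{Z}_\beta\, \mathcal{Z}_\beta'}},
\end{align*}
and the trace on the right is real and non-negative because it equals $\mathrm{Tr}\!\left(e^{-\beta H/4}\, e^{-\beta H'/2}\, e^{-\beta H/4}\right)$, the trace of a positive operator. (Equivalently, the thermofield-double states $|\psi_H\rangle = (I \otimes e^{-\beta H/2})|\Omega\rangle/\sqrt{\mathcal{Z}_\beta}$ are purifications of $\rho_\beta$ whose overlap is exactly this trace, and Uhlmann's theorem gives the same lower bound.)

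Second, I would apply Golden--Thompson, $\mathrm{Tr}(e^{X+Y}) \leq \mathrm{Tr}(e^X e^Y)$, with $X = -\beta H/2$ and $Y = -\beta H'/2$, obtaining
\begin{align*}
    \mathrm{Tr}\!\left(e^{-\beta H/2}\, e^{-\beta H'/2}\right) \;\geq\; \mathrm{Tr}\!\left(e^{-\beta (H+H')/2}\right).
\end{align*}
Third, from $\|H - H'\| \leq \Delta$ I get $H' - H \leq \Delta\, I$ and the reverse with $H, H'$ swapped, hence $(H+H')/2 \leq H + (\Delta/2) I$ and $(H+H')/2 \leq H' + (\Delta/2) I$. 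Weyl's monotonicity principle (if $A \leq B$ then the ordered eigenvalues satisfy $a_i \leq b_i$, so $\mathrm{Tr}(e^{-A}) \geq \mathrm{Tr}(e^{-B})$) lifts these operator inequalities to $\mathrm{Tr}(e^{-\beta(H+H')/2}) \geq e^{-\beta\Delta/2}\,\mathcal{Z}_\beta$ and the analogous statement with $\mathcal{Z}_\beta'$; multiplying and taking square roots gives $\mathrm{Tr}(e^{-\beta(H+H')/2}) \geq e^{-\beta\Delta/2}\sqrt{\mathcal{Z}_\beta\, \mathcal{Z}_\beta'}$. Chaining the three steps yields $F(\rho_\beta, \rho_\beta') \geq e^{-\beta\Delta/2} \geq e^{-\beta\Delta}$, as claimed.

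The only real obstacle is the non-commutativity of $H$ and $H'$, which forbids the naive manipulation $e^{-\beta H/2} e^{-\beta H'/2} = e^{-\beta (H+H')/2}$; Golden--Thompson is precisely the tool that salvages this identity as a one-sided inequality pointing in the direction I need. Everything else is elementary operator-inequality bookkeeping, and the resulting bound is actually a factor of two tighter in the exponent than the lemma states, leaving some slack for looser versions of Golden--Thompson if desired.
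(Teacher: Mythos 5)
Your proof is correct, and it follows essentially the same route as the source the paper cites for this lemma (the paper itself gives no proof, deferring entirely to Appendix~C of \cite{0905.2199}): lower-bound the fidelity by the overlap of thermofield-double purifications, i.e.\ by $\Tr\!\left(e^{-\beta H/2}e^{-\beta H'/2}\right)/\sqrt{\mathcal{Z}_\beta\mathcal{Z}_\beta'}$, then apply Golden--Thompson and Weyl monotonicity. Each step checks out, and as you note the argument actually delivers the sharper constant $e^{-\beta\norm{H-H'}/2}$, which of course implies the stated bound $e^{-\beta\norm{H-H'}}$.
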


This tool is convenient because a naive analysis of the trace distance yields an contribution from every eigenvalue, which accumulates into a total error proportional to the dimension. By considering the spectral distance in the Hamiltonian instead, we avoid the exponential blowup in the error. However, this trick is a bit tricky in our case since, when viewed as operators on the full Hilbert space, ${\rho_\beta}^{(M_j)}$ is not full rank. But states of the form $e^{-\beta H}/\text{Tr}(e^{-\beta H})$ are always full rank whenever $H$ is.

To remedy this issue, we recall an observation made in \cref{remark:wellroundedhamiltonian}, where we noticed that if we view ${\rho_\beta}^{(M_j)}$ as a density matrix over $\mathcal{P}^{(M_j)}$ only, then it can be written as $e^{-\beta  H^{(M_j)}}/\text{Tr}(e^{-\beta \bar H^{(M_j)}})$ where $ H^{(M_j)}$ is the promised Hamiltonian. We will leverage this viewpoint where we restrict to $\mathcal{P}^{(M_j)}$ for \cref{def:exactpromisedhamiltonian} and \cref{lemma:comparingthermalstatesviatheirhamiltonians}.

\begin{definition} \label{def:exactpromisedhamiltonian}  Let $M$ be a rounding promise. Let the \textbf{exact promised Hamiltonian} $\hat H^{(M)}$ be the part of $H$ with support on the promised subspace $\mathcal{P}^{(M)}$. But unlike the non-exact promised Hamiltonian $H^{(M)}$, the eigenvalues are unchanged. It is given by:
\begin{align}
    \hat H^{(M)} &\coloneqq \sum_{i}  \lambda_i \Pi_i.
\end{align}
Here, we view this operator as $\hat H^{(M)} \in \mathrm{L}(\mathcal{P}^{(M)})$.
\end{definition}

\begin{lemma}[\textbf{Rounding errors from energy estimation}]
\label{lemma:roundingerrorsfromenergyestimation} For $M \in \{L,R\}$ consider the following ensemble over exact promised thermal states:
\begin{align}
        \hat{\rho}^{*M} \coloneqq 2^{-r} \sum_j \hat{\rho}^{(M_j)}_\beta.
\end{align}
This density matrix satisfies $\norm{ \hat{\rho}^{*M} - \rho^{*M} }_1 \leq \sqrt{\beta 2^{-n}}$.
\end{lemma}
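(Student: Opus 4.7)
The plan is to apply the triangle inequality across the ensemble and then compare each paired exact and rounded promised Gibbs state individually via the Hamiltonian-fidelity bound from \cref{lemma:comparingthermalstatesviatheirhamiltonians}. Concretely, I would first write
\begin{align}
\norm{\hat\rho^{*M} - \rho^{*M}}_1 \;\le\; 2^{-r}\sum_j \norm{\hat{\rho}^{(M_j)}_\beta - \rho^{(M_j)}_\beta}_1,
\end{align}
so it suffices to bound each summand by $\sqrt{\beta 2^{-n}}$ (possibly up to a small constant absorbed into the inequality of \cref{lemma:comparingthermalstatesviatheirhamiltonians} or the standard Fuchs--van de Graaf conversion).

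Next I would restrict attention to the promised subspace $\mathcal{P}^{(M_j)}$, exactly as suggested by the discussion around \cref{remark:wellroundedhamiltonian}. Viewed as operators on $\mathcal{P}^{(M_j)}$, both states are full-rank Gibbs states:
\begin{align}
\hat{\rho}^{(M_j)}_\beta \;=\; \frac{e^{-\beta \hat H^{(M_j)}}}{\Tr e^{-\beta \hat H^{(M_j)}}},
\qquad
\rho^{(M_j)}_\beta \;=\; \frac{e^{-\beta H^{(M_j)}}}{\Tr e^{-\beta H^{(M_j)}}},
\end{align}
where $\hat H^{(M_j)} = \sum_{i:\lambda_i\in M_j}\lambda_i\Pi_i$ and $H^{(M_j)}=\sum_x m^{(M_j)}_x P^{(M_j)}_x$. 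This restricted viewpoint is the key move: it sidesteps the issue that, on the full Hilbert space, neither state is full rank, which would make \cref{lemma:comparingthermalstatesviatheirhamiltonians} vacuous. Since eigenvalues are only rounded to interval midpoints, and by \cref{lemma:propertiesofthecoarsegrainedroundingpromises} each interval $[a_x,b_x]\subseteq M_j$ has width at most $2^{-n}$, we have the spectral-norm bound
\begin{align}
\norm{\hat H^{(M_j)} - H^{(M_j)}} \;\le\; 2^{-n-1}.
\end{align}

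Then I would apply \cref{lemma:comparingthermalstatesviatheirhamiltonians} to get $F(\hat{\rho}^{(M_j)}_\beta,\rho^{(M_j)}_\beta) \ge \exp(-\beta 2^{-n-1})$, and convert to trace distance using the standard Fuchs--van de Graaf inequality $\norm{\rho-\sigma}_1 \le 2\sqrt{1-F(\rho,\sigma)^2}$ together with $1-e^{-x}\le x$:
\begin{align}
\norm{\hat{\rho}^{(M_j)}_\beta - \rho^{(M_j)}_\beta}_1 \;\le\; 2\sqrt{1 - e^{-\beta 2^{-n}}} \;\le\; 2\sqrt{\beta 2^{-n}}.
\end{align}
Substituting into the triangle inequality above gives the claim (matching the stated bound up to the constant hidden in the fidelity-to-trace conversion).

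The main obstacle is conceptual rather than computational: one must be careful to treat the two states as density operators on the common subspace $\mathcal{P}^{(M_j)}$ in order to invoke the Hamiltonian-fidelity lemma, since otherwise both operators are rank-deficient in $\mathrm{L}(\mathcal{H})$ and the logarithm implicit in the fidelity bound is undefined. Once that restriction is in place, the width estimate from \cref{lemma:propertiesofthecoarsegrainedroundingpromises} and the two standard inequalities handle the rest, and no additional control over the ensemble index $j$ is needed beyond the triangle inequality because the per-$j$ bound is uniform.
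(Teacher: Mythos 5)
Your proposal is correct and follows essentially the same route as the paper: triangle inequality over the ensemble, restriction to $\mathcal{P}^{(M_j)}$ so that \cref{lemma:comparingthermalstatesviatheirhamiltonians} applies to full-rank states, the width bound $\norm{\hat H^{(M_j)} - H^{(M_j)}} \leq 2^{-n-1}$ from \cref{lemma:propertiesofthecoarsegrainedroundingpromises}, and a Fuchs--van de Graaf conversion. The only discrepancy is the factor of $2$ in the fidelity-to-trace-norm step (the paper uses $\norm{\sigma-\sigma'}_1 \leq \sqrt{1-F^2}$ without it), which you correctly flag and which does not affect the substance of the argument.
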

\begin{proof} Recall from \cref{lemma:propertiesofthecoarsegrainedroundingpromises} that each of the intervals $[a_x,b_x]$ of the $M_j$ satisfies $b_x - a_x \leq 2^{-n}$. Thus, if $\lambda_i \in [a_x,b_x]$, we have $| \lambda_i - m_x | \leq 2^{-n-1}$ where $m_x$ is the midpoint of $[a_x,b_x]$. Consequently, we have $\norm{\hat H^{(M_j)} - H^{(M_j)}} \leq 2^{-n-1}$.

Following \cref{lemma:comparingthermalstatesviatheirhamiltonians}, we have $F(\hat{\rho}^{(M_j)}_\beta,\rho_\beta^{(M_j)}) \geq e^{-\beta 2^{-n}} $, so, exploiting the fact that $\norm{\sigma - \sigma'}_1 \leq \sqrt{1-F(\sigma,\sigma')^2}$, we have:
\begin{align}
    \norm{\hat{\rho}^{(M_j)}_\beta - \rho_\beta^{(M_j)}}_1 \leq \sqrt{1 - F(\hat{\rho}^{(M_j)}_\beta,\rho_\beta^{(M_j)})^2} \leq \sqrt{1 - e^{-2\beta 2^{-n-1} }} \leq \sqrt{\beta 2^{-n} }.
\end{align}
Finally, we average this error over the two ensembles:
\begin{align}
    \norm{ \hat{\rho}^{*M} - \rho^{*M} }_1 = \norm{ 2^{-r}\sum_j \left(\hat{\rho}^{(M_j)}_\beta - \rho^{(M_j)}_\beta \right)  }_1  \leq 2^{-r}\sum_j\norm{\hat{\rho}^{(M_j)}_\beta - \rho^{(M_j)}_\beta   }_1 \leq \sqrt{\beta 2^{-n}}.
\end{align}
\end{proof}

We move on to the error stemming from the elimination of energies in the promises $M_j$.

\begin{lemma}[\textbf{The average of exact promised thermal states approximates the thermal state}]
\label{lemma:theaverageofexactpromisedthermalstatesapproximatesthethermalstate}
For $M \in \{L,R\}$, we have $\norm{\hat\rho^{*M} - \rho_\beta}_1 \leq   2\cdot 2^{-r}$. 
\end{lemma}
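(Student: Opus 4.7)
My approach is to split the difference $\hat\rho^{*M} - \rho_\beta$ into two pieces by inserting a ``scaled'' intermediate ensemble. Define $q_j := \hat{\mathcal{Z}}_\beta^{(M_j)}/\mathcal{Z}_\beta \leq 1$. The key observation is that $q_j \hat\rho_\beta^{(M_j)}$, unlike $\hat\rho_\beta^{(M_j)}$ itself, has the clean form $\mathcal{Z}_\beta^{-1} \sum_i 1_{\lambda_i \in M_j}\, e^{-\beta \lambda_i}\, \Pi_i$, in which the awkward individual partition functions cancel and the operator is diagonal in the same eigenbasis as $\rho_\beta$. This makes the scaled ensemble directly comparable to $\rho_\beta$ term by term.

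The plan is then to bound each of the two pieces
\begin{align}
\hat\rho^{*M} - \rho_\beta
  = \underbrace{\Big(\hat\rho^{*M} - 2^{-r}\sum_j q_j\hat\rho_\beta^{(M_j)}\Big)}_{\text{renormalization}}
  + \underbrace{\Big(2^{-r}\sum_j q_j\hat\rho_\beta^{(M_j)} - \rho_\beta\Big)}_{\text{averaging}}
\end{align}
separately by $2^{-r}$ and apply the triangle inequality. Writing $\bar p_i := 2^{-r}\sum_j 1_{\lambda_i \in M_j}$, the averaging piece equals $-\mathcal{Z}_\beta^{-1}\sum_i (1-\bar p_i)\, e^{-\beta\lambda_i}\, \Pi_i$. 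Since this is a negative combination of mutually orthogonal projectors, its trace norm is $\mathcal{Z}_\beta^{-1}\sum_i (1-\bar p_i)\, e^{-\beta\lambda_i}\,\mathrm{Tr}(\Pi_i)$. The second claim of \cref{lemma:propertiesofthecoarsegrainedroundingpromises} gives $1 - \bar p_i \leq 2^{-r}$ for every $i$, so this is at most $2^{-r}$ times the total Gibbs weight, which equals one.

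For the renormalization piece, I use the triangle inequality together with $\|\hat\rho_\beta^{(M_j)}\|_1 = 1$ and $q_j\leq 1$ to obtain the scalar bound $1 - 2^{-r}\sum_j q_j$. Applying the same averaging identity to the scalar $2^{-r}\sum_j q_j = \mathcal{Z}_\beta^{-1}\sum_i \bar p_i\, e^{-\beta\lambda_i}\, \mathrm{Tr}(\Pi_i) \geq 1-2^{-r}$ then shows this is also at most $2^{-r}$. Summing the two bounds gives the claimed $2\cdot 2^{-r}$.

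The main obstacle is conceptual rather than computational: the $\hat\rho_\beta^{(M_j)}$ are each normalized by their own partition function and the off-support structure differs across $j$, so one cannot directly compare the ensemble average with $\rho_\beta$ term by term in the eigenbasis. Inserting the scaling $q_j$ is the critical trick that converts the ensemble into an unnormalized, Gibbs-weighted operator aligned with $\rho_\beta$; the missing normalization $1 - 2^{-r}\sum_j q_j$ then becomes a second, independent error that is controlled by exactly the same $\bar p_i \geq 1-2^{-r}$ bound, so both contributions reuse the single property supplied by \cref{lemma:propertiesofthecoarsegrainedroundingpromises}.
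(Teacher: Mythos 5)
Your proof is correct and is essentially the paper's argument in cleaner, operator-level packaging: your ``averaging'' piece is exactly the paper's first term (the Gibbs weight of excluded energies, controlled by $1-\bar p_i\le 2^{-r}$), and your ``renormalization'' piece $2^{-r}\sum_j(1-q_j)\hat\rho_\beta^{(M_j)}$ coincides term-by-term with the paper's second term, since $(1-q_j)/\hat{\mathcal{Z}}_\beta^{(M_j)} = 1/\hat{\mathcal{Z}}_\beta^{(M_j)} - 1/\mathcal{Z}_\beta$. Both pieces are bounded by the same computations the paper performs, so the two proofs are the same decomposition presented differently.
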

\begin{proof}  Letting $d_i \coloneqq \text{Tr}(\Pi_i)$ be the occupation numbers, we have:

    \begin{align}
        \norm{  \hat\rho^{*(M)} - \rho_\beta }_1 &= \sum_i  d_i \left| 2^{-r}  \sum_{j} 1_{\lambda_i \in M_j}  \exp\left( - \beta \lambda_i \right)/\hat{\mathcal{Z}}^{(M_j)}_\beta  - \exp( -\beta \lambda_i) / \mathcal{Z}_\beta \right|\\
&\leq\sum_i  d_i \left| 2^{-r}  \sum_{\substack{j \\ \lambda_i \not\in M_j}}\left( 0  - \exp( -\beta \lambda_i) / \mathcal{Z}_\beta \right) \right|\\
        &+ \sum_i  d_i \left| 2^{-r}  \sum_{\substack{j \\ \lambda_i \in M_j}}  \left( \exp\left( - \beta \lambda_i \right)/\hat{\mathcal{Z}}^{(M_j)}_\beta  - \exp( -\beta \lambda_i) / \mathcal{Z}_\beta  \right)\right|.
    \end{align}

    Here, for each $\lambda_i$, we split the rounding promises $M_j$ into two categories: those not containing $\lambda_i$, for which $\hat\rho^{(M)}_\beta$ has eigenvalue $0$, and those containing $\lambda_i$, for which $\hat{\rho}_\beta^{(M)}$ has eigenvalue $\exp\left( - \beta \lambda_i \right)/\hat{\mathcal{Z}_\beta^{(M_j)}}$. After applying the triangle inequality to these two groups, these correspond to the first and second terms in the above sum, respectively.  We will show that both of these terms are bounded by $2^{-r}$, so the overall bound is $2\cdot2^{-r}$ as desired.

The first term is bounded by:
\begin{align}
    \sum_i  d_i \left| 2^{-r}  \sum_{\substack{j\\ \lambda_i \not\in M_j}}\left( 0  - \exp( -\beta \lambda_j) / \mathcal{Z}_\beta \right) \right| \leq 2^{-r} \cdot \frac{1}{\mathcal{Z}_\beta} \sum_i d_i  \exp( -\beta \lambda_i) =2^{-r} \cdot \frac{\mathcal{Z}_\beta}{\mathcal{Z}_\beta} = 2^{-r}.
\end{align}

    So all that is left is the second term. Observe that $\hat{\mathcal{Z}}^{(M_j)}_\beta  \leq \mathcal{Z}_\beta   $ which implies that $1 /\hat{\mathcal{Z}}^{(M_j)}_\beta  - 1 / \mathcal{Z}_\beta \geq 0$. This allows us to remove the absolute value:
    \begin{align}
        & \sum_i  d_i \left| 2^{-r}  \sum_{\substack{j \\ \lambda_i \in M_j}}  \left( \exp\left( - \beta \lambda_i \right)/\hat{\mathcal{Z}}^{(M_j)}_\beta  - \exp( -\beta \lambda_i) / \mathcal{Z}_\beta  \right) \right|\\
        &= \sum_i  d_i \exp\left( - \beta \lambda_i \right) \cdot \left|  2^{-r}  \sum_{j}  1_{\lambda_i \in M_j}  \left( 1/\hat{\mathcal{Z}}^{(M_j)}_\beta  - 1 / \mathcal{Z}_\beta  \right)\right|\\
        &= \sum_i  d_i   \exp\left( - \beta \lambda_i \right) \cdot 2^{-r}\sum_{j} 1_{\lambda_i \in M_j}  \left(1 /\hat{\mathcal{Z}}^{(M_j)}_\beta  - 1 / \mathcal{Z}_\beta  \right).
\end{align}

    Now we can bound the two terms corresponding to the difference $1 /\hat{\mathcal{Z}}^{(M_j)}_\beta  - 1 / \mathcal{Z}_\beta  $ individually. The term with $1 / \hat{\mathcal{Z}}^{(M_j)}_\beta$ is is just 1:
\begin{align}
&\sum_i  d_i   \exp\left( - \beta \lambda_i \right) \cdot 2^{-r}\sum_{j} 1_{\lambda_i \in M_j}  \left(1 /\mathcal{Z}^{(M_j)}_\beta  \right)\\
&= 2^{-r}\sum_{j} \left[ \sum_i  d_i   \exp\left( - \beta \lambda_i \right)  1_{\lambda_i \in M_j} \right] \left(1 /\mathcal{Z}^{(M_j)}_\beta  \right)\\
&= 2^{-r}\sum_{j} \mathcal{Z}^{(M_j)}_\beta  \left(1 /\mathcal{Z}^{(M_j)}_\beta  \right) = 1.
\end{align}

    The term with $-1 / \mathcal{Z}_\beta$ can be bounded by leveraging that for any $\lambda_i$, at most one $M^{(j)}$ doesn't contain it, as shown in \cref{lemma:propertiesofthecoarsegrainedroundingpromises}. That is, $2^{-r} \sum_j 1_{\lambda_i \in M_j} \geq 1 - 2^{-r}$.
\begin{align}
&- \sum_i  d_i   \exp\left( - \beta \lambda_i \right) \cdot 2^{-r}\sum_{j} 1_{\lambda_i \in M_j}  \left(1 /\mathcal{Z}_\beta  \right)\\
&\leq - \sum_i  d_i   \exp\left( - \beta \lambda_i \right) \cdot (1-2^{-r})  \left(1 /\mathcal{Z}_\beta  \right)\\
&= - \mathcal{Z}_\beta \cdot (1-2^{-r})  \left(1 /\mathcal{Z}_\beta  \right) = - (1-2^{-r}).
\end{align}
So a bound on the second term is $1 - (1-2^{-r}) = 2^{-r}$. So, each of the two terms is at most $2^{-r}$, so the total error is at most $2\cdot 2^{-r}$.
\end{proof}

\section{Lindblad dynamics on the promised subspace\label{sec:lindbladdynamicsonthepromisedsubspace}}

In the previous section we required the following capability: to take a state $\sigma^{(M)}$ that is supported on a rounding promise $M$, and to transform it to the promised thermal state ${\rho_\beta}^{(M)}$. Once we have this capability, we can prepare good approximations to the true thermal state $\rho_\beta$.

We attain this capability by running a promised Davies generator $\mathcal{L}^{(M)}$ on the system, which we defined in \cref{def:promiseddaviesgenerator}. We constructed $\mathcal{L}^{(M)}$ by starting with a regular Davies generator $\mathcal{L}$ and truncating it  so that its dynamics are restricted to $\mathcal{P}$. The basic idea is to consider the coupling operators $S_\alpha$ of the original Davies generator, and to conjugate them with an attenuation operator $A^{(M)}$ to obtain:
\begin{align}
    S_\alpha^{(M)} \coloneqq A^{(M)} S_\alpha A^{(M)}
\end{align}
Then, $\mathcal{L}^{(M)}$ is defined the exact same way as $\mathcal{L}$, just with $ {S_\alpha}^{(M)}$ instead of $S_\alpha$. The other difference is that $\mathcal{L}$ is defined with respect to the ideal Hamiltonian $H = \sum_i \lambda_i \Pi_i$, and $\mathcal{L}^{(M)}$ with respect to the promised Hamiltonian $H^{(M)} = \sum_x m_x \Pi_x$.

In this section we detail the construction of the attenuation operator $A^{(M)}$ and how that the resulting promised Davies generator has some desirable properties.  Ideally, we would like to have $A^{(M)} = P^{(M)}$, but due to limitations of singular value transformation, we cannot achieve this. This results in two challenges.

 The first challenge is \emph{leakage}: we cannot perfectly eliminate the transfer of probability mass from $\mathcal{P}$ to $\mathcal{P}^\perp$; we can only suppress it by a factor $\delta_\mathrm{leak}$. Fortunately, this error is both exponentially suppressible, and its ramifications on the final output error can be quantified formally. We give a construction of an approximate attenuation operator $\tilde A^{(M)}$ that has support at most $\delta_\mathrm{leak}$ on $\mathcal{P}^\perp$. The operator $A^{(M)}$ itself has no support on $\mathcal{P}^\perp$ at all, ensuring that $\mathcal{L}^{(M)}$ does not leak. The leakage error can be dealt with by studying its approximate implementation $\tilde{\mathcal{L}}^{(M)}$ which we construct in \cref{sec:implementinglindbladdynamics}.

The second challenge, which is more significant, is \emph{attenuation}. Since $A^{(M)}$ is implemented via singular value transformation, it takes the form $A^{(M)} = \sum_{i} a^{(M)}(\lambda_i) \Pi_i$ for some continuous function $a^{(M)}:[0,1]\to [0,1]$. We have already achieved that $a^{(M)}(\lambda) = 0$ for $\lambda \not\in M$. Ideally, we would like to also have $a^{(M)}(\lambda) = 1$ for $\lambda \in M$, but this means that $a^{(M)}$ has a discontinuity at the boundaries of $M$. The function $a^{(M)}$ must be continuous, so we need to smoothly interpolate from 0 to 1 near the edges of the intervals of $M$. Formally, we can define a truncated rounding promise $M^T$ whose intervals are a little bit narrower than those of $M$, and then ensure $a^{(M)}(\lambda) = 1$ for $\lambda \in M^T$.  However, this choice means that the operator also attenuates certain eigenvectors of $H$ whose eigenvalues are close to the edges of the intervals of $M$. This attenuation might slow the rate of convergence of $\mathcal{L}^{(M)}$. 

Having highlighted the trade-offs of the construction, we formally define the attenuation operator $A^{(M)}$ and its approximation $\tilde A^{(M)}$.

\begin{lemma}[\textbf{Attenuation operators}]
\label{lemma:attenuationoperators}
Consider any rounding promise $M$ with minimum gap width $\kappa$ whose intervals are also at least $\kappa$ wide. Then, for any leakage error $\delta_\mathrm{leak}>0$, and attenuation factor $\gamma$ there exists an \textbf{attenuation operator} $A^{(M)}$ and an \textbf{approximate attenuation operator} $\tilde A^{(M)}$. They both commute with the Hamiltonian, and take the form:
\begin{align}
    A^{(M)} = \sum_{i} a^{(M)}(\lambda_i) \Pi_i \hspace{1cm}\text{and} \hspace{1cm} \tilde A^{(M)} = \sum_{i} \tilde a^{(M)}(\lambda_i) \Pi_i
\end{align}
for some functions $a^{(M)},\tilde a^{(M)}:[0,1]\to [0,1]$. Let $M^T$ be a $\kappa\gamma$-truncated rounding promise where if $[a_x, b_x] \in M$, then $[a_x+\kappa\gamma, b_x - \kappa\gamma  ] \in M^{T}$. Then, $a^{(M)},\tilde a^{(M)}:[0,1]\to [0,1]$ satisfy:
\begin{align}
    \text{if } \lambda\not\in M\text{ then }& a^{(M)}(\lambda) = 0 \text{ and } \tilde a^{(M)}(\lambda) \leq \delta_\mathrm{leak},\\
    \text{if } \lambda\in M^T \text{ then }& a^{(M)}(\lambda) = 1 \text{ and } \tilde a^{(M)}(\lambda) \geq 1- \delta_\mathrm{leak},\\
    \text{otherwise }& a^{(M)}(\lambda) =  \tilde a^{(M)}(\lambda).
\end{align}
Consequently, we have $\norm*{A^{(M)} - \tilde A^{(M)}} \leq \delta_\mathrm{leak}$, and $\norm*{ A^{(M)} (I - P^{(M)}) } = 0$ as required by \cref{def:promiseddaviesgenerator}. Furthermore, there exists a block encoding of $\tilde A^{(M)}$ with circuit complexity $O\left( \kappa^{-1}\gamma^{-1} \log( s^{(M)}\delta_\mathrm{leak}^{-1})\right)$.
\end{lemma}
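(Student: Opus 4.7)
The plan is to construct the approximate attenuation operator $\tilde A^{(M)}$ by quantum singular value transformation applied to the block encoding of $H$, and then to define $A^{(M)}$ by a piecewise modification of $\tilde A^{(M)}$'s spectral function that enforces the exact values $0$ outside $M$ and $1$ on $M^T$ while keeping $A^{(M)}$ and $\tilde A^{(M)}$ equal on the transition regions.

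First I would fix a target function $a_{\mathrm{tgt}}:[0,1]\to[0,1]$ that equals $0$ outside $M$, equals $1$ on $M^T$, and interpolates monotonically across each of the $2 s^{(M)}$ transition windows of width $\kappa\gamma$ separating $M^T$ from the complement of $M$. Assuming $\gamma \le 1/2$ (otherwise the statement is trivial after shrinking $\gamma$), each gap of $M$ has width $\ge \kappa$ and each interval has width $\ge \kappa$, so the truncation $M^T$ is well-defined and the transition windows do not overlap. Using the polynomial approximation tools collected in \cref{app:polynomialconstruction}, I would then produce a polynomial $\tilde a^{(M)}$ of degree $d = O(\kappa^{-1}\gamma^{-1}\log(s^{(M)}\delta_\mathrm{leak}^{-1}))$ that approximates $a_{\mathrm{tgt}}$ to uniform error $\delta_\mathrm{leak}$ on $[0,1]$ while remaining in $[0,1]$. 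Feeding this polynomial through QSVT on the block encoding of $H$ yields a block encoding of $\tilde a^{(M)}(H) = \tilde A^{(M)}$ using $O(d)$ queries, matching the claimed gate count.

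Second, I would define the exact attenuation operator via its spectral function: $a^{(M)}(\lambda) = 0$ for $\lambda\notin M$, $a^{(M)}(\lambda) = 1$ for $\lambda\in M^T$, and $a^{(M)}(\lambda) = \tilde a^{(M)}(\lambda)$ on the transition windows, and set $A^{(M)} \coloneqq \sum_i a^{(M)}(\lambda_i)\Pi_i$. By construction $A^{(M)}$ commutes with $H$, satisfies $\norm{A^{(M)}(I-P^{(M)})} = 0$ because $a^{(M)}$ vanishes outside $M$, and agrees with $\tilde A^{(M)}$ except on eigenspaces whose eigenvalues lie in $M^T$ or outside $M$. On those eigenspaces $\tilde a^{(M)}$ is within $\delta_\mathrm{leak}$ of the corresponding constant value by the uniform approximation guarantee, which immediately yields $\norm{A^{(M)} - \tilde A^{(M)}} \le \delta_\mathrm{leak}$ together with the two pointwise bounds on $\tilde a^{(M)}$ in the statement.

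The main obstacle is the polynomial step itself: a single polynomial of degree only $O(\kappa^{-1}\gamma^{-1}\log(s^{(M)}\delta_\mathrm{leak}^{-1}))$ must simultaneously resolve all $2 s^{(M)}$ transitions. A naive product-of-step-polynomials construction multiplies per-transition degrees and blows up by a factor of $s^{(M)}$, which is too costly. The right approach is to build $\tilde a^{(M)}$ as a sum of shifted rectangular-bump polynomials constructed from the error-function-type approximations of \cite{2103.09717}, and to control the pointwise error by a union bound at precision $\delta_\mathrm{leak}/s^{(M)}$ per bump, so that the interval count enters only logarithmically. This is exactly the content anticipated from \cref{app:polynomialconstruction}; once it is in hand, the remainder of the lemma is bookkeeping.
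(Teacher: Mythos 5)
Your proposal is correct and follows essentially the same route as the paper: build $\tilde a^{(M)}$ as a sum of shifted step-function polynomials (one per transition between the gaps of $M$, labeled $0$, and the intervals of $M^T$, labeled $1$, with transition width $\kappa\gamma$ and per-transition precision $\delta_\mathrm{leak}/s^{(M)}$ so the interval count enters only logarithmically), implement it via singular value transformation on the block encoding of $H$, and then define $a^{(M)}$ by overwriting $\tilde a^{(M)}$ with the exact constants $0$ and $1$ outside $M$ and on $M^T$ while keeping the two functions equal on the transition windows. The one point you gloss over — the factor of $\tfrac{1}{2}$ incurred by QSVT on a mixed-parity polynomial, which the paper removes with a $T_3$ oblivious-amplification step — is handled by the appendix lemma you cite, so this is a presentational rather than a substantive gap.
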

The proof is in \cref{app:polynomialconstruction}. We will show how to deal with leakage errors in \cref{sec:implementinglindbladdynamics}. This section is dedicated to assessing the impact of truncation and attenuation on the dynamics of $\mathcal{L}^{(M)}$. Our goal is to show that $\mathcal{L}^{(M)}$ is mixing (recall \cref{def:mixing}) and that the mixing time $t_\mathrm{mix}$ is not too much slower than that of $\mathcal{L}$.

On a positive note, we emphasize that truncation to $\mathcal{P}$ as well as attenuation cannot cause $\mathcal{L}^{(M)}$ to fail to converge onto the promised thermal state unless the coupling operators $S_\alpha$ are extremely contrived. For example, one way $\mathcal{L}$ could be mixing while $\mathcal{L}^{(M)}$ could fail to be so is if the $S_\alpha$ encode a particular random walk over the energy eigenspaces. Specifically, if the random walk is over a bipartite graph where one half of the eigenspaces are in $\mathcal{P}$ and the other half are in $\mathcal{P}^\perp$, then a Davies generator based on $S_\alpha$ may be mixing, while $S^{(M)}_\alpha$ will not be. But construction of such a coupling operator requires careful knowledge over the energy eigenspaces. The chances of accidentally selecting such an $S_\alpha$ in practice are slim.

The main problem is that the projection of $S_\alpha$ to ${S_\alpha}^{(M)}$ will slow the rate of mixing, and that attenuation only makes this problem worse. We assess the severity of this problem by performing numerical simulations on a particular physical system: the transverse field Ising model. Let $(n_1,n_2)$ be the width and height of 2d-grid of qubits, and let $v$ be the strength of the transverse field. Let $\vec i$ be indices in an $n_1 \times n_2$ grid, let $X_{\vec i}$ and $Z_{\vec i}$ denote Pauli $X$ and $Z$ on the qubit at grid position $\vec i$, and let $\vec i\sim \vec j$ denote that the two grid positions are adjacent. Then, the Hamiltonian we consider in this section is:

\begin{align}
    H = \sum_{\vec i \sim \vec j} Z_{\vec i} Z_{\vec j} +  v \sum_{\vec i} X_{\vec i}. \label{eqn:isingmodel}
\end{align}

We study the dynamics of Davies generators defined with respect to this Hamiltonian, as well as the coupling operators $X_{\vec i}$ and $Z_{\vec i}$ for all grid positions $\vec i$. To assess the mixing time $t_\mathrm{mix}$ we compute the spectral gap $\Delta$ of the Lindbladian.  

\begin{figure}
    \centering
    \includegraphics[width=\textwidth]{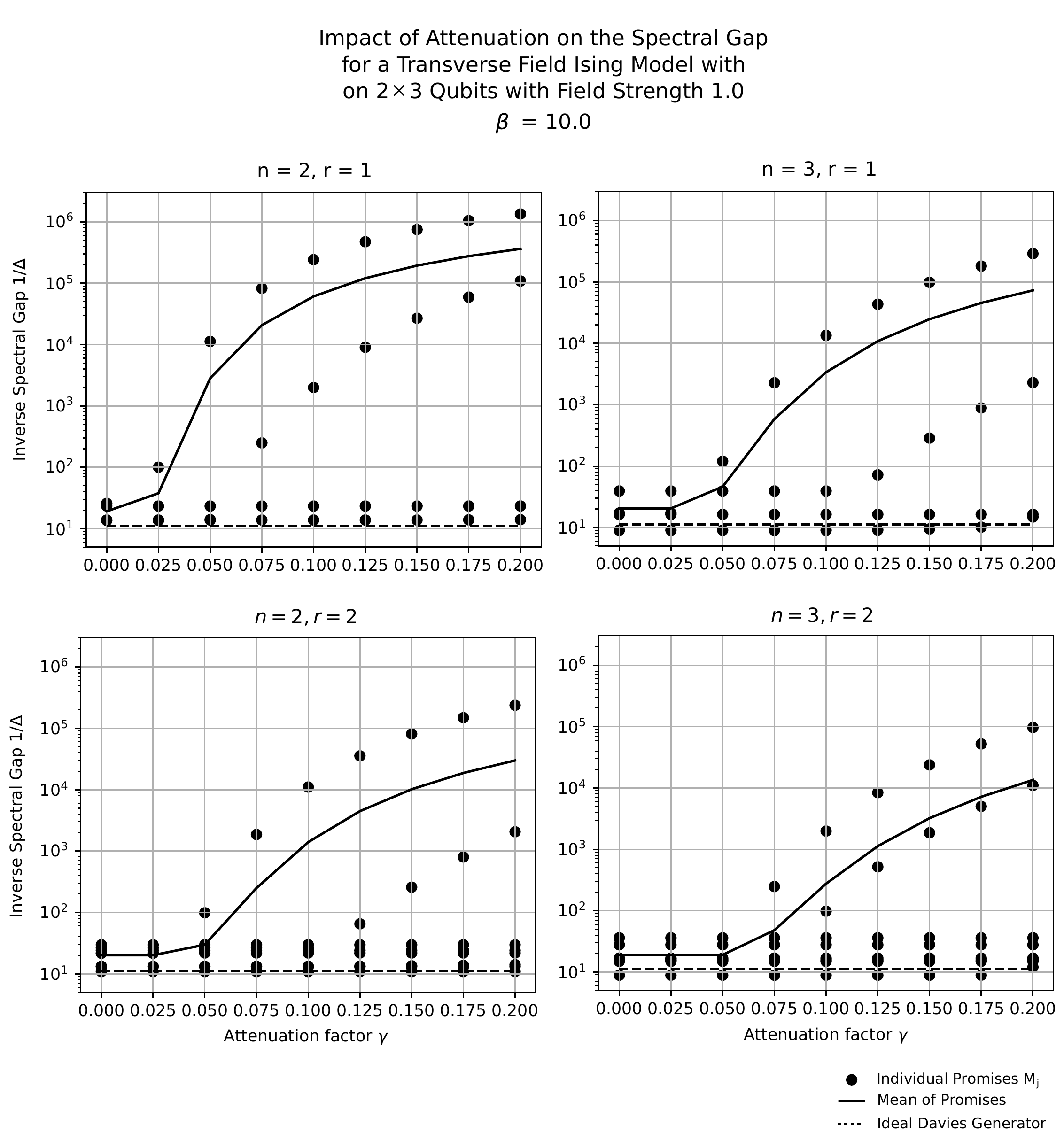}
    \caption{Analysis of the spectral gap of the Davies generators of the transverse field Ising model from \cref{eqn:isingmodel}. We set $\beta = 10.0$ throughout. We see that the convergence rates of the promised Davies generators are a little slower than the ideal Davies generator as $\gamma \to 0$. While the convergence rate slows down significantly with large $\gamma$, selecting reasonably small values of $\gamma$ achieves a convergence time just as fast as the unphysical case of $\gamma = 0$. There is also significant variation among the promises $M_j$. The software that produced these data is available at: \url{https://github.com/qiskit-community/promised-davies-generator}.}
    \label{fig:attenuationanalysis}
\end{figure}

In \cref{fig:attenuationanalysis} we plot spectral gap as a function of the attenuation factor $\gamma$ for various parameters. In all our experiments, we verified that the ideal Davies generator $\mathcal{L}$ had the ideal thermal state $\rho_\beta$ as its unique fixed point, and that the promised Davies generators $\mathcal{L}^{(M_j)}$ had the promised thermal state $\rho^{(M_j)}_\beta$ as their unique fixed point. We analyze their exact versions with $A^{(M)}$ rather than $\tilde A^{(M)}$ and $P_x$ rather than $\tilde P_x$.

We find that in the ideal case of $\gamma = 0$, the ideal Davies generator and the promised Davies generators have about the same convergence rate - the promised Davies generators are a little slower. So while there is some slowdown in mixing time from projection, the slowdown is largely due to attenuation. We find that setting reasonably small values of $\gamma$ achieves convergence rates similar to the $\gamma = 0$ case. Furthermore, this slowdown only appears to be significant for certain rounding promises $M_j$.  One interpretation of our technique in this paper is that we trade mixing time for a provable accuracy guarantee on the final thermal state. So, some amount of slowdown is acceptable in exchange for the rigor. 
\newpage

\section{Implementing Lindblad dynamics\label{sec:implementinglindbladdynamics}}
The goal of this section is to show how to simulate the evolution of the promised Davies generator for time $t$ given a block encoding of its jump operators. This block encoding can be implemented by energy estimation. This construction also involves the coupling operator projected onto the promised subspace. Both energy estimation and the projection are not perfect, and they lead to errors in the block-encoded jump operators. We begin with a lemma that shows that a small perturbation of the jump operator will not change the Lindbladian by much. Then we give the detailed construction and prove the main theorem.

For simplicity, in this section we assume that there is only one coupling operator so that we can drop the index $\alpha$ in for the jump operators $L_{\nu, \alpha}$ and the coupling operators $S_{\alpha}$, and simply write $L_{\nu}$ for the jump operators of a promised Davies generator (\cref{def:promiseddaviesgenerator}), and write $S$ for the coupling operator. Our analysis can be easily generalized to the case of more coupling operators: the number of jump operators just grows by the factor of the number of the coupling operators.

Furthermore, it is reasonable to suppose that $\norm{S} \leq 1$. First, this is generally the case in practice: the coupling operator $S$ is usually selected to be a unitary transformation that scrambles the energy eigenbasis. Even if $S$ were a block encoded operator, it would require a unitary implementation, so the only way to achieve $\norm{S} \geq 1$ would be via some virtual scale factor.  Second, in the analysis of Hamiltonian simulation $e^{i H t}$, it is usually assumed that $\norm{H} \leq 1$, since we can always absorb a rescaling of the Hamiltonian into the time $t$. The exact same argument applies for the simulation of Davies generators.

To begin with, we show that the distance between Lindbladians in terms of the diamond norm can be bounded by the distances of their jump operators in terms of the spectral norm.
\begin{lemma}[\textbf{Approximating Lindblad evolution via jump operators}] 
\label{lemma:jumpapprox} 
  Say $L_\nu \in \mathrm{L}(\mathcal{H}^{(M)})$ are jump operators defining a Lindbladian $\mathcal{L}$, and say there are $m$ many of them. Say some $\tilde L_\nu \in \mathrm{L}(\mathcal{H})$ similarly define a Lindbladian $\tilde{\mathcal{L}}$, and say these satisfy:
\begin{align}
  \label{eq:ll}
    \norm*{\tilde L_\nu - L_\nu } \leq \delta_L.
\end{align}

Say we also have $\norm{L_\nu} \leq 1$. Then, for any $t > 0$:

\begin{align}
    \norm*{\tilde{\mathcal{L}} - \mathcal{L}}_\diamond \leq  4m\delta_L  + 2\delta_L^2
\end{align}
\end{lemma}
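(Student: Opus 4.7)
The plan is to expand $\tilde L_\nu = L_\nu + E_\nu$ where $E_\nu \coloneqq \tilde L_\nu - L_\nu$ satisfies $\|E_\nu\| \le \delta_L$, and then bound the diamond norm of each resulting piece of $\tilde{\mathcal{L}} - \mathcal{L}$ using the basic fact that a map of the form $\sigma\mapsto A\sigma B$ has diamond norm at most $\|A\|\|B\|$. This is because $(\Phi\otimes\mathcal{I})(X) = (A\otimes I)X(B\otimes I)$, and tensoring with identity preserves spectral norm, so $\|(\Phi\otimes\mathcal{I})(X)\|_1 \le \|A\|\|B\|\,\|X\|_1$.

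The next step is the actual decomposition. For each $\nu$, the conjugation term expands as
\begin{align}
\tilde L_\nu \sigma \tilde L_\nu^\dagger - L_\nu \sigma L_\nu^\dagger
= E_\nu \sigma L_\nu^\dagger + L_\nu \sigma E_\nu^\dagger + E_\nu \sigma E_\nu^\dagger,
\end{align}
while the dissipative correction expands via
\begin{align}
\tilde L_\nu^\dagger \tilde L_\nu - L_\nu^\dagger L_\nu = E_\nu^\dagger L_\nu + L_\nu^\dagger E_\nu + E_\nu^\dagger E_\nu.
\end{align}
I would then apply the diamond-norm bound term by term. The two linear conjugation pieces each contribute at most $\|E_\nu\|\|L_\nu\| \le \delta_L$, the quadratic conjugation piece contributes at most $\delta_L^2$. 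For the correction, the map $\sigma\mapsto \tfrac12(X\sigma + \sigma X)$ has diamond norm at most $\|X\|$, and $\|E_\nu^\dagger L_\nu + L_\nu^\dagger E_\nu + E_\nu^\dagger E_\nu\| \le 2\delta_L + \delta_L^2$ by the triangle inequality and $\|L_\nu\|\le 1$.

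Finally I would sum over the $m$ jump operators using the triangle inequality for the diamond norm, yielding a bound of the form $m(4\delta_L + 2\delta_L^2)$, matching the stated $4m\delta_L + 2\delta_L^2$ up to how one chooses to display the quadratic remainder (the $\delta_L^2$ contributions can be absorbed into a single term under the standing assumption that $\delta_L$ is small, since in the intended regime the linear term dominates).

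The main obstacle is not any single estimate — each piece is a one-line bound — but rather making sure the bookkeeping is honest: namely, that the bound on $\sigma\mapsto A\sigma B$ really does carry over to the diamond norm (and not merely to $\|\cdot\|_1$), and that the $\tfrac12$ factor in the Lindbladian cancels correctly against the two separate left- and right-multiplication terms in the correction. Note also that the stated bound has no dependence on $t$; the $t>0$ in the hypothesis is vestigial, since we are comparing the generators themselves, not the evolved channels $e^{t\mathcal{L}}$ and $e^{t\tilde{\mathcal{L}}}$.
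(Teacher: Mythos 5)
Your proposal is correct and follows essentially the same route as the paper: a term-by-term triangle-inequality bound on the difference of the two Lindbladians (you expand via $E_\nu = \tilde L_\nu - L_\nu$ where the paper telescopes through intermediate terms, but the estimates are identical), followed by summation over the $m$ jump operators and the standard observation that conjugation maps' bounds survive tensoring with the identity. Your honest bookkeeping gives $4m\delta_L + 2m\delta_L^2$, which matches the paper's own intermediate computation, so the missing factor of $m$ on the quadratic term in the displayed bound is a harmless slip in the paper rather than a gap in your argument; your remark that the $t>0$ hypothesis is vestigial is also correct.
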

\begin{proof}
  First of all, by \cref{eq:ll} and the triangle inequality, we have
  \begin{align}
    \norm*{\tilde{L}_{\nu}} \leq \norm{L_{\nu}} + \delta_L.
  \end{align}
  Now, recall that
  \begin{align}
    \mathcal{L}(\rho) &= \sum_{\nu} L_{\nu} \rho  L_{\nu}^{\dag} - \frac{1}{2}\left(L_{\nu}^{\dag}L_{\nu} \rho +  \rho L_{\nu}^{\dag}L_{\nu}\right), \quad\text{ and } \\
    \tilde{\mathcal{L}}(\rho) &= \sum_{\nu} \tilde{L}_{\nu}  \rho  \tilde L_{\nu} - \frac{1}{2}\left(\tilde L_{\nu}^{\dag}\tilde L_{\nu} \rho +  \rho \tilde L_{\nu}^{\dag}\tilde L_{\nu}\right).
  \end{align}
  Let $\rho$ be the matrix that achieves the induced trace norm of $\norm*{\mathcal{L} - \mathcal{\tilde{L}}}_1$, i.e.,
  \begin{align}
    \norm*{\mathcal{L} - \mathcal{\tilde{L}}}_1 = \norm*{\mathcal{L}(\rho) - \mathcal{\tilde{L}}(\rho)}_1.
  \end{align}
  We have
  \begin{align}
    \norm*{\mathcal{L} - \tilde{\mathcal{L}}}_1 &\leq \sum_{\nu} \norm{L_{\nu}\rho L_{\nu}^{\dag} - \tilde{L}_{\nu}\rho\tilde{L}_{\nu}^{\dag}}_1 + \frac{1}{2}\norm{\tilde{L}_{\nu}^{\dag}\tilde{L}_{\nu}\rho - L_{\nu}^{\dag}L_{\nu}\rho}_1 + \frac{1}{2}\norm{\rho\tilde{L}_{\nu}^{\dag}\tilde{L}_{\nu} - \rho L_{\nu}^{\dag}L_{\nu}}_1 \\
   &\leq \sum_{\nu} \norm{L_{\nu}}\delta_L + \norm*{\tilde{L}_{\nu}}\delta_L + \norm*{\tilde{L}_{\nu}\tilde{L}_{\nu} - L_{\nu}^{\dag}L_{\nu}} \\
   &\leq \sum_{\nu} 2(\norm{L_{\nu}}\delta_L + \norm*{\tilde{L}_{\nu}}\delta_L) \\
   &\leq 4m\delta_L + 2 \delta^2_L,
  \end{align}
  where we have used the fact that $\norm{ABC}_1 \leq \norm{A}\norm{B}_1\norm{C}$ for matrices $A, B$ and $C$.

  To extend this bound to the diamond norm, we consider the identity map $\mathcal{I}: \mathrm{L}(\mathcal{H}^{(M)}) \rightarrow \mathrm{L}(\mathcal{H}^{(M)})$. We have
  \begin{align}
    \mathcal{L}\otimes\mathcal{I}(\rho) &= \sum_{\nu} (L_{\nu}\otimes I) \rho  (L_{\nu}^{\dag}\otimes I) - \frac{1}{2}\left((L_{\nu}^{\dag}L_{\nu}\otimes I) \rho +  \rho (L_{\nu}^{\dag}L_{\nu}\otimes I)\right), \quad\text{ and } \\
    \tilde{\mathcal{L}}\otimes\mathcal{I}(\rho) &= \sum_{\nu} (\tilde{L}_{\nu}\otimes I)  \rho  (\tilde L_{\nu}^{\dag}\otimes I) - \frac{1}{2}\left((\tilde L_{\nu}^{\dag}\tilde L_{\nu}\otimes I) \rho +  \rho (\tilde L_{\nu}^{\dag}\tilde L_{\nu}\otimes I)\right).
  \end{align}
  Then it is easy to obtain that
  \begin{align}
    \norm*{\mathcal{L} - \mathcal{\tilde{L}}}_{\diamond} = \norm*{\mathcal{L}\otimes\mathcal{I} - \tilde{\mathcal{L}}\otimes\mathcal{I}}_1 \leq  4m\delta_L  + 2 \delta^2_L,
  \end{align}
  using the fact that $\norm{A\otimes I} = \norm{A}$ for any matrix $A$.
\end{proof}

The Lindbladian we try to simulate is the promised Davies generator $\mathcal{L}^{(M)}$, which involves the promised coupling operator $S^{(M)} = A^{(M)}SA^{(M)}$. However, in our construction, we implement an approximate attenuation operator $\tilde S^{(M)} = \tilde A^{(M)}S\tilde A^{(M)}$ (see \cref{sec:lindbladdynamicsonthepromisedsubspace}). Based on the distance between $A^{(M)}$ and $\tilde A^{(M)}$ given by \cref{lemma:attenuationoperators}, we prove a bound on the distance between $S^{(M)}$ and $\tilde{S}^{(M)}$ as follows.
\begin{lemma}
\label{lemma:distance-s-tildes}
Given a rounding promise $M$, let $S^{(M)}$ and $\tilde S^{(M)}$ be as defined in \cref{sec:lindbladdynamicsonthepromisedsubspace}. For any leakage error $\delta_{\mathrm{leak}} > 0$, it holds that
\begin{align}
    \norm{S^{(M)} - \tilde S^{(M)}} \leq 2\delta_{\mathrm{leak}}.
\end{align}
\end{lemma}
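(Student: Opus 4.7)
The plan is to write the difference $S^{(M)} - \tilde S^{(M)}$ as a telescoping sum by adding and subtracting the mixed term $\tilde A^{(M)} S A^{(M)}$, namely
\begin{align}
    S^{(M)} - \tilde S^{(M)}
    = \bigl( A^{(M)} - \tilde A^{(M)} \bigr) S A^{(M)} + \tilde A^{(M)} S \bigl( A^{(M)} - \tilde A^{(M)} \bigr).
\end{align}
Then I would apply the triangle inequality together with submultiplicativity of the spectral norm to each of the two summands.

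For the bounds on the individual factors, I would invoke \cref{lemma:attenuationoperators}, which directly gives $\norm{A^{(M)} - \tilde A^{(M)}} \leq \delta_{\mathrm{leak}}$, and also guarantees (because the defining functions $a^{(M)}, \tilde a^{(M)}$ map $[0,1]$ into $[0,1]$) that $\norm{A^{(M)}} \leq 1$ and $\norm{\tilde A^{(M)}} \leq 1$. Combined with the standing assumption $\norm{S} \leq 1$ stated at the start of \cref{sec:implementinglindbladdynamics}, each of the two terms is bounded by $\delta_{\mathrm{leak}}$, yielding the desired $2\delta_{\mathrm{leak}}$.

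There is essentially no obstacle here; the only thing to be careful about is ensuring that all three norm bounds $\norm{S}, \norm{A^{(M)}}, \norm{\tilde A^{(M)}} \leq 1$ are justified from earlier statements (they are), so that the telescoping estimate does not pick up any hidden constants. I would keep the argument to just a few lines.
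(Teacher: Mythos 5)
Your proposal is correct and follows essentially the same argument as the paper: insert a mixed term to telescope the difference, then bound each summand by $\delta_{\mathrm{leak}}$ using $\norm{A^{(M)} - \tilde A^{(M)}} \leq \delta_{\mathrm{leak}}$ from \cref{lemma:attenuationoperators} together with $\norm{S}, \norm{A^{(M)}}, \norm{\tilde A^{(M)}} \leq 1$. The only (immaterial) difference is that the paper inserts $A^{(M)} S \tilde A^{(M)}$ rather than your $\tilde A^{(M)} S A^{(M)}$.
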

\begin{proof}
Recall that $S^{(M)} = A^{(M)}SA^{(M)}$, and $\tilde S^{(M)} = \tilde A^{(M)}S\tilde A^{((M)}$. By \cref{lemma:attenuationoperators}, we have
\begin{align}
    \norm{A^{(M)} - \tilde A^{(M)}} \leq \delta_{\mathrm{leak}}.
\end{align}

To prove the desired bound, we leverage  $\norm{A^{(M)}}\leq 1$ and  $\norm{\tilde{A}^{(M)}}\leq 1$ to show:
\begin{align}
  \norm{S^{(M)} - \tilde S^{(M)}} &= \norm{A^{(M)}SA^{(M)} - \tilde A^{(M)}S\tilde A^{(M)}} \\
                                  &= \norm{A^{(M)}SA^{(M)} - A^{(M)}S\tilde A^{(M)} + A^{(M)}S\tilde A^{(M)} - \tilde A^{(M)}S\tilde A^{(M)}} \\
                                  &\leq \delta_{\mathrm{leak}} + \delta_{\mathrm{leak}} = 2\delta_{\mathrm{leak}}.
\end{align}
Note that we have assumed that $\norm{S} \leq 1$ as in the beginning of this section.
\end{proof}

Now, we have all the tools to prove the main theorem and present our quantum algorithm for simulating the Davies generator.

\begin{theorem}[\textbf{Implementation of the Davies generator given a rounding promise}] For some rounding promise $M$, say $\mathcal{L}^{(M)}$ is an promised Davies generator as defined in \cref{def:promiseddaviesgenerator}. Then, for any $\delta_\mathcal{L},t>0$, there exists a quantum algorithm implementing a channel $\delta_\mathcal{L}$-close in diamond norm to $e^{t\mathcal{L}^{(M)}}$. If $\kappa$ is the minimum gap of $M$, $M$ has $s^{(M)}$ many intervals, and $\gamma$ is the desired attenuation factor for the $\tilde S^{(M)}$, then this algorithm uses
\begin{align}
  \label{eq:gc-L}
  O\left((s^{(M)})^2  t\,\frac{\log (s^{(M)} t/\delta_{\mathcal{L}})}{\log\log (s^{(M)}t/\delta_{\mathcal{L}})}\right)
\end{align}
queries to the block encoding of $S$,
\begin{align}
  O\left( t \cdot \kappa^{-1} \cdot \gamma^{-1}  \cdot  (s^{(M)})^2\log(s^{(M)}) \cdot \frac{\log^2(s^{(M)}t/\delta_{\mathcal{L}})}{\log\log (s^{(M)}t/\delta_{\mathcal{L}})}\right)
\end{align}
queries to the block encoding of $H$, and
\begin{align}
  \label{eq:total-gc}
  O\left((s^{(M)})^4t\,\left(\frac{\log (s^{(M)} t/\delta_{\mathcal{L}})}{\log\log (s^{(M)} t/\delta_{\mathcal{L}})}\right)^2\right)
\end{align}
additional 1- and 2-qubit gates.

\end{theorem}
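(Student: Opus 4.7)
The plan is to construct an approximate block encoding $\mathcal{O}_{\tilde{\mathcal{L}}^{(M)}}$ of a perturbed set of jump operators $\tilde L_\nu$, feed it into \cref{prop:lindbladsimulation}, and then control the cumulative approximation error by choosing the internal precision parameters of the sub-routines as a function of $\delta_\mathcal{L}$, $t$, and the number of Bohr frequencies $m=(s^{(M)})^2$. First I would assemble the block encoding of $\tilde L_\nu = \sqrt{G(\nu)}\sum_{x,y:m_x^{(M)}-m_y^{(M)}=\nu}\tilde P_x\,\tilde S^{(M)}\,\tilde P_y$ by composing: two invocations of the energy-estimation isometry $\tilde E^{(M)}$ of \cref{prop:energyestimationgivenaroundingpromise} (to obtain the labels $\ket{x}$ and $\ket{y}$), one invocation of the block encoding of $\tilde S^{(M)}=\tilde A^{(M)} S \tilde A^{(M)}$ (which requires two calls to the approximate attenuation operator from \cref{lemma:attenuationoperators} around one call to $S$), a coherent arithmetic step computing $\nu = m_x^{(M)} - m_y^{(M)}$, and finally a single diagonal block encoding of $\sqrt{G(\nu)}$ applied to the $\ket{\nu}$ register. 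Thus each call to $\mathcal{O}_{\tilde{\mathcal{L}}^{(M)}}$ uses $O(1)$ queries to $S$, $O(1)$ invocations of $\tilde E^{(M)}$, and $O(1)$ invocations of $\tilde A^{(M)}$.

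Next I would bound $\|\tilde L_\nu - L_\nu\|$ uniformly in $\nu$. Writing the ideal $L_\nu = \sqrt{G(\nu)}\sum_{x,y:m_x^{(M)}-m_y^{(M)}=\nu} P_x S^{(M)} P_y$ and the implemented $\tilde L_\nu$ side by side, the deviation splits via a telescoping triangle inequality into: the projector error $\|\tilde P_x P^{(M)} - P_x\|\le \delta_{\mathrm{est}}$ from \cref{prop:energyestimationgivenaroundingpromise}, and the coupling error $\|S^{(M)} - \tilde S^{(M)}\|\le 2\delta_{\mathrm{leak}}$ from \cref{lemma:distance-s-tildes}. Using that $\|S\|\le 1$, that $\|A^{(M)}\|,\|\tilde A^{(M)}\|\le 1$, and that $|G(\nu)|$ is bounded by a constant on the relevant window, this yields a uniform bound $\delta_L = O(\delta_{\mathrm{est}}+\delta_{\mathrm{leak}})$.

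Then I would lift the jump-operator bound to a diamond-norm channel bound. Applying \cref{lemma:jumpapprox} with $m = (s^{(M)})^2$ gives $\|\tilde{\mathcal{L}}^{(M)}-\mathcal{L}^{(M)}\|_\diamond \le 4m\delta_L + 2\delta_L^2$, and integrating the standard contractivity estimate $\|e^{t\tilde{\mathcal{L}}^{(M)}}-e^{t\mathcal{L}^{(M)}}\|_\diamond \le t\,\|\tilde{\mathcal{L}}^{(M)}-\mathcal{L}^{(M)}\|_\diamond$ gives an end-to-end error $O(m\,t\,\delta_L) + \delta_{\mathrm{sim}}$. Requiring this to be at most $\delta_\mathcal{L}$ motivates the choices $\delta_{\mathrm{sim}} = \Theta(\delta_\mathcal{L})$ and $\delta_{\mathrm{est}},\delta_{\mathrm{leak}} = \Theta(\delta_\mathcal{L}/((s^{(M)})^2 t))$, so that all internal log factors collapse into $\mathrm{polylog}(s^{(M)}t/\delta_\mathcal{L})$.

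Finally I would plug into \cref{prop:lindbladsimulation} with $m=(s^{(M)})^2$ to obtain the top-level query count \eqref{eq:gc-L} to $\mathcal{O}_{\tilde{\mathcal{L}}^{(M)}}$ and the additional-gate bound \eqref{eq:total-gc}. Multiplying the top-level count by the per-call cost of each $\tilde E^{(M)}$ (which is $O(\log^2 s^{(M)}\cdot \kappa^{-1}\log(1/\delta_{\mathrm{est}}))$) and each $\tilde A^{(M)}$ (which is $O(\kappa^{-1}\gamma^{-1}\log(s^{(M)}/\delta_{\mathrm{leak}}))$) then yields the stated query count to the block encoding of $H$. The main obstacle I expect is the careful bookkeeping that simultaneously propagates three independent error parameters ($\delta_{\mathrm{sim}},\delta_{\mathrm{est}},\delta_{\mathrm{leak}}$) through both the per-call and top-level cost formulas without letting any of the logarithmic factors blow up; everything else is a direct composition of previously established lemmas.
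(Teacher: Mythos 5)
Your proposal follows essentially the same route as the paper's proof: build the oracle $\mathcal{O}_{\tilde{\mathcal{L}}}$ from two calls to the energy-estimation isometry of \cref{prop:energyestimationgivenaroundingpromise}, one call to $\tilde S^{(M)}=\tilde A^{(M)}S\tilde A^{(M)}$, and a block encoding of $\sqrt{G(\nu)}$; bound $\|\tilde L_\nu - L_\nu\|$ via $\delta_{\mathrm{est}}$ and \cref{lemma:distance-s-tildes}; lift to the diamond norm with \cref{lemma:jumpapprox} and $\|e^{t\tilde{\mathcal{L}}}-e^{t\mathcal{L}}\|_\diamond\le t\|\tilde{\mathcal{L}}-\mathcal{L}\|_\diamond$; and multiply the top-level query count of \cref{prop:lindbladsimulation} (with $m=(s^{(M)})^2$) by the per-call cost in $H$-queries. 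The one place where your bookkeeping is too optimistic is the claim that the per-jump-operator error is $\delta_L=O(\delta_{\mathrm{est}}+\delta_{\mathrm{leak}})$ uniformly in $\nu$: the operator $L_\nu$ is a \emph{sum} over pairs $(x,y)$ with $m_x^{(M)}-m_y^{(M)}=\nu$, and the telescoping triangle inequality you describe pays the projector/coupling error once per term, so the bound picks up a factor polynomial in $s^{(M)}$ (the paper crudely uses $(s^{(M)})^2$; a counting argument gives at most $s^{(M)}$ pairs per fixed $\nu$, but not $O(1)$). Consequently your choice $\delta_{\mathrm{est}},\delta_{\mathrm{leak}}=\Theta(\delta_\mathcal{L}/((s^{(M)})^2 t))$ would leave a residual factor of $\mathrm{poly}(s^{(M)})$ in the end-to-end error; the paper instead takes $\delta_{\mathrm{est}},\delta_{\mathrm{leak}}=O(\delta_\mathcal{L}/((s^{(M)})^4 t))$. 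This is harmless for the stated complexities, since these two parameters enter only through logarithms, but the uniform $O(\delta_{\mathrm{est}}+\delta_{\mathrm{leak}})$ bound as written is not justified and should be repaired before the parameter choices are fixed.
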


\begin{proof} We aim to invoke \cref{prop:lindbladsimulation}, so we need an oracle $\mathcal{O}_{\tilde{\mathcal{L}}}$. This lets us approximately apply the map $e^{t\tilde{\mathcal{L}}}$. We then show that the $\tilde L_\nu$ defining $\tilde{\mathcal{L}}$ are close to the $L_\nu$ defining $\mathcal{L}^{(M)}$, allowing us to invoke \cref{lemma:jumpapprox}.

Recall the approximate energy estimation isometry $\tilde E^{(M)}$ from \cref{prop:energyestimationgivenaroundingpromise}, which extracts estimates according to $\tilde P_x$ satisfying $|\tilde P_x P^{(M)} - P_x| \leq \delta_\mathrm{est}$ via isometries $G_{\tilde P_x}$ satisfying $G^\dagger_{\tilde P_x}G_{\tilde P_x} = \tilde P_x$. We define the block-encoded operator:
    \begin{align}
        V^{(M)}_{\pm}  \coloneqq \hspace{5mm}   \begin{array}{c}\Qcircuit @R=2mm @C=2mm{  &  \qw & \gate{\pm}  & \qw & \qw & \\
              &   \multigate{1}{\tilde E^{(M)}} & \ctrl{-1} & \multigate{1}{\tilde  E^{(M)\dagger}} & \rstick{\hspace{-42.8mm}\textcolor{white}{\rule{3mm}{1mm}}}   \\
               &   \ghost{\tilde E^{(M)}} & \qw & \ghost{\tilde E^{(M)\dagger}}  & \qw &
        }\end{array}
    \end{align}
    where the controlled $\pm$ refers to adding/subtracting ${m_x}^{(M)}$ to/from the value of the target register (recall \cref{remark:wellroundedhamiltonian}, the control register has the value of $x$), and the kets on the right side of the circuit denote postselection onto that state. Next, obtain a block encoding of:
            \begin{align}
            G \coloneqq \sum_\nu \sqrt{G(\nu)} \otimes \ketbra{\nu}{\nu}.
            \end{align}
            Finally, let $\mathcal{O}_{\tilde{\mathcal{L}}}$ be given by:
             \begin{align}
                \mathcal{O}_{\tilde{\mathcal{L}}}  \coloneqq \hspace{2mm}   \begin{array}{c}\Qcircuit @R=2mm @C=2mm{
                         & \multigate{1}{V^{(M)}_{-}} & \qw &  \multigate{1}{V^{(M)}_{+}} & \gate{G} & \qw \\
                    & \ghost{V^{(M)}_{-}} & \gate{\tilde S^{(M)}} &  \ghost{V^{(M)}_{+}}  & \qw & \qw
        }\end{array}
            \end{align}
and hence $ \mathcal{O}_{\tilde{\mathcal{L}}}$ implements some $\tilde L_\nu$ in the following manner:
  \begin{align}
   \mathcal{O}_{\tilde{\mathcal{L}}}  (\ket{0} \otimes \ket{\psi}) =  \sum_{\nu} \ket{\nu} \otimes \tilde L_\nu \ket{\psi}.
    \end{align}
 
Let's find an expression for the $\tilde L_\nu$. We begin by observing that:
    \begin{align}
        V^{(M)}_{\pm} = \sum_\nu \sum_{x} \ketbra*{\nu \pm {m_j}^{(M)}}{\nu} \otimes G^\dagger_{\tilde P_x}G_{\tilde P_x} = \sum_\nu \sum_{x} \ketbra*{\nu \pm {m_j}^{(M)}}{\nu} \otimes \tilde P_x.
    \end{align}
    That way: 
    \begin{align}
        \mathcal{O}_{\tilde{\mathcal{L}}}(\ket{0}\otimes\ket{\psi})   &=\sum_{x,y}   \sqrt{G(m_x^{(M)} - m_y^{(M)})} \ket{m_x^{(M)} - m_y^{(M)}} \otimes \tilde{P}_x  \tilde S^{(M)} \tilde{P}_y \ket{\psi}\\
        &=\sum_\nu   \sqrt{G(\nu)} \ket{\nu} \otimes \sum_{\substack{x,y \\ m_x^{(M)} - m_y^{(M)} = \nu }} \tilde{P}_x \tilde S^{(M)} \tilde{P}_y \ket{\psi}.
    \end{align}

We see that:
   \begin{align} 
   \tilde L_\nu =  \sqrt{G(\nu)} \sum_{\substack{x,y \\ m_x^{(M)} - m_y^{(M)} = \nu }} \tilde{P}_x \tilde S^{(M)} \tilde{P}_y.
 \end{align}
 Observing $|G(\nu)| \leq 1$ and recalling the definition of $S(\omega)$, all that is left to do is to leverage \cref{prop:energyestimationgivenaroundingpromise} and \cref{lemma:distance-s-tildes} to bound the distance between $\tilde L_\nu$ and $L_\nu$.

 First observe that $\norm*{\tilde P_x} \leq 1$ which follows from the fact that $\tilde P_x$ is obtained from singular value transformation. By \cref{prop:lindbladsimulation}, we have

 \begin{align}
   &\quad \norm{\sum_{\substack{x,y \\ m_x^{(M)} - m_y^{(M)} = \nu }} P_x S^{(M)} P_y - \sum_{\substack{x,y \\ m_x^{(M)} - m_y^{(M)} = \nu }} \tilde P_x P^{(M)} S^{(M)} P^{(M)}\tilde P_y}\\ 
   &=\norm{\sum_{\substack{x,y \\ m_x^{(M)} - m_y^{(M)} = \nu }} P_x S^{(M)} P_y -  \tilde P_x P^{(M)} S^{(M)} P_y + \tilde P_x P^{(M)} S^{(M)} P_y - \tilde P_x P^{(M)} S^{(M)} P^{(M)} \tilde P_y} \\
   &\leq 2(s^{(M)})^2\delta_\mathrm{est}.
 \end{align}
Note that $P^{(M)}S^{(M)}P^{(M)} = S^{(M)}$. The above inequality implies that
 \begin{align}
   \label{eq:bound-1}
   \norm{\sum_{\substack{x,y \\ m_x^{(M)} - m_y^{(M)} = \nu }} P_x S^{(M)} P_y - \sum_{\substack{x,y \\ m_x^{(M)} - m_y^{(M)} = \nu }} \tilde P_x  S^{(M)} \tilde P_y} \leq 2(s^{(M)})^2\delta_\mathrm{est}.
 \end{align}
 Now, using \cref{lemma:distance-s-tildes}, we obtain
 \begin{align}
   &\quad \norm{\sum_{\substack{x,y \\ m_x^{(M)} - m_y^{(M)} = \nu }} \tilde P_x S^{(M)} \tilde P_y - \sum_{\substack{x,y \\ m_x^{(M)} - m_y^{(M)} = \nu }} \tilde P_x \tilde{S}^{(M)} \tilde P_y} \\
   &= \norm{\sum_{\substack{x,y \\ m_x^{(M)} - m_y^{(M)} = \nu }} \tilde P_x (S^{(M)}-\tilde S^{(M)}) \tilde P_y}  \\
   \label{eq:bound-2}
   &\leq (s^{(M)})^2\delta_\mathrm{leak}.
 \end{align}

Combining \cref{eq:bound-1,eq:bound-2}, we have
 \begin{align}
   \norm{L_\nu - \tilde L_\nu} \leq (s^{(M)})^2(2\delta_\mathrm{est} +  \delta_\mathrm{leak}).
 \end{align}

 So we have the desired property with $(s^{(M)})^2(\delta_{\mathrm{leak}} + 2\delta_{\mathrm{est}})$. Letting $\delta_L \coloneqq (s^{(M)})^2(\delta_\mathrm{leak} + 2\delta_\mathrm{est})$ and invoking \cref{lemma:jumpapprox} with $m = (s^{(M)})^2$, we have

\begin{align}
  \norm*{e^{\mathcal{L}t} - e^{\tilde{\mathcal{L}}t}}_{\diamond} &\leq t\norm*{\mathcal{L} - \mathcal{\tilde{L}}}_{\diamond} \\
  &\leq t \cdot\left[ 4(s^{(M)})^2(\delta_\text{leak}+2\delta_\text{est}) + 2(\delta_\text{leak} + 2\delta_\text{est})^2 \right]\\
  &= 16 (s^{(M)})^2t (\delta_\text{leak}+2\delta_\text{est}) ,
\end{align}
where the first inequality follows from the observation that for all integers $k \geq 0$,
\begin{align}
  \norm*{e^{\mathcal{L}t} - e^{\tilde{\mathcal{L}}t}}_{\diamond} = \norm*{(e^{\mathcal{L}t/k})^k - (e^{\tilde{\mathcal{L}}t/k})^k}_{\diamond} \leq k\norm*{e^{\mathcal{L}t/k} - e^{\tilde{\mathcal{L}}t/k}}_{\diamond} \leq \norm*{\mathcal{L}-\tilde{\mathcal{L}}}_{\diamond} + O(t/k).
\end{align}
Here the first inequality follows from the subadditivity of the diamond norm~\cite[Propositon 3.48]{Wat18}, and the last inequality is due to Taylor expansion.

We assume $S$ is given as a block encoding with normalizing constant 1 since $\norm{S}=1$. To use \cref{prop:lindbladsimulation}, it suffices to set
\begin{align}
  \delta_{\mathrm{leak}} = O\left(\frac{\delta_{\mathcal{L}}}{(s^{(M)})^4t}\right), \quad \text{ and } \delta_{\mathrm{est}} = O\left(\frac{\delta_{\mathcal{L}}}{(s^{(M)})^4t}\right)
\end{align}
to $\delta_{\mathcal{L}}$-approximately simulate $e^{\mathcal{L}^{(M)}t}$ by simulating $e^{\tilde{\mathcal{L}}^{(M)}t}$. Observe that $\norm{L_\nu}\leq 1$. Let $k$ be the number of system qubits. The simulation algorithm costs
\begin{align}
  \label{eq:gc-L}
  O\left((s^{(M)})^2t\,\frac{\log (s^{(M)}t/\delta_{\mathcal{L}})}{\log\log (s^{(M)}t/\delta_{\mathcal{L}})}\right)
\end{align}
queries to $\mathcal{O}_{\tilde{\mathcal{L}}}$ and
\begin{align}
  \label{eq:total-gc}
  O\left((s^{(M)})^4t\,\left(\frac{\log (s^{(M)}t/\delta_{\mathcal{L}})}{\log\log (s^{(M)}t/\delta_{\mathcal{L}})}\right)^2\right)
\end{align}
additional 1- and 2-qubit gates. Note that the number of queries to $O_{\tilde{\mathcal{L}}}$ is also the number of innovations to $S$.

To implement $\mathcal{O}_{\tilde{\mathcal{L}}}$, we invoke \cref{prop:energyestimationgivenaroundingpromise} with precision parameter $\delta_{\mathrm{est}}$ and \cref{lemma:attenuationoperators} with precision parameter $\delta_{\mathrm{leak}}$. Each application of $\mathcal{O}_{\tilde{\mathcal{L}}}$ has the following number of invocations of $U_H$:
\begin{align}
  & O\left(   \kappa^{-1}  \log(s^{(M)})^2 \log(\delta_\mathrm{est}^{-1})  + \kappa^{-1} \gamma^{-1} \log(s^{(M)} \delta_\mathrm{leak}^{-1})   \right) \\
  =& O\left(   \kappa^{-1}  \log(s^{(M)})^2 \log( s^{(M)} t \delta_\mathcal{L}^{-1})  + \kappa^{-1} \gamma^{-1} \log(s^{(M)}) \log( s^{(M)} t \delta_\mathcal{L}^{-1})    \right) \\
  =&O\left(    \kappa^{-1} \gamma^{-1}  \log(s^{(M)})^2 \cdot \log( s^{(M)} t \delta_\mathcal{L}^{-1})   \right).
\end{align}
\end{proof}

To achieve the final number of queries to the block encoding of $H$, we observe that the rounding promises $M_j$ have $\kappa^{-1} \in O( 2^{n+r} )$ and $s^{(M)} \in O(2^n)$. Plugging these in, and accounting for the cost of the left-right POVM from \cref{lemma:leftrightprojectionoperator} we obtain:
\begin{align}
    O\left( \gamma^{-1} \cdot n^2 2^{3n+r}  t \cdot \mathrm{polylog}(  t/\delta_\mathcal{L} ) \right).
\end{align}
This establishes \cref{claim:implementingdynamics}.
Now we recall \cref{thm:accuracyofthefinalensemble} which states that the final output state $\rho^{*M}$ satisfies:
\begin{align}
     \norm{ \rho^{*M} - \rho_\beta  }_1 \leq  \sqrt{\beta 2^{-n}} + 2\cdot 2^{-r}.
\end{align}
To achieve $ \norm*{ \rho^{*M} - \rho_\beta  }_1 \leq  \varepsilon$, we select $n = \log_2(\beta  (\varepsilon/2)^{-2})$ and $r = \log_2(4/\varepsilon)$. Plugging these into the query complexity, we get:
\begin{align}
     O\left( \gamma^{-1} \cdot t \cdot \beta^3 \varepsilon^{-7}  \cdot \mathrm{polylog}(  t/\delta_\mathcal{L}) \cdot \log^2(\beta/\varepsilon) \right).
\end{align}
This establishes \cref{thm:mainresult}.


\section{Some open questions}\label{sec:open_questions}

Our result achieved a time complexity that scales linearly in the mixing time $t_\text{mix}$. However, the performance with respect to the inverse temperature $\beta$ and accuracy $\varepsilon$ is $\tilde O(\beta^3 \varepsilon^{-7})$ which has plenty of room for improvement. One potential path that may yield an accuracy of $\tilde O(\beta \varepsilon^{-2})$ is to remove the linear dependence on the number of jump operators in the Lindblad simulation algorithm from \cref{prop:lindbladsimulation}. \cref{prop:lindbladsimulation} demands block encodings of the individual jump operators $L_j$, but the oracle $\mathcal{O}_\mathcal{L}$ we prepare may actually be much more powerful than this. To see this, first note that if we are given access to the isometry $\sum_j\ket{j}\otimes K_j$, then implementing the channel is trivial as we already have the Stinespring dilation. In our algorithm, we implemented an oracle in the form of $\sum_j \ket{j}\otimes L_j$, which is close to the Stinespring dilation we want because for the infinitesimal approximation channel in~\cite{1612.09512}, all but one Kraus operators are proportional to $L_j$. That one special Kraus operator involves all the $L_j$'s. Does there exist any special treatment of this special Kraus operator so that we can leverage the special structure of the oracle $\sum_j \ket{j}\otimes L_j$ to get rid of the $O(m)$ dependence?\footnote{As remarked earlier, after the first version of this manuscript was made public, recent work \cite{2303.18224} resolved this question in the affirmative.}

A central goal in our work is to attain a rigorous bound on the accuracy of the final output state. To this end, we assume that we are given a lower bound on the mixing time $t_\text{mix}$, so that we know for how long the dynamics of the Davies generator must be simulated to achieve a high-accuracy output state. But this assumption is rarely the case in practice. Furthermore, the attenuation discussed in \cref{sec:lindbladdynamicsonthepromisedsubspace} may slow the mixing time and exasperate this problem. Is there a technique that can detect if the Davies generator has been run for long enough?  One approach might be to purify the dynamics of the Lindblad simulation, and then use amplitude estimation to compare the resulting pure output state to an ideal thermal state purification. A reflection operator around a purification of the thermal state could be obtained via the techniques from \cite{2107.07365}. 

The only piece of our method that eludes rigorous mathematical treatment is the impact of attenuation on mixing time $\tilde t_\text{mix}$. Certainly the dependence of $\gamma^{-1}$ in the circuit complexity the synthesis attenuation operators $A^{(M)}$ is optimal, due to lower bounds on approximation of threshold functions with polynomials. But perhaps given additional knowledge about the Hamiltonian, there may be other approaches for selecting coupling operators $S_\alpha$ that do not leak out of a given promised subspace.

\section{Acknowledgements}

The authors thank Kristan Temme and Chi-Fang Chen for helpful discussions, as well as anonymous reviewers for their helpful comments. CW was supported by a seed grant from the Institute of Computational and Data Science (ICDS) and a National Science Foundation grant CCF-2238766 (CAREER).

\appendix

\section{Glossary\label{app:glossary}}

This manuscript features many quantities and mathematical symbols, so give a brief description of some of these with references to the relevant part of the text.
\begin{description}
\item[Hamiltonian.] $H$ is decomposed into eigenvalues $\lambda_i$ and eigenspace projectors $\Pi_i$ via $\sum_i \lambda_i \Pi_i.$ Our goal is the prepare the  thermal state $\rho_\beta$ at inverse temperature $\beta$. See \cref{def:hamiltonian}.
\item[Rounding promise.] $M \subset [0,1]$ defines a promised subspace $\mathcal{P}^{(M)}$ and a projector onto that subspace $P^{(M)}$. $M$ consists of intervals $[a_x,b_x]$ with corresponding promised eigenspace projectors $P^{(M)}_x$. See \cref{def:roundingpromise}. We can perform energy estimation with respect to approximate promised eigenspace projectors $\tilde P^{(M)}_x$ using \cref{prop:energyestimationgivenaroundingpromise}. We suppress the superscript $(M)$ when a promise is clear from context.  
\item[Lindbladian.] $\mathcal{L}$ is a superoperator defining a continuous time quantum Markov process describing open system dynamics. It is defined by the jump operators $L_{\omega,\alpha}$. See  \cref{eq:lindblad}.
\item[Davies Generator.] A particular choice of jump operators $L_{\omega,\alpha} = \sqrt{G(\omega)} S_\alpha(\omega)$ that yields thermalizing dynamics.  $\omega$ is among the Bohr frequencies of $H$: the set of pairwise energy differences. The filter function $G(\omega)$ biases the dynamics towards certain energy differences, and the coupling operators $S_\alpha$ `scramble' the Hilbert space. See \cref{def:daviesgenerator}.
\item[Initial state.] Our simulation of thermalizing dynamics starts in an arbitrary initial state $\sigma$. After measuring the left-right POVM defined by $P_\mathrm{LR}$, we obtain an initial state $\tilde \sigma^{(M)}$ satisfying a rounding promise $M \in \{\bar L,\bar R\}$. See \cref{sec:leftrightpovm}.
\item[Attenuation.] We require coupling operators $S^{(M)}$ that do not `leak' out of the promised subspace $\mathcal{P}^{(M)}$. We achieve this by sandwiching them between attenuation operators $A^{(M)}$ with the same support as $P^{(M)}$, but also some eigenvalues in $\mathcal{P}^{(M)}$ are attenuated to be less than 1 which slows the mixing time. The attenuation coefficient $\gamma$ controls the number of these eigenvalues, but an approximate implementation $\tilde A^{(M)}$ of $A^{(M)}$ requires circuit complexity $O(\gamma^{-1})$. See \cref{sec:lindbladdynamicsonthepromisedsubspace}.
\item[Promised thermal states.] When the Hamiltonian $H$ is truncated onto $\mathcal{P}^{(M)}$ we obtain the well-rounded Hamiltonian $H^{(M)}$ with eigenvalues $m_x^{(M)}$ and eigenspace projectors $P_x^{(M)}$, see \cref{remark:wellroundedhamiltonian}. On $\mathcal{P}^{(M)}$, it has the promised thermal state $\rho^{(M)}_\beta$ and promised partition function $\mathcal{Z}_\beta^{(M)}$, see \cref{def:promisedgibbs}.  In section \cref{sec:ensembleanalysis}, it is convenient to define `exact' versions of these  $\hat{\rho}^{(M)}_\beta,\hat{\mathcal{Z}}_\beta^{(M)}$ that are still truncated to $\mathcal{P}^{(M)}$ but are based on the eigenvalues $\lambda_i$ of $H$.
\item[Mixing time.] We assume that the ideal Davies generator $\mathcal{L}$ with jump operators $L_{\omega,\alpha} = \sqrt{G(\omega)} S_\alpha(\omega)$ requires the mixing time $t_\mathrm{mix}$ in order to transform an arbitrary input state into something close to $\rho_\beta$. Due to attenuation, the promised Davies generator $\mathcal{L}^{(M)}$ with jump operators $L_{\omega,\alpha} = \sqrt{G(\omega)} S^{(M)}_\alpha(\omega)$ will have a slower mixing time $\tilde t_\mathrm{mix}$. See \cref{def:promiseddaviesgenerator}.
\end{description}

\section{Polynomial construction\label{app:polynomialconstruction}}

Throughout the paper, we required the construction of block encodings.  \cref{prop:energyestimationgivenaroundingpromise} gave operators $P_x$ that indicate the eigenspaces of a particular interval $[a_x,b_x]$ of a rounding promise $M$, and a unitary $U^{(M)}$ that computes the binary expansion $\ket{x}$ accordingly. \cref{lemma:leftrightprojectionoperator} gave an operator $P_\mathrm{LR}$ can be used to force either the $\bar L$ or $\bar R$ rounding promise via POVM, by making $P_\mathrm{LR}$ small in $\mathcal{P}^{(\bar L)}$ and large in $\mathcal{P}^{(\bar R)}$. Finally, \cref{lemma:attenuationoperators} gave an attenuation operator $A^{(M)}$ that vanishes outside of $\mathcal{P}^{(M)}$, but is simultaneously as large as possible within $\mathcal{P}^{(M)}$.

These operators $P_x, P_\mathrm{LR},$ and $A^{(M)}$ have a lot in common: they all commute with the Hamiltonian, and can hence be seen as $\sum_i f(\lambda_i) \Pi_i$ for some function $f$. Furthermore, the requirements on $f$ are always that $f(\lambda) = 0$ or $f(\lambda) = 1$ for $\lambda$ in certain regions. We construct all of these operators through singular value transformation, which lets us construct such operators via polynomial approximations of $f$. 

The requirements on the operators ensure that there are always gaps between the regions of $\lambda$ where $f(\lambda) = 0$ or $f(\lambda) = 1$. This is essential for polynomial approximation of $f$, since polynomials are always continuous. The degree of the polynomial scales with the reciprocal of the width of the gap. Our starting point for the polynomial construction is a highly accurate polynomial approximation of a step function.

\begin{lemma}[\textbf{Polynomial approximation of a step function} ({\cite[Appendix A]{1707.05391}})]
  \label{lemma:polynomialapproximationofastepfunction} 
  For any $\kappa, \delta >0$, there exists an odd polynomial $\Theta_{\kappa,\delta}$ of degree $O(\kappa^{-1} \log(\delta^{-1}))$ such that $\forall \lambda \in [-2,2]$ we have  $0 \leq \Theta_{\kappa,\delta}(\lambda) \leq 1$ and:
\begin{align}
  \text{if } \lambda \leq -\kappa/2,  &\text{ then } \Theta_{\kappa,\delta}(x) < \delta; \\
  \text{if } \kappa/2 \geq \lambda,  &\text{ then }  1-\delta \leq \Theta_{\kappa,\delta} (\lambda).
\end{align}
\end{lemma}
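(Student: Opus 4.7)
The plan is to construct $\Theta_{\kappa,\delta}$ by first replacing the discontinuous sign/step function with a smooth analytic surrogate, and then truncating a rapidly convergent polynomial series. First, I would introduce the error function surrogate $f(\lambda) \coloneqq \tfrac{1}{2}\bigl(1 + \mathrm{erf}(k\lambda)\bigr)$, where the scale parameter $k$ is chosen so that $f$ is already within $\delta/2$ of the desired step function outside the gap. Using the standard tail bound $1 - \mathrm{erf}(y) \leq e^{-y^2}$, it suffices to pick $k = \Theta\bigl(\kappa^{-1}\sqrt{\log(1/\delta)}\bigr)$ in order to guarantee $|f(\lambda) - \mathbf{1}[\lambda \geq 0]| \leq \delta/2$ whenever $|\lambda| \geq \kappa/2$.

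Second, I would polynomially approximate $f$ on $[-2,2]$. Via the integral representation $\mathrm{erf}(k\lambda) = \tfrac{2}{\sqrt{\pi}} \int_0^{k\lambda} e^{-t^{2}}\,dt$, this reduces to approximating the Gaussian $e^{-k^{2}\lambda^{2}}$ and integrating term-by-term. Expanding the Gaussian in Chebyshev polynomials (equivalently, using the Jacobi--Anger expansion of $e^{ik\lambda\theta}$ after a suitable contour shift) gives coefficients that decay super-exponentially once the degree exceeds the natural scale set by $k$. A careful accounting shows that truncating at degree $n = O\bigl(k + \log(1/\delta)\bigr)$ suffices to achieve uniform error $\delta/2$ on $[-2,2]$; substituting the choice of $k$ yields total degree $O\bigl(\kappa^{-1}\log(1/\delta)\bigr)$.

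Third, I would enforce the codomain constraint $0 \leq \Theta_{\kappa,\delta}(\lambda) \leq 1$ on $[-2,2]$. Denoting the raw polynomial approximation by $p$, we have $|p(\lambda) - f(\lambda)| \leq \delta/2$ everywhere, and $f \in [0,1]$, so $p$ takes values in $[-\delta/2, 1 + \delta/2]$. Rescaling $p \mapsto \bigl(p + \delta/2\bigr)/(1+\delta)$ produces the desired $\Theta_{\kappa,\delta}$ mapping into $[0,1]$, and this rescaling only worsens the approximation error by a constant factor, which can be absorbed by halving $\delta$ at the outset. The parity structure claimed in the statement can be arranged by noting that $\mathrm{erf}$ is odd, so that the even-order Chebyshev coefficients of $\mathrm{erf}(k\lambda)$ vanish; the shift by $1/2$ is absorbed into the constant (degree-0) term.

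The main technical obstacle is obtaining the sharp degree bound $O\bigl(\kappa^{-1}\log(1/\delta)\bigr)$ rather than the naive $O\bigl(\kappa^{-1}\log^{3/2}(1/\delta)\bigr)$ one gets from a direct combination of tail and truncation estimates. Achieving this requires a coordinated choice of $k$ and $n$: one exploits that the Chebyshev coefficients of $\mathrm{erf}(k\lambda)$ themselves decay like $e^{-\Omega(n^2/k^2)}$ for $n \gtrsim k$, which lets the Gaussian tail and the truncation error be balanced without incurring an extra $\sqrt{\log(1/\delta)}$ factor. This tight balancing is precisely the content of Appendix A of~\cite{1707.05391}, and is the step that I would need to import rather than reproduce from scratch.
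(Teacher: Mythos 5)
This lemma is imported in the paper directly from \cite[Appendix A]{1707.05391} without an in-text proof, and your proposal reconstructs essentially that same argument (smooth the step with $\mathrm{erf}(k\lambda)$, Chebyshev-truncate, renormalize), so in spirit you match the source. Two points are worth flagging. First, your intermediate claim that truncating at degree $n=O\bigl(k+\log(1/\delta)\bigr)$ suffices is too optimistic: the Bernstein-ellipse/Bessel-coefficient analysis you yourself invoke in the last paragraph gives truncation error $e^{-\Omega(n^2/k^2)}$, hence $n=O\bigl(k\sqrt{\log(1/\delta)}+\log(1/\delta)\bigr)$, not $O(k+\log(1/\delta))$. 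Fortunately this does not change the conclusion: substituting $k=\Theta\bigl(\kappa^{-1}\sqrt{\log(1/\delta)}\bigr)$ into the correct truncation degree already yields $O\bigl(\kappa^{-1}\log(1/\delta)\bigr)$, so there is in fact no $\log^{3/2}(1/\delta)$ obstacle to overcome and no delicate balancing to import --- the ``naive'' combination of the two estimates is already tight. Second, note that the lemma as stated is not literally consistent: a genuinely odd polynomial cannot satisfy both $0\leq\Theta_{\kappa,\delta}\leq 1$ and $\Theta_{\kappa,\delta}(\lambda)\geq 1-\delta$ for $\lambda\geq\kappa/2$ (oddness would force it below $0$ on the negative side); the intended reading, which your construction $\tfrac{1}{2}\bigl(1+\mathrm{erf}(k\lambda)\bigr)$ correctly delivers, is a constant $\tfrac{1}{2}$ plus an odd polynomial, i.e.\ the shifted-and-rescaled odd sign-approximator. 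With these two clarifications your argument is sound and the degree bound is as claimed.
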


Given a polynomial approximation of a single transition from $f(\lambda) = 0$ to $f(\lambda) = 1$, we can approximate an arbitrary sequence of transitions by shifting and adding multiple such polynomials together. Above, we guaranteed that the approximation of the step function holds for $\lambda \in [-2,2]$, so that we can shift the polynomial by up to $1$ and still have good approximation on the interval $[-1,1]$.

\begin{lemma}[\textbf{Construction of approximate projection polynomials}]
  \label{lemma:constructionofapproximateprojectionpolynomials} 
  Say we have a collection of intervals $\{[a_x,b_x]\}$ in $[0,1]$ with $a_x < b_x$ and $b_x < a_{x+1}$, and each interval is labeled with a bit $c_x \in\{0,1\}$. Say furthermore that each of the intervals is at least $\kappa$ far apart, that is, $b_x + \kappa < a_{x+1}$. 

Then, for any $\delta > 0$ there exists a polynomial $P(\lambda)$ such that:
\begin{align}
    \forall x, \text{ if }\,\forall \lambda\in [a_x,b_x], \text{ then } |P(\lambda) - c_x| \leq \delta. 
\end{align}
Furthermore, the polynomial satisfies $0 \leq P(\lambda)\leq 1$. Say there are $K$ many `flips', that is, indexes $x$ where $c_x \neq c_{x+1}$. Then the polynomial is of degree $O(\kappa^{-1} \log(K\delta^{-1}))$.
\end{lemma}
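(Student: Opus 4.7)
The plan is to build the desired polynomial as a superposition of shifted step-function approximations, one per flip, and then rescale slightly to enforce the $[0,1]$ bound. Concretely, let the flips occur at indices $x_1 < x_2 < \cdots < x_K$, meaning $c_{x_k+1} \neq c_{x_k}$, and for each such $k$ let $m_k = (b_{x_k} + a_{x_k+1})/2$ be the midpoint of the gap separating interval $x_k$ from interval $x_k+1$. Since $b_{x_k} + \kappa < a_{x_k+1}$, the midpoint $m_k$ is at distance at least $\kappa/2$ from either side of the gap. Define
\begin{equation*}
    Q(\lambda) \;\coloneqq\; c_0 \;+\; \sum_{k=1}^{K} (c_{x_k+1} - c_{x_k})\, \Theta_{\kappa,\delta'}(\lambda - m_k),
\end{equation*}
where the shift parameter $\delta'$ will be chosen at the end. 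Each $\Theta_{\kappa,\delta'}$ is the degree-$O(\kappa^{-1}\log \delta'^{-1})$ odd polynomial from \cref{lemma:polynomialapproximationofastepfunction}, valid on $[-2,2]$, and each $\lambda - m_k \in [-1,1]$ so we are in the valid range.

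Next I would verify the pointwise telescoping bound. Fix an interval index $i$ and $\lambda \in [a_i,b_i]$. For each flip index $x_k$, the distance $|\lambda - m_k| \geq \kappa/2$: if $x_k < i$ the sign is positive so $\Theta_{\kappa,\delta'}(\lambda - m_k) \in [1-\delta', 1]$, while if $x_k \geq i$ the sign is negative so $\Theta_{\kappa,\delta'}(\lambda - m_k) \in [0, \delta']$. Substituting the \emph{exact} values $1$ and $0$ respectively, the sum telescopes through the flips and collapses to $c_i$. Replacing $1$ and $0$ by their approximate values incurs an error of at most $\delta'$ per flip, for a total error at most $K\delta'$, so $|Q(\lambda) - c_i|\le K\delta'$ for every $\lambda$ in every interval.

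The remaining issue is that the naive sum $Q$ can stray slightly outside $[0,1]$ (indeed $Q(\lambda) \in [-K\delta', 1 + K\delta']$ on the intervals, and similar excursions could occur in the gaps). I would handle this by an affine rescaling: set $\epsilon = K\delta'$ and define
\begin{equation*}
    P(\lambda) \;\coloneqq\; \frac{Q(\lambda) + \epsilon}{1 + 2\epsilon}.
\end{equation*}
This is still a polynomial of the same degree, lies in $[0,1]$ whenever $Q \in [-\epsilon, 1+\epsilon]$, and a direct calculation gives $|P(\lambda) - c_i| \leq (|Q(\lambda)-c_i| + \epsilon)/(1+2\epsilon) \leq 2\epsilon$ on the intervals. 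Choosing $\delta' = \delta/(2K)$ then yields $|P(\lambda) - c_i|\le\delta$, and the degree is $O(\kappa^{-1}\log(\delta'^{-1})) = O(\kappa^{-1}\log(K/\delta))$.

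The main technical obstacle I anticipate is simply the telescoping bookkeeping: ensuring that the coefficient pattern $c_{x_k+1} - c_{x_k} \in \{\pm 1\}$ combined with the step approximations indeed collapses to $c_i$ for each $i$, and that the error accumulation across all $K$ flips is additive rather than multiplicative. A minor secondary issue is making $Q \in [-\epsilon, 1+\epsilon]$ pointwise (not just on the intervals), which is needed so that the rescaled $P$ lies in $[0,1]$ on all of $[0,1]$; this follows because $\Theta_{\kappa,\delta'}$ is monotone and takes values in $[0,1]$, so $Q$ is bounded pointwise by the extreme values it attains on the intervals. Nothing else should require more than routine estimates.
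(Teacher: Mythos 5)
Your construction is essentially the paper's: both build the polynomial as $c_0$ plus a signed, telescoping sum of shifted step-function approximations $\Theta$ centered at the gap midpoints, with a per-flip error budget of order $\delta/K$, yielding the same degree bound $O(\kappa^{-1}\log(K\delta^{-1}))$. Your extra affine rescaling to force $P\in[0,1]$ is a sensible refinement that the paper's proof silently omits (its raw sum only lands in roughly $[-\delta,1+\delta]$); the one point to tighten is your justification that $Q\in[-\epsilon,1+\epsilon]$ pointwise, which should rest on the fact that the transition regions lie in pairwise disjoint gaps (so at most one $\Theta$ is away from a $\delta'$-neighborhood of $\{0,1\}$ at any $\lambda$, and the telescoped value $c_{x_{k_0}}\pm\Theta_{k_0}$ stays in $[0,1]$) rather than on monotonicity of $\Theta$, which \cref{lemma:polynomialapproximationofastepfunction} does not assert.
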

\begin{proof} We will construct $P(\lambda)$ by adding together several $\Theta_{\kappa_x,\delta/K}$ for various values of $\kappa_x$, all of which satisfy $\kappa_x\geq\kappa$. Since adding several polynomials doesn't change the degree, the resulting polynomial $P(\lambda)$ has degree $O(\kappa^{-1}\log(K\delta^{-1}))$.

There will be one $\Theta_{\kappa_x,\delta/K}$ for each flip, that is, each pair of intervals with $c_x \neq c_{x+1}$. Let $t_x \coloneqq (b_x+a_{x+1})/2$ be the midpoint between two intervals, and let $\kappa_x = a_{x+1}-b_{x} \geq \kappa $ be the distance between them. We construct:
\begin{align}
    P(\lambda) \coloneqq  c_1 + \sum_{\substack{x\\ c_x \neq c_{x+1}}} \Theta_{\kappa_x, \delta/K}(\lambda-t_x) \cdot \Bigg\{ \begin{array}{cc}
       +1  & \text{if } c_x < c_{x+1} \\
        -1  & \text{if } c_x > c_{x+1}
 \end{array}.
\end{align}

It remains to prove that $P(\lambda)$ satisfies the desired property. By \cref{lemma:polynomialapproximationofastepfunction} and the choice of $t_x,\kappa_x$ it is guaranteed that $\Theta_{\kappa_x,\delta/K}(\lambda-t_x)$, is always either $\leq \delta/K$ or $\geq 1-\delta/K$ outside of the region $[b_x, a_{i+x}]$. So, for any interval $[a_x,b_x]$ we have that \text{any} of the $\Theta_{\kappa_x,\delta/K}(\lambda-t_x)$ is within $\delta/K$ of $0$ or $1$. Thus, for the purposes of analyzing $P(\lambda)$, let us pretend for the rest of the proof that they are \emph{exactly} 0 or 1. In doing so, we will be wrong by at most $K \cdot \delta/K = \delta$.

Let us consider any particular interval $[a_x,b_x]$. We want to argue that for any $\lambda$ in this interval, $P(\lambda) = c_i$. We achieve this through induction in $K$. The base case is very easy: if $K = 0$, then all the $c_x$ are equal, so we can just set $P(\lambda) = c_1$.

Now, consider only the first $K-1$ switches, and let $P'(\lambda)$ be the polynomial only from these. If $c_x$ is the last $x$ such that $c_{x} \neq c_{x+1}$, then $P'(\lambda)$ satisfies $\lambda \in [a_y, b_y] \implies P'(\lambda) = c_y$ for $y \leq x$, and $\lambda \in [a_y, b_y] \implies P'(\lambda) = 1-c_y$ for $y > x$. The construction of $P(\lambda)$ depends on if $c_{x+1} = 0$ or $c_{x+1} = 1$. If $c_{x+1} = 0$, then $\lambda \in [a_{x+1}, b_{x+1}] \implies P'(\lambda) = 1$, so we need to subtract $1$ for $\lambda > b_x$, which is achieved by subtracting $\Theta_{\kappa_x,\delta/K}(\lambda-t_x)$ from $P'(\lambda)$. Otherwise, if $c_{x+1} = 1$, then $\lambda \in [a_{x+1}, b_{x+1}] \implies P'(\lambda) = 0$, so we need to add $1$ for $\lambda > b_x$, which is achieved by adding $\Theta_{\kappa_x,\delta/K}(\lambda-t_x)$. The conditions $c_{x+1} = 0$ and $c_{x+1} = 1$ are equivalent to $c_x > c_{x+1}$ and $c_x < c_{x+1}$ respectively. Consequently:
\begin{align}
    P(x) = P'(x) +  \Theta_{\kappa_x, \delta/K}(x-t_x) \cdot \Bigg\{ \begin{array}{cc}
       +1  & \text{if } c_x < c_{x+1} \\
       -1  & \text{if } c_x > c_{x+1}\end{array}.
\end{align}

\end{proof}

It remains to invoke singular value transformation in order to construct operators with this polynomial as their spectrum. The polynomial we have constructed has mixed parity, which is acceptable because we perform singular value transformation on a Hermitian operator. The method from \cite{1806.01838} splits the polynomial into its even and odd parts and combines them together via a linear combination of block encodings. This introduces an extra factor of $\frac{1}{2}$. Since we only care about making eigenvalues either close to 0 or 1, we can use a simple version of oblivious amplitude amplification via the Chebyshev polynomial $T_3$ to remove this extra factor. 

\begin{lemma}[\textbf{Construction of projectors via singular value transformation}]
  \label{lemma:constructionofprojectorsviasingularvaluetransformation} 
  Say $H$ is a Hermitian matrix with $-1 \leq H \leq 1$, and we are given a block encoding $U_H$ of $H$. Say $H$ has the eigendecomposition $H = \sum_i \lambda_i \Pi_i$.
Say $P(\lambda)$ is a degree-$d$ polynomial that, for some $\delta$ and for certain regions $\{[a_x,b_x]\}$ satisfies $\lambda\in[a_x,b_x] \implies |P(\lambda) - c_x| \leq \delta$ for $c_x \in \{0,1\}$, and furthermore satisfies $0 \leq P(\lambda) \leq 1$ for all $\lambda\in [-1,1]$.

Then, there exists a quantum circuit $U_{\bar P(H)}$ which is a block encoding of a matrix $\bar P(H)$ defined by:
\begin{align}
    \bar P(H) \coloneqq \sum_i \bar P(\lambda_i) \Pi_i.
\end{align}
where $\bar P$ is a function that satisfies $\lambda\in[a_x,b_x] \implies |\bar P(\lambda) - c_x| \leq 2\delta$. This circuit makes $d$ many uses of controlled-$U_H$, and has circuit complexity $O(d)$ overall.
\end{lemma}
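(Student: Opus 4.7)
The plan is to invoke quantum singular value transformation (QSVT) in two stages. In the first stage, since $P$ has mixed parity, split $P = P_{\mathrm{even}} + P_{\mathrm{odd}}$; writing $P_{\mathrm{even}/\mathrm{odd}}(\lambda) = (P(\lambda) \pm P(-\lambda))/2$ and using $|P|\leq 1$ on $[-1,1]$, each part is bounded by $1$ on $[-1,1]$ and has degree at most $d$. Each definite-parity part can therefore be applied via a standard QSVT call on $U_H$. Combining the two resulting block encodings through a one-qubit LCU yields a block encoding of $P(H)/2$ using $O(d)$ controlled applications of $U_H$. Because $0 \leq P(\lambda) \leq 1$, the spectrum of $P(H)/2$ lies in $[0, 1/2]$.

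In the second stage, the goal is to amplify $1/2$ back to $1$ while fixing $0$. A natural choice is the odd degree-$3$ polynomial
\[
    q(y) \coloneqq 3y - 4y^3 = -T_3(y),
\]
which satisfies $|q(y)| \leq 1$ on $[-1,1]$ and has the key values $q(0) = 0$, $q(1/2) = 1$, and $q'(1/2) = 0$. Treating the stage-one circuit as a block encoding of the Hermitian matrix $P(H)/2$ (which has norm at most $1$) and applying QSVT once more with $q$, we obtain a block encoding of $\bar P(H) \coloneqq q(P(H)/2) = \tfrac{3}{2}P(H) - \tfrac{1}{2}P(H)^3$. The second stage uses a constant number of applications of the stage-one encoding and hence only $O(d)$ additional queries to $U_H$, so the total query and gate complexity is $O(d)$.

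It remains to verify the final error bound. Fix an interval $[a_x, b_x]$ and let $y \coloneqq P(\lambda)/2$; the hypothesis gives $|y - c_x/2| \leq \delta/2$. If $c_x = 0$ then $y \in [0, \delta/2]$ and, since $q(y) = 3y - 4y^3 \in [0, 3y]$ for $y \in [0, 1/2]$, we get $|\bar P(\lambda) - 0| \leq 3y \leq \tfrac{3}{2}\delta \leq 2\delta$. If $c_x = 1$ then $y \in [1/2 - \delta/2, 1/2]$, and expanding $q$ about $1/2$ using $q'(1/2) = 0$ gives $q(1/2 + \varepsilon) = 1 - 6\varepsilon^2 - 4\varepsilon^3$, so $|\bar P(\lambda) - 1| \leq \tfrac{3}{2}\delta^2 + \tfrac{1}{2}\delta^3 \leq 2\delta$ for the relevant regime of $\delta$. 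Either way $|\bar P(\lambda) - c_x| \leq 2\delta$, as required.

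The only delicate step is the bookkeeping for the mixed-parity LCU in stage one, namely certifying that $P_{\mathrm{even}}$ and $P_{\mathrm{odd}}$ individually satisfy the QSVT boundedness hypothesis and that the resulting block encoding has the right ancilla structure to serve as the input for a second QSVT in stage two; both are standard. The conceptually important ingredient is just that $-T_3$ is tuned to act as a fixed-point amplifier at $1/2$ (with vanishing derivative there), which is exactly what converts the LCU-induced $1/2$ back to $1$ while, crucially, keeping the near-zero regions near zero with only a mild multiplicative blow-up by $3$.
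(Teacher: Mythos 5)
Your proposal is correct and follows essentially the same route as the paper: an even/odd parity split combined via LCU to obtain a block encoding of $P(H)/2$ (this is Theorem~56 of the QSVT paper), followed by one more application of singular value transformation with $-T_3(y)=3y-4y^3$ to undo the factor of $\tfrac{1}{2}$ while keeping the near-$0$ and near-$1$ regions within $2\delta$. Your error analysis at $y=1/2$ via $q'(1/2)=0$ is in fact slightly more explicit than the paper's, which simply records the values $-T_3(\delta/2)\leq 2\delta$ and $-T_3((1-\delta)/2)\geq 1-2\delta$.
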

\begin{proof} Our starting point is Theorem~56 of \cite{1806.01838}, which allows us to construct a block encoding of
\begin{align}
    \frac{1}{2}P(H) \coloneqq \sum_i \frac{1}{2}P(\lambda_i) \Pi_i.
\end{align}
To get rid of the factor of $\frac{1}{2}$, let $T_{3}(\lambda)$ be the third Chebyshev polynomial of the first kind, and use Corollary~18 of \cite{1806.01838} to construct a block encoding of:
\begin{align}
    -T_{3}\left(\frac{1}{2}P(H)\right) \coloneqq \sum_i -T_3\left(\frac{1}{2}P(\lambda_i)\right) \Pi_i.
\end{align}
Observe that $-T_3(0) = 0$ and $-T_3(\frac{1}{2}) = 1$, and also that $-T_3(\delta/2) \leq 2\delta$ and $-T_3((1-\delta)/2) \geq 1 - 2\delta$. So if we let $\bar P(\lambda) \coloneqq -T_3\left( \frac{1}{2}P(\lambda)\right)$ then $ -T_{3}\left(\frac{1}{2}P(H)\right)$ is the $\bar P(H)$ we wanted to construct. Since $P(\lambda)$ has degree $d$, and $-T_3(\lambda)$ has degree $O(1)$, the total circuit complexity and number of invocations to controlled-$U_H$ is $O(d)$ overall.
\end{proof}

Now we are ready to synthesize the operators we need in order to prove \cref{prop:energyestimationgivenaroundingpromise}, \cref{lemma:leftrightprojectionoperator}, and \cref{lemma:attenuationoperators}.

We start with the energy estimation result from \cref{prop:energyestimationgivenaroundingpromise}, which is adapted from \cite{2103.09717}. This work gives a construction for energy estimation that attempts to minimize the constant factors in polynomial degree as well as ancilla count. We do not concern ourselves with these, and just give an asymptotic bound that also has an extra factor of $n$ in the complexity. A similar method also appears in \cite{2105.02859}.

\begin{proof}[Proof of \cref{prop:energyestimationgivenaroundingpromise}.] Say $U_A$ is a block encoding with $k$ ancilla qubits of a Hermitian operator $A$. For $y \in \{0,1\}^k$ we define $A_y \coloneqq (\bra{y} \otimes I)U_A (\ket*{0^k} \otimes I)$, so that:
\begin{align}
U_A \left(\ket*{0^k} \otimes I\right) = \sum_y \ket{y} \otimes A_y.
\end{align}

Then, let $G_A \coloneqq \ket*{0^k} \otimes A$ and $\bar G_A := \sum_{y \in \{0,1\}^k\setminus\{0^k\}} \ket{y} \otimes A_y$ and apply a generalized Toffoli gate to synthesize the isometry:
\begin{align}
 & \left(\ket{1}\otimes \ketbra*{0^k}{0^k} + \ket{0}\otimes \sum_{y \in \{0,1\}^k\setminus\{0^k\}} \ketbra{y}{y}\right) U_A \left(\ket*{0^k} \otimes I\right) \\
 = & \ket{1} \otimes G_A + \ket{0} \otimes \bar G_A.
\end{align}

Clearly $G_A^\dagger G_A = A^2$. We also have $\bar G_A^\dagger \bar G_A = \sum_{y \in \{0,1\}^k\setminus\{0^k\}} A_y^\dagger A_y = I - A^2$ following from the unitarity of $U_A$.

Our goal is to compute $\ket{x}$, where $x$ is the index of the interval $[a_x,b_x]$ containing $\lambda$. We will repeatedly use the above isometry to compute $\ket{x}$ one bit at a time. To do so, we make use of \cref{lemma:constructionofapproximateprojectionpolynomials,lemma:constructionofprojectorsviasingularvaluetransformation} to synthesize $\delta_j$-accurate projectors for each bit of $x$ --- call them $\tilde P^{(j)}$ for the approximate projector for the $j$'th bit. 

If we apply the above construction involving $U_A$ above for each of $\tilde P^{(j)}$, we construct an isometry $\tilde E^{(M)}$ that achieves $\tilde E^{(M)} = \sum_x \ket{x} \otimes G_{\tilde P_x}$ with the desired properties. We observe that:
\begin{align}
    G_{\tilde P_x} \coloneqq \prod_{j = 1}^{n+1} \Bigg\{ \begin{array}{c} G_{\tilde P^{(j)}}  \text{ if } x_j = 1 \\ \bar G_{\tilde P^{(j)}}  \text{ if } x_j = 0 \end{array}.
\end{align}
Since the $\tilde P^{(j)}$ commute, we can evaluate:
\begin{align}
\tilde P_x := G_{\tilde P_x}^\dagger G_{\tilde P_x} = \prod_{j = 1}^{n+1} \Bigg\{ \begin{array}{c} G_{\tilde P^{(j)}}^\dagger G_{\tilde P^{(j)}}  \text{ if } x_j = 1 \\ \bar G_{\tilde P^{(j)}}^\dagger \bar G_{\tilde  P^{(j)}}  \text{ if } x_j = 0 \end{array} =  \prod_{j = 1}^{n+1} \Bigg\{ \begin{array}{c} (\tilde P^{(j)})^2 \text{ if } x_j = 1 \\ I -  (\tilde P^{(j)})^2  \text{ if } x_j = 0 \end{array}.  
\end{align}

It remains to show that $\left\|\tilde P_x P^{(M)}  - P_x \right\| \leq \delta_\text{est}$.  The $j$-th bit has $2^j$ many flips. We take $\delta_j \coloneqq  \delta_\text{est}/2 \cdot 2^{-j} $ such that $\left\| \tilde P^{(j)} P^{(M)}  - P^{(j)} \right\| \leq \delta_j$, with $P^{(j)}$ being the ideal projector onto the $j$'th bit. That way the total error in spectral norm is at most $\delta_\text{est} \cdot \sum_{j=1}^{n+1} 2^{-j} \leq \delta_\text{est}$. Each projector has complexity $O\left(\kappa^{-1} \log(2^{j} \delta_j^{-1})\right)$. The total complexity is:
\begin{align}
 \kappa^{-1}  \sum_{j=1}^{n+1} \log(2^{j} \delta_j^{-1})   = \kappa^{-1}  \sum_{j=1}^{n+1} \log(4^{j} \cdot 2\delta_\text{est}^{-1} ) = O\left( \kappa^{-1}  n^2 \log(\delta_\text{est}^{-1}) \right).
\end{align}
\end{proof}

Next, we construct the operator that underlies the left-right POVM. To follow this argument, we refer again to \cref{fig:roundingpromises} which gives a sketch of $\bar L, \bar R$ as well as the function approximated by \cref{lemma:constructionofapproximateprojectionpolynomials}.

\begin{proof}[Proof of \cref{lemma:leftrightprojectionoperator}] The operator $P_\mathrm{LR}$ is constructed using \cref{lemma:constructionofapproximateprojectionpolynomials,lemma:constructionofprojectorsviasingularvaluetransformation}: all we need to do is select the intervals $[a_x,b_x]$ and the labels $c_x$. Recall that the goal was to ensure that $\|P_\mathrm{LR}^2\Pi_i\| \leq \delta_\text{sup}$ for $\lambda_i \not\in \bar L$ and $\|(I-P_\mathrm{LR}^2)\Pi_i\| \leq \delta_\text{sup}$ for $\lambda_i \not\in \bar R$. Recall \cref{fig:roundingpromises}. The intervals of interest are the \emph{gaps} of $\bar L$ and $\bar R$, for which we set $c_x = 0$ and $c_x = 1$ respectively. If we invoke \cref{lemma:constructionofapproximateprojectionpolynomials} with precision $\delta = \delta_\text{sup}/6$, then  for $\lambda_i \not\in \bar L$ we have  $\|P_\mathrm{LR}^2\Pi_i\| \leq \delta^2 \leq \delta_\text{sup}/3 $ and for $\lambda_i \not\in \bar R$ we have  $\|(I-P_\mathrm{LR}^2)\Pi_i\| \leq 1- (1-\delta)^2 \leq \delta_\text{sup}/3 $ as desired.

From the construction of $\bar L$ and $\bar R$ in \cref{def:thefinegrainedroundingpromises}, we see that each of the intervals are exactly $2^{-n-r-2}$ apart, and there are $2^{n+r+1}$ of them. Since the intervals alternate with $c_x = 0$ and $c_x = 1$, there are $O(2^{n+r})$ many switches. So, the implementation requires $(n+r)2^{n+r}\log(\delta^{-1}_\text{sup})$ many invocations of the block encoding of $H$.
\end{proof}

Finally, we construct the attenuation operator $A^{(M)}$ and its approximation $\tilde A^{(M)}$. We demand that $A^{(M)}$ vanishes outside of the rounding promise $M$, and is close to 1 in inside of a truncated rounding promise $M^T$. For all the other eigenspaces, the approximation $\tilde A^{(M)}$ and $A^{(M)}$ agree. This means that we can use the construction of $\tilde A^{(M)}$ to define what $A^{(M)}$ should be outside these regions.  

\begin{proof}[Proof of \cref{lemma:attenuationoperators}] As usual, we use \cref{lemma:constructionofapproximateprojectionpolynomials,lemma:constructionofprojectorsviasingularvaluetransformation} to construct the block encoding of $\tilde A^{(M)}$. This construction results in a function $\tilde a^{(M)}$ which lets us define $a^{(M)}$ and hence $A^{(M)}$ via the requirements in \cref{lemma:attenuationoperators}.

We need to ensure that $\tilde a^{(M)}(\lambda) \leq \delta_\text{leak}$ for $\lambda \not\in M$, and that $\tilde a^{(M)}(\lambda) \geq 1-\delta_\text{leak}$ for $\lambda \in M^T$. This immediately shows what intervals to use for \cref{lemma:attenuationoperators}: the intervals with $c_x = 0$ are the gaps of $M$, and the intervals with $c_x = 1$ are the intervals of $M^T$.

There are $2$ `flips' per interval of $M$, so there are $O(s^{(M)})$ flips total. Since $M^T$ was obtained by taking $M$ and shrinking the intervals by $\kappa\gamma$ on each side, the gap width for the purposes of \cref{lemma:attenuationoperators} is $\kappa\gamma$. Accordingly, the polynomial degree and hence the query complexity are $O\left(\kappa^{-1} \gamma^{-1} \log( s^{(M)} \delta^{-1} ) \right)$.

\end{proof}

\section{The Approximate Lindbladian\label{app:approximateDaviesgenerator}}

We describe here a method that does not try to force any rounding promises to make energy estimation unambiguous.  Rather this method works directly with approximate jump operators that arise from imperfect energy estimation of the Hamiltonian $H$ on the entire Hilbert space $\mathcal{H}$.

We consider a general energy estimation unitary $V^{(\pm)}$ that acts as
\begin{align}\label{eq:gen_ee}
  V^{(\pm)} &= \sum_i \Pi_i \otimes \sum_x f(\lambda_i,\tilde{\lambda}_x) \ketbra*{z}{z\pm \tilde{\lambda}_x},
\end{align}
where $\lambda_i$'s are the energies of $H$ and $\Pi_i$'s the projectors onto the corresponding eigensubspaces, and $\tilde{\lambda}_x$'s denote the possible outcomes produced by the energy estimation method. The summation over $x$ means that any energy estimation method necessarily produces superpositions of energy estimates. Typically, the magnitude of $f(\lambda_i,\tilde{\lambda}_x)$ decreases with increasing distance between the true energy $\lambda_i$ and the approximate energies $\tilde{\lambda}_x$.

For instance, when energy estimation is based on phase estimation the unitary $U=\exp(i H)$, the estimates are simply $n$-bit binary fractions $\tilde{\lambda}_x=x/2^n$ and the amplitudes $f(\lambda_i,\tilde{\lambda}_x)$ are given by
\begin{align}
    f(\lambda_i,\tilde{\lambda}_x)
    &=
    \frac{1}{2^n} \; 
    \frac{\exp\big(\pi i \; 2^n(\lambda_i - \tilde{\lambda}_x)\big)}{\exp\big(\pi i \; (\lambda_i - \tilde{\lambda}_x)\big)}.
\end{align}
The spread of $f(\lambda_i,\tilde{\lambda}_x)$ can be made narrower with the help of median amplification.

We now construct an oracle $\mathcal{O}_{\tilde{\mathcal{L}}}$ encoding jump operators of an approximate Lindbladian $\tilde{\mathcal{L}}$ using a general estimation unitary $V^{(\pm)}$ as in \cref{eq:gen_ee}.  We consider the case that there is only one coupling operator $S$. $\mathcal{O}_{\tilde{\mathcal{L}}}$ is given by the circuit:
\begin{align}
\mathcal{O}_{\tilde{\mathcal{L}}}  
\coloneqq \hspace{2mm}   
\begin{array}{c}\Qcircuit @R=2mm @C=2mm{
    & \multigate{1}{V^{(-)}} & \qw      &  \multigate{1}{V^{(+)}} & \gate{G} & \qw \\
    & \ghost{V^{(-)}}        & \gate{S} &  \ghost{V^{(+)}}  & \qw & \qw
}
\end{array}
\end{align}
where $G$ denotes a block encoding of:
\begin{align}
    G \coloneqq \sum_\nu \sqrt{G(\nu)} \otimes \ketbra{\nu}{\nu}.
\end{align}


To describe the jump operators that are encoded by the above circuit, we define the approximate Bohr frequencies $\nu$ to be differences of the form $\tilde{\lambda}_x-\tilde{\lambda}_y$.  The approximate jump operators $\tilde{L}(\nu)$ are given by 
\begin{align}
    \tilde{L}(\nu) &= \sqrt{G(\nu)} \, \tilde{S}(\nu),
\end{align}
where 
\begin{align}
    \tilde{S}(\nu) 
    &=
    \sum_{\substack{x,y\\ \tilde{\lambda}_x-\tilde{\lambda}_y = \nu}} 
    \sum_{i,j} 
    f(\tilde{\lambda}_x,\lambda_i)
    \overline{f(\tilde{\lambda}_x,\lambda_j}) 
    \Pi_i S \Pi_j \\
    &=
    \sum_{\substack{x,y\\ \tilde{\lambda}_x-\tilde{\lambda}_y = \nu}} A(x) S A(y)^\dagger.
\end{align}
Unfortunately, the operators 
\begin{align}
    A(x) &= \sum_{i} f(\tilde{\lambda}_x,\lambda_i) \Pi_i
\end{align}
that ``sandwich'' the coupling operator $S$ are not projectors as in the case for Davies generators.  Thus, it is not possible to interpret the operators $\tilde{L}(\nu)$ as jump operators of a Davies generator with respect to some Hamiltonian that is close to the original Hamiltonian.  Therefore, it is much more difficult to determine the fixed point of the corresponding Lindbladian $\tilde{\mathcal{L}}$, which is given by
\begin{align}
    \tilde{\mathcal{L}}(\rho)
    &=
    \sum_{\nu}
    G(\nu) \left[
        \tilde{S}(\nu) \rho \tilde{S}(\nu)^\dagger +
        \frac{1}{2} \left(
            \tilde{S}(\nu)^\dagger \tilde{S}(\nu) \rho +
            \rho \tilde{S}(\nu)^\dagger \tilde{S}(\nu)
        \right)
    \right].\label{eqn:apxDavies}
\end{align}

This `approximate Davies generator' $\tilde{\mathcal{L}}$ is challenging to analyze. It is intuitive that this Davies generator's steady state must be somewhat close to the ideal thermal state: the error due to the finite precision of energy estimation can be dealt with in the same way as for the promised Davies generators, and the `rounding errors' stemming from energies located at $x/2^n + 1/2^{n+1}$ also only shift the accuracy of the estimate slightly. 

To our knowledge, no method exists for rigorously proving a bound on the distance between this Lindblad operator's steady state and the true thermal state. A potential candidate for a proof technique via `approximate detailed balance' appeared in \cite{TVOVP11,2112.07646}, where it was leveraged to prove the accuracy of the quantum Metropolis algorithm. But it is not clear how to translate this technique to Davies generators.

How severe are these rounding errors? In practice, it may be the case that the steady state of approximate Davies generator is close to the ideal thermal state. Here, we present some evidence that it may be difficult to prove such a claim without an assumption on the Hamiltonian. We construct an `adversarial' Hamiltonian that places its eigenvalues exactly at the locations where the rounding errors are maximized. Specifically, its energies are located at:

\begin{align}
    \lambda_i = \frac{ i + \alpha/2  }{2^n} \label{eqn:adversarial}
\end{align}
where $0 \leq i < 2^n$ is an integer and $\alpha \in [0,1]$ is an `adversariality parameter'. When $\alpha = 0$, then energy estimation is perfect and the operators $A(x)$ are projectors exactly. But as $\alpha$ increases, the eigenvalues are shifted so that they are rounded up or down with increasing entropy. This Hamiltonian is specifically designed to capture these rounding errors alone: the energies are spaced out evenly with exactly the precision of the energy estimation protocol.

\cref{fig:adversarial} shows the error in preparing the ideal thermal state when using the approximate Davies generator with this adversarial Hamiltonian. We see that at $\alpha = 0$ the method computes the thermal state exactly. But as $\alpha$ increases we see significant errors. If we amplify the accuracy of the energy measurement using median amplification, then the errors appear only for larger $\alpha$. But no amount of amplification can remove the error for $\alpha = 1$.

Since the approximate Davies generator is significantly simpler and may be less costly to implement than our scheme of random promised Davies generators, a proof technique establishing rigorous accuracy bounds on this Davies generator may result in a substantially improved algorithm for thermal state preparation.

\begin{figure}
    \centering
    \includegraphics[width=\textwidth]{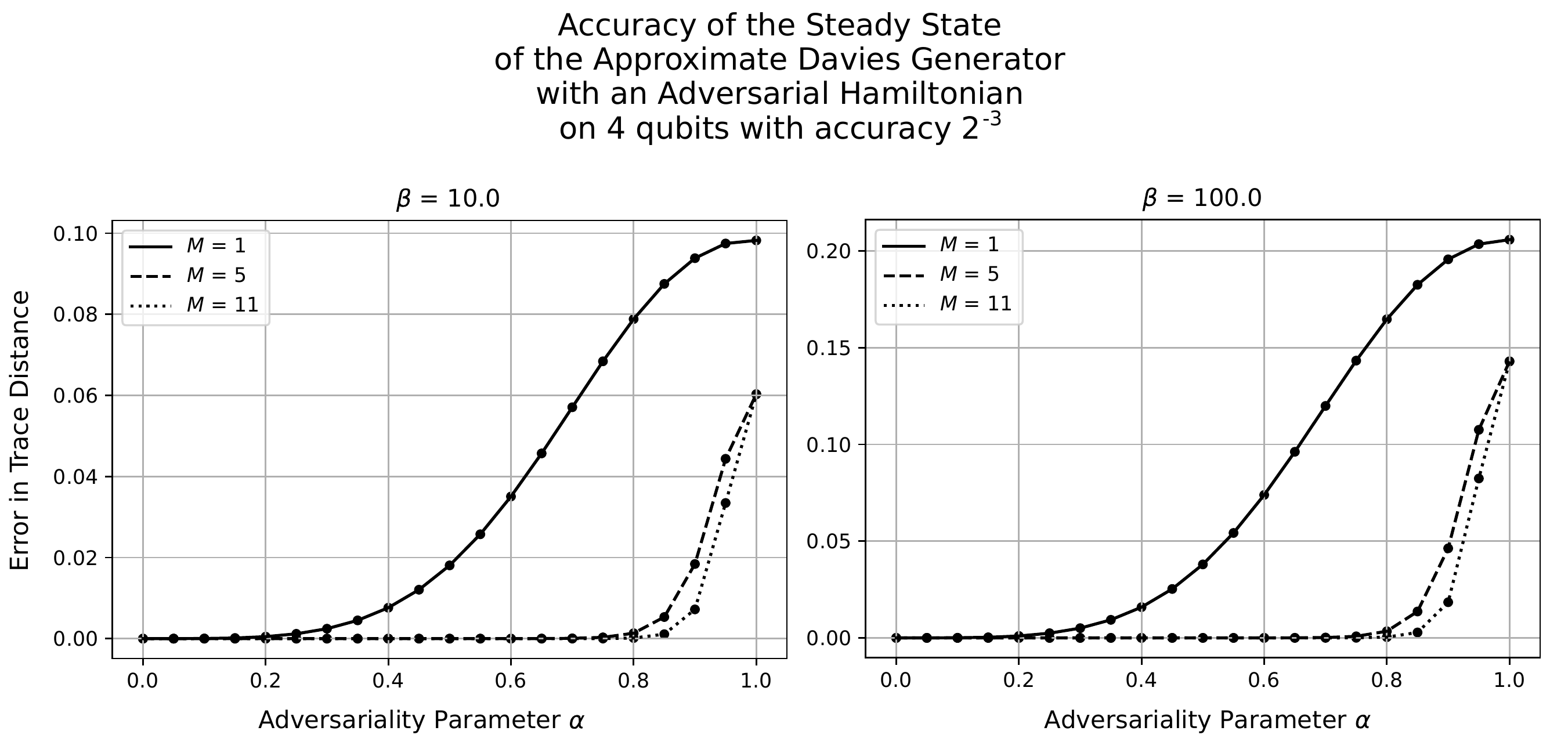}
    \caption{\label{fig:adversarial} Accuracy of thermal state preparation using the approximate Davies generator in \cref{eqn:apxDavies} with an adversarial Hamiltonian with eigenvalues as in \cref{eqn:adversarial}. We perform energy estimation to 3 bits of precision on a 4 qubit system. Energy estimation is performed with median amplification, where $M$ denotes the number of estimates over which the median is performed. We observe significant errors for large $\alpha$. When $M$ is increased, the large errors appear only for the largest $\alpha$. The software that produced these data is available at: \url{https://github.com/qiskit-community/promised-davies-generator}.}
\end{figure}

\bibliographystyle{alpha}
\bibliography{main}

\end{document}